\pdfoutput=1
\documentclass[11pt]{amsart}

\usepackage{amsmath}
  \usepackage{paralist}
  \usepackage{graphics} 
  \usepackage{epsfig} 
\usepackage{graphicx}  \usepackage{epstopdf}
 \usepackage[colorlinks=true]{hyperref}

\hypersetup{urlcolor=blue, citecolor=red}

\setlength{\textheight}{23cm} \setlength{\textwidth}{15.5cm}
\setlength{\oddsidemargin}{.2cm} \setlength{\evensidemargin}{.2cm}
\setlength{\topmargin}{0cm}

\newtheorem{theorem}{Theorem}[section]

\newtheorem{lemma}[theorem]{Lemma}

\theoremstyle{definition}
\newtheorem{definition}[theorem]{Definition}

\title[Three Weight Ternary Linear Codes] 
      {Three Weight Ternary Linear Codes from non-weakly regular bent functions}

\author[Rum\.{ı} Mel\.{ı}h Pelen]{}

\subjclass{Primary: 94A05, 94A11, 94B05; Secondary: 94A24, 94A62.}

 \keywords{Linear codes, non-weakly regular bent, dual-bent, weight distribution.}

 \email{rumimelih.pelen@erzurum.edu.tr}

\newcommand{\C}{{\mathbb C}}

\newcommand{\F}{{\mathbb F}}

\newcommand{\Q}{{\mathbb Q}}

\newcommand{\Z}{{\mathbb Z}}

\newcommand{\sC}{{\mathcal C}}

\newcommand{\be}{\begin{eqnarray}}
\newcommand{\ee}{\end{eqnarray}}

\newcommand{\Tr}{{\rm Tr}}

\newcommand{\Dim}{{\rm Dim}}

\newcommand{\supp}{{\rm supp}}

\newcommand{\boxtensor}{{\Box\kern-9.03pt\raise1.42pt\hbox{$\times$}}}

\newcommand{\Image}{{\rm Im}}
\newcommand{\Kernel}{{\rm Ker \ }}

\newcommand\norm[1]{\left\lVert#1\right\rVert}
\begin{document}
	\newtheorem{teo}{Theorem}[section]
	\newtheorem{cas}{Case}[section]
	\newtheorem{observation}{Observation}
	\newtheorem{cor}{Corallary}[section]
	\newtheorem{lem}{Lemma}[section]
	\newtheorem{prp}{Proposition}[section]
	\newtheorem{rem}{Remark}[section]
	\newtheorem{cla}{Claim}[section]
	\newtheorem{df}{Definition}[section]
	\newtheorem{con}{Condition}
	\newtheorem*{case}{Case}
           \newtheorem{obsrv}{Observation}
	\newtheorem{fact}{Fact}
	\newtheorem{example}{Example}

	\newenvironment{pf}{\noindent\textbf{Proof.}\quad}{\hfill{$\Box$}}
	
\maketitle

\centerline{\scshape Rumi Melih Pelen}
\medskip
{\footnotesize

 \centerline{ Erzurum Technical University}
   \centerline{Yakutiye, Erzurum, Turkey}

\bigskip

\begin{abstract}
 In this paper, several classes of three-weight ternary linear codes from non-weakly regular dual-bent functions are constructed based on a generic construction method. Instead of the whole space, we use the subspaces $B_{\pm}(f)$ associated with a ternary non-weakly regular dual-bent function $f$. Unusually, we use the pre-image sets of the dual function $f^*$ in $B_{\pm}(f)$ as the defining sets of the corresponding codes. Since, the size of the defining sets of the constructed codes are flexible, it enables us to construct several codes with different parameters for a fixed dimension. We represent the weight distribution of the constructed codes. We also give several examples.
\end{abstract}

\section{Introduction} \label{intro}

Linear codes with a few weights have also applications in secret sharing \cite{anderson1998build, carlet2005linear, ding2015class, yuan2006secret}, authentication codes \cite{ding}, association schemes \cite{calder2}, and strongly regular graphs \cite{calder1}. Some interesting two-weight and three-weight codes can be found in \cite{ding2015class, mesnager2017linear, tang, zhou2016linear}. There are several  methods to construct linear codes, one of which is based on pre-image sets of functions over finite fields. Two generic constructions, which are called the first and second generic constructions, of linear codes from functions have been distinguished from the others in the literature. In \cite{ding2015class}, Ding and Ding constructed two or three weight linear codes from the \textit{planar} map $F(x)=x^2$ over finite fields of odd characteristic based on the second generic construction. Shortly after, Zhou et al. \cite{zhou2016linear} generalized this result to quadratic bent functions. After then, Tang et al. \cite{tang} has generalized these results by constructing several classes of two or three-weight linear codes from weakly regular bent functions over finite fields based on the second generic construction. Within this framework, we aim to construct ternary linear codes from non-weakly regular dual-bent functions based on the second generic construction. To do this, instead of the whole space we use the subsets $B_{\pm}(f)$ (see, \cite{op,op1}) associated with a ternary non-weakly regular dual-bent function $f$. We also note that the construction of linear codes from non-weakly regular bent functions over finite fields was studied by the author in his PhD thesis for the first time in the literature as far as we know \cite{pelen}.

\par The paper is organized as follows.
Section \ref{matback} introduces the main notations and provides the necessary mathematical background that will be needed in subsequent sections.
In Sections \ref{SectionConstruction} and \ref{SectionConstruction-}, we construct the classes of three-weight ternary linear codes on $B_{+}(f)$ and $B_{-}(f)$ from ternary non-weakly regular dual-bent functions based on the second generic construction. We represent the weight distribution of the constructed codes. In addition to these, we give several examples. We conclude in Section \ref{conc}.

\section{Preliminaries}
\label{matback}
Let $p$ be an odd prime and $\mathbb{F}_{p^n}$  the finite field of order $p^n.$ Since it is a vector  space of dimension $n$ over $\mathbb{F}_p$, we also use the notation $\mathbb{F}_{p}^n$ which consists of $n$-tuples of the prime field $\mathbb{F}_p.$  Let $f$ be  a function from $\mathbb{F}_{p}^n$ to $\mathbb{F}_p.$ The Walsh transform of $f$ at $\alpha\in \mathbb{F}_{p}^n$ is defined as a complex-valued function $\hat{f}$ on $\mathbb{F}_{p}^n,$
$$\hat{f}(\alpha)=\sum_{x \in \mathbb{F}_{p}^n} \epsilon_{p}^{f(x)-\alpha \cdot x}$$ where $ \epsilon_{p}=e^{\frac{2\pi i}{p}}$ and $\alpha \cdot x$ denotes the standart dot product in $\mathbb{F}_{p}^n$. The trace of $\alpha\in\mathbb {F}_{p^n}$ over $\mathbb {F}_{p}$ is defined as
$\Tr_{p}(\alpha)=\alpha+\alpha^{p}+\alpha^{p^{2}}+\cdots+\alpha^{p^{n-1}}$.  

Throughout the paper we fix the following notations.
\begin{itemize}
	\item $\Q$: Field of rational numbers,
	\item $\Z$: Ring of integers in $\Q$,
	\item $\Z^+$: Positive integers in $\Z$,
	\item $\C$: Field of complex numbers,
	\item $\norm{z}$: Magnitude of a complex number $z$,
	\item $D^{\star}$: The non-zero elements in a set $D$,
	\item $|D|$: The cardinality of a set $D$,
	\item $-A$: For any subset $A$ of an additive group it is the subset $\{-x: x \in A \},$
	\item $\lfloor\; \rfloor$: Floor function,
	\item $\perp$: Orthogonal complement,
	\item $A^C$: Complement of a set $A$ in $F_p^n$, i.e. $\{x \in \F_p^n: x \notin A \}.$
\end{itemize}
\subsection{Bent Functions}
The function $f$ is called \textit{bent} function if  $\norm{\hat{f}(\alpha)}=p^{n/2}$ for all $\alpha \in \mathbb{F}_{p}^n.$ The
normalized Walsh coefficient of a bent function $f$ at $\alpha$ is defined as $p^{-n/2}\hat{f}(\alpha).$ The normalized Walsh coefficients of a bent function $f$ are characterized in \cite{ksw} as follows.
$$p^{-n/2}\hat{f}(\alpha)= \left\{ \begin{array}{ll}
\pm \epsilon_{p}^{f^*(\alpha)} & \mbox{if $p^n\equiv 1$ mod $ 4$}; \\
\pm i \epsilon_{p}^{f^*(\alpha)} & \mbox{if $p^n\equiv 3$  mod  $4$},
\end{array}
\right.$$ where $f^*$ is a function from $\mathbb{F}_{p}^n $ to $\mathbb{F}_{p}$, which is called the dual of $f$. A bent function $f$ whose dual function $f^*$ is also bent is called \textit{dual-bent} function.

A bent function $f:\mathbb{F}_{p}^n\rightarrow\mathbb{F}_{p}$ is called \textit{regular} if for all $\alpha \in \mathbb{F}_{p}^n$, we have $$p^{-n/2}\hat{f}(\alpha)= \epsilon_{p}^{f^*(\alpha)}$$ and \textit{weakly regular} if for all $\alpha \in \mathbb{F}_{p}^n$,  $$p^{-n/2}\hat{f}(\alpha)= \xi \epsilon_{p}^{f^*(\alpha)},$$ where $\xi \in \{\pm 1, \pm i\}$ is independent from $\alpha$, otherwise it is called \textit{non-weakly regular}. 

\begin{rem}
	Dual-bent functions contain the class of weakly regular bent functions. On the other hand, non-weakly regular bent functions can be divided into two subclass as dual-bent and non dual-bent. 
\end{rem}

For an arbitrary function $f:\mathbb{F}_{p}^{n}\rightarrow\mathbb{F}_{p}$ and $i \in \F_p$, let $ N_i(f)$ denotes the cardinality $|\{x:x \in \mathbb{F}_{p}^{n}| f(x)=i\}|$.

\begin{prp} \label{f*0}
	Let $f:\mathbb{F}_{p}^{n}\rightarrow\mathbb{F}_{p}$  be a bent function such that $f(0)=j$ and $f(x)=f(-x).$ Then $f^{*}(x)=f^{*}(-x)$ and $f^{*}(0)=j.$
\end{prp}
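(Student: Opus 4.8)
The plan is to prove the two assertions separately, both by exploiting the characterization of the normalized Walsh coefficients quoted above together with elementary arithmetic in the ring $\Z[\epsilon_{p}]$. Throughout I would write $\hat{f}(\alpha)=p^{n/2}\zeta_{\alpha}\,\epsilon_{p}^{f^{*}(\alpha)}$, where $\zeta_{\alpha}\in\{\pm1\}$ if $p^{n}\equiv1\ (\mathrm{mod}\ 4)$ and $\zeta_{\alpha}\in\{\pm i\}$ if $p^{n}\equiv3\ (\mathrm{mod}\ 4)$.

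First I would establish the evenness of $f^{*}$. Substituting $x\mapsto-x$ in the defining sum and using $f(-x)=f(x)$ gives, for every $\alpha$, $\hat{f}(-\alpha)=\sum_{x}\epsilon_{p}^{f(x)+\alpha\cdot x}=\sum_{x}\epsilon_{p}^{f(-x)-\alpha\cdot x}=\sum_{x}\epsilon_{p}^{f(x)-\alpha\cdot x}=\hat{f}(\alpha)$. Feeding this into the expression above yields $\zeta_{-\alpha}\,\epsilon_{p}^{f^{*}(-\alpha)}=\zeta_{\alpha}\,\epsilon_{p}^{f^{*}(\alpha)}$, so that $\epsilon_{p}^{f^{*}(\alpha)-f^{*}(-\alpha)}=\zeta_{-\alpha}/\zeta_{\alpha}$. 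The right-hand side is a quotient of two elements lying in the same set $\{\pm1\}$ or $\{\pm i\}$, hence equals $\pm1$, while the left-hand side is a $p$-th root of unity; since $p$ is odd, $-1$ is not a $p$-th root of unity, so both sides equal $1$. This forces $f^{*}(\alpha)=f^{*}(-\alpha)$ (and, as a by-product, $\zeta_{-\alpha}=\zeta_{\alpha}$).

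For the value $f^{*}(0)=j$ I would argue by a parity count modulo $2$ in $\Z[\epsilon_{p}]$. Because $p$ is odd, the only fixed point of $x\mapsto-x$ is $0$, and $f(x)=f(-x)$ means the nonzero points split into pairs on which $f$ is constant; consequently $N_{j}(f)$ is odd (the point $x=0$ contributes $f(0)=j$) while $N_{i}(f)$ is even for $i\neq j$. Hence $\hat{f}(0)=\sum_{i}N_{i}(f)\,\epsilon_{p}^{i}\equiv\epsilon_{p}^{j}\pmod{2\Z[\epsilon_{p}]}$. On the other hand $\hat{f}(0)=u\,\epsilon_{p}^{f^{*}(0)}$ with $u:=p^{n/2}\zeta_{0}$, and I claim $u\equiv1\pmod{2\Z[\epsilon_{p}]}$. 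Comparing the two expressions then gives $\epsilon_{p}^{f^{*}(0)}\equiv\epsilon_{p}^{j}\pmod{2}$, and a short lemma on the integral basis $\{1,\epsilon_{p},\dots,\epsilon_{p}^{p-2}\}$ shows that $\epsilon_{p}^{a}\equiv\epsilon_{p}^{b}\pmod{2}$ can hold only for $a\equiv b\pmod p$ (for $1\le k\le p-1$ the element $\epsilon_{p}^{k}-1$ has an odd coordinate in this basis). This yields $f^{*}(0)=j$.

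The remaining, and most delicate, point is the claim $u=p^{n/2}\zeta_{0}\equiv1\pmod{2}$. When $n$ is even, $p^{n/2}$ is an odd integer, $p^{n}\equiv1\ (\mathrm{mod}\ 4)$ forces $\zeta_{0}=\pm1$, and the claim is immediate. When $n$ is odd I would invoke the quadratic Gauss sum $g=\sum_{a=0}^{p-1}\epsilon_{p}^{a^{2}}$, which satisfies $g=\sqrt{p}$ for $p\equiv1$ and $g=i\sqrt{p}$ for $p\equiv3\pmod 4$; in either case $\zeta_{0}\sqrt{p}=\pm g$, so $u=\pm p^{(n-1)/2}g\in\Z[\epsilon_{p}]$. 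Since each nonzero square is attained twice and $0$ once, $g=1+2\sum_{r}\epsilon_{p}^{r}\equiv1\pmod 2$ (the sum over nonzero squares $r$), and $p^{(n-1)/2}$ is odd, giving $u\equiv1\pmod{2}$. Thus the Gauss-sum reduction for odd $n$ is the main obstacle; once it is in place, the congruence argument closes both claims uniformly.
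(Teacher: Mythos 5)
Your proof is correct, and while your argument for $f^{*}(x)=f^{*}(-x)$ coincides with the paper's (compare $\hat{f}(-\alpha)=\hat{f}(\alpha)$ via the substitution $x\mapsto -x$, then note the sign ratio is $\pm 1$ and cannot equal a nontrivial $p$-th root of unity), your treatment of $f^{*}(0)=j$ takes a genuinely different route. The paper simply cites Propositions 3.1 and 3.2 of the reference on duals of non-weakly regular bent functions, which assert that $N_{i_0}(f)$ is odd for $i_0=f^{*}(0)$, and then combines this with the pairing observation that $f(x)=f(-x)$ forces $N_i(f)$ to be even for every $i\neq j$ and odd for $i=j$. You use that same pairing observation, but instead of citing the parity of $N_{i_0}(f)$ you prove it from scratch: reducing $\hat{f}(0)=\sum_i N_i(f)\epsilon_p^i \equiv \epsilon_p^{j} \pmod{2\Z[\epsilon_p]}$, showing that the unit $u=p^{n/2}\zeta_0$ satisfies $u\equiv 1 \pmod{2}$ (trivially for $n$ even, and via $u=\pm p^{(n-1)/2}g$ with the quadratic Gauss sum $g=1+2\sum_{r}\epsilon_p^{r}\equiv 1 \pmod 2$ for $n$ odd), and finishing with the fact that distinct $p$-th roots of unity remain incongruent modulo $2$ in the integral basis of $\Z[\epsilon_p]$. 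In effect you re-derive the parity content of the cited propositions. What the paper's route buys is brevity; what yours buys is a self-contained proof that does not send the reader to an external source, at the cost of the Gauss-sum reduction and the mod-$2$ separation lemma, both of which you state correctly and which do hold (the separation lemma because $x^p-1$ is separable in characteristic $2$, so the $p$-th roots of unity stay distinct modulo any prime above $2$).
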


\begin{proof}
	For all $\alpha \in \mathbb{F}_{p}^{n}$, we have $$\begin{array}{lll}
	\hat{f}(-\alpha)&=& \xi_{-\alpha}p^{\frac{n}{2}} \epsilon_{p}^{f^*(-\alpha)}=\sum_{x \in \mathbb{F}_{p}^n} \epsilon_{p}^{f(x)+\alpha \cdot x}=\sum_{x \in \mathbb{F}_{p}^n} \epsilon_{p}^{f(-x)-\alpha \cdot (-x)}= \hat{f}(\alpha)
	\\
	&=& \xi_{\alpha}p^{\frac{n}{2}} \epsilon_{p}^{f^*(\alpha)}.
	\end{array}$$ Hence, we prove $f^{*}(x)=f^{*}(-x).$ Put $f^{*}(0)=i_0.$  If $n$ is odd (resp. even), by \cite[Propositions $3.1$(resp. $3.2$)]{op}, we have $N_{i_0}(f)$ is an odd integer. Since $f(x)=f(-x)$, it is possible if and only if $i_0=j.$
\end{proof}

 For any bent function $f:\mathbb{F}_{p}^{n}\rightarrow\mathbb{F}_{p}$, let $B_{+}(f)$ and $B_{-}(f)$ be the partitions of $\F_{p}^{n}$  given by $$ B_{+}(f):=\{w: w \in \mathbb{F}_{p}^{n} \mid  \hat{f}(w)=\xi p^{\frac{n}{2}}\epsilon_{p}^{f^*(w)}\},$$ $$ B_{-}(f):=\{w: w \in \mathbb{F}_{p}^{n} \mid  \hat{f}(w)=-\xi p^{\frac{n}{2}}\epsilon_{p}^{f^*(w)}\},$$ where $\xi =1$ if  $p^n\equiv 1$  (mod $4$) and  $\xi=i$ if  $p^n\equiv 3$ (mod $4$).

 The following is given in \cite{op}. Any bent function $f:\mathbb{F}_{p}^{n}\rightarrow\mathbb{F}_{p}$\; is of two types. 

$$ \text{Type}\;(+)\;\; \text{if}\;\;\hat{f}(0)=\epsilon p^{\frac{n}{2}}\epsilon_{p}^{f^*(0)}, \; \epsilon \in \{1,i\}, $$ 

$$ \text{Type}\;(-) \;\; \text{if}\;\; \hat{f}(0)=\epsilon p^{\frac{n}{2}}\epsilon_{p}^{f^*(0)},  \; \epsilon \in \{-1,-i\}. $$ 

\begin{rem}\label{rem0}
	It is known that weakly regular bent functions appear in pairs and given a weakly regular bent function $f:\mathbb{F}_{p}^{n}\rightarrow\mathbb{F}_{p}$ we have (see \cite{hel}) \be \widehat{f^*}(\alpha)=\xi^{-1}p^{\frac{n}{2}}\epsilon_p^{f(-\alpha)}, \label{wrdual} \ee where $\widehat{f}(\alpha)=\xi p^{\frac{n}{2}}\epsilon_p^{f^*(\alpha)}$. It is easy to see that for $p^n\equiv 1$  (mod $4$) the types of $f$ and $f^*$ are same, for $p^n\equiv 3$ (mod $4$) they are of different types. Moreover, for any weakly regular bent function $f:\mathbb{F}_{p}^{n}\rightarrow\mathbb{F}_{p}$, if $f$ is of type $(+)$(resp. type $(-)$) then by definition we have $\hat{f}(0)=\xi p^{\frac{n}{2}}\epsilon_{p}^{f^*(0)},\text{where}\;\xi \in \{1,i\}(\text{resp.} \{-1,-i\}) $ which implies $0 \in B_{+}(f)(\text{resp.}\; B_{-}(f))$. Since $f$ is a weakly regular bent function then we know that $\xi$ is independent from $\alpha \in \mathbb{F}_{p}^n$, where $p^{-n/2}\hat{f}(\alpha)= \xi \epsilon_{p}^{f^*(\alpha)}$. Hence, we have $\alpha \in B_{+}(f)(\text{resp.}\; B_{-}(f))$  for  all $\alpha \in \mathbb{F}_{p}^n$. Therefore, $B_{\pm}(f)=\mathbb{F}_{p}^{n}$ and $B_{\mp}(f)=\emptyset$ respectively. On the other hand if $f$ is non-weakly regular bent then we have $B_{\pm}(f)\ne \emptyset$.
\end{rem}

\begin{rem} \label{type}
	Observe that for any bent function $f:\mathbb{F}_{p}^{n}\rightarrow\mathbb{F}_{p},\;\alpha \in B_{\pm}(f)$ if and only if $ f(x)+\alpha\cdot x$ is of type\;$(\pm)$ respectively.
\end{rem}

In the remaining part of the paper we state $f$ is always a non-weakly regular bent function. 

For $f:\mathbb{F}_{p}^{n}\rightarrow\mathbb{F}_{p}$, by the inverse Walsh transform we have \be p^{n}\epsilon_{p}^{f(y)}= \sum_{\alpha \in \mathbb{F}_{p}^{n}} \epsilon_{p}^{\alpha \cdot y}\hat{f}(\alpha). \label{invwalsh}\ee

As $f$ is non-weakly regular bent, for $\alpha \in \mathbb{F}_{p}^{n},\; \text{we have}\;\;\hat{f}(\alpha)= \xi_{\alpha}p^{\frac{n}{2}} \epsilon_{p}^{f^*(\alpha)},$ where $\xi_{\alpha} \in \{\pm 1\}$ if  $p^n\equiv 1$  (mod $4$) and  $\xi_{\alpha} \in \{\pm i\}$ if  $p^n\equiv 3$ (mod $4$). Then by Equation \eqref{invwalsh} we have
	
	$$\begin{array}{lll}
	
	p^{\frac{n}{2}}\epsilon_{p}^{f(y)}&=&\sum_{\alpha \in \mathbb{F}_{p}^{n}} \xi_{\alpha} \epsilon_{p}^{f^*(\alpha)+\alpha \cdot y}
	\\
	&=& \xi \big{(} \sum_{\alpha \in B_{+}(f)} \epsilon_{p}^{f^*(\alpha)+\alpha \cdot y}- \sum_{\alpha \in B_{-}(f)} \epsilon_{p}^{f^*(\alpha)+\alpha \cdot y} \big{)},
	\end{array} $$
	
where $\xi = 1$ if  $p^n\equiv 1$  (mod $4$) and  $\xi = i$ if  $p^n\equiv 3$ (mod $4$).
Hence, we arrive at
\be \xi^{-1}p^{\frac{n}{2}}\epsilon_{p}^{f(y)}=\sum_{\alpha \in B_{+}(f)} \epsilon_{p}^{f^*(\alpha)+\alpha \cdot y}- \sum_{\alpha \in B_{-}(f)} \epsilon_{p}^{f^*(\alpha)+\alpha \cdot y}. \label{invwlsh}\ee

For any $y \in \mathbb{F}_p^{n}$ the complex numbers $S_{0}(f,y),\; S_{1}(f,y)$ are  defined in \cite{op} as follows.

$$  S_{0}(f,y)=\sum_{\alpha \in B_{+}(f)} \epsilon_{p}^{f^*(\alpha)+\alpha \cdot y}, \;\;\;\;\;\; S_{1}(f,y)=\sum_{\alpha \in B_{-}(f)} \epsilon_{p}^{f^*(\alpha)+\alpha \cdot y}. $$

Hence, by Equation \eqref{invwlsh}, we have
\be \xi^{-1}p^{\frac{n}{2}}\epsilon_{p}^{f(y)}= S_{0}(f,y)-S_{1}(f,y). \label{inverse}\ee
Hence, we arrive at 
\be S_{0}(f,y)-S_{1}(f,y)= \left\{ \begin{array}{ll} 
p^{\frac{n}{2}}\epsilon_{p}^{f(y)} & \mbox{if $n$ even or $n$ odd and $p\equiv 1$ mod $ 4$};\\
-ip^{\frac{n}{2}}\epsilon_{p}^{f(y)}& \mbox{if $n$ odd and $p\equiv 3$  mod  $4$}.
\end{array}
\right. \label{inversee}\ee
\begin{definition}\label{preimages}
	Let $f:\mathbb{F}_{p}^{n}\rightarrow\mathbb{F}_{p}$ be a non-weakly regular bent function, for any $i\in\mathbb{F}_{p}$ we further define the sets $C_i(f),\;D_i(f)$ as follows:  $$ C_i(f):=\{\alpha: \alpha \in  B_{+}(f)\mid f^*(\alpha)=i\},\; D_i(f):=\{\alpha: \alpha \in  B_{-}(f)\mid f^*(\alpha)=i\}.$$   

\end{definition}
From Definition \ref{preimages} we have \be \label{preim} B_{+}(f)=\cup_{i=0}^{p-1}C_i(f),\;\;B_{-}(f)=\cup_{i=0}^{p-1}D_i(f) \ee
\begin{prp}\label{dual}
	Let $f:\mathbb{F}_{p}^{n}\rightarrow\mathbb{F}_{p}$  be a non-weakly regular dual-bent function. If $n$ even or $n$ odd and $p\equiv 1$ mod $4$,  then we have $$ S_{0}(f,y)=\left\{ \begin{array}{ll} p^{\frac{n}{2}}\epsilon_{p}^{f(y)}  & \mbox{if  $f^*(x)+x \cdot y$ is of type\;$(+)$}; \\
	0 & \mbox{if $f^*(x)+x \cdot y$ is of type\;$(-)$};
	\end{array}
	\right.$$
	$$
	S_{1}(f,y)=\left\{ \begin{array}{ll} -p^{\frac{n}{2}}\epsilon_{p}^{f(y)}  & \mbox{if  $f^*(x)+x \cdot y$ is of type\;$(-)$}; \\
	0 & \mbox{if $f^*(x)+x \cdot y$ is of type\;$(+)$};
	\end{array}
	\right.
	$$
	
	If $n$ odd and $p\equiv 3$  mod  $4$, then we have
	$$ S_{0}(f,y)=\left\{ \begin{array}{ll} -ip^{\frac{n-1}{2}}\sqrt{p}\epsilon_{p}^{f(y)}  & \mbox{if  $f^*(x)+x \cdot y$ is of type\;$(-)$}; \\
	0 & \mbox{if  $f^*(x)+x \cdot y$ is of type\;$(+)$};
	\end{array}
	\right.$$
	$$
	S_{1}(f,y)=\left\{ \begin{array}{ll} ip^{\frac{n-1}{2}}\sqrt{p}\epsilon_{p}^{f(y)}  & \mbox{if  $f^*(x)+x \cdot y$ is of type\;$(+)$}; \\
	0 & \mbox{if  $f^*(x)+x \cdot y$ is of type\;$(-)$}.
	\end{array}
	\right.
	$$ 
\end{prp}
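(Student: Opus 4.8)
The plan is to recover $S_{0}(f,y)$ and $S_{1}(f,y)$ separately by pairing the difference $S_{0}(f,y)-S_{1}(f,y)$, which is already recorded in \eqref{inversee}, with the sum $S_{0}(f,y)+S_{1}(f,y)$, and then solving the elementary $2\times 2$ system. Writing $A:=S_{0}(f,y)-S_{1}(f,y)$ (a known multiple of $\epsilon_{p}^{f(y)}$ by \eqref{inversee}) and $B:=S_{0}(f,y)+S_{1}(f,y)$, one has $S_{0}(f,y)=\tfrac12(A+B)$ and $S_{1}(f,y)=\tfrac12(B-A)$, so everything reduces to evaluating $B$ and then substituting.

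To evaluate $B$, note that since $B_{+}(f)$ and $B_{-}(f)$ partition $\F_{p}^{n}$,
$$ B=\sum_{\alpha\in\F_{p}^{n}}\epsilon_{p}^{f^*(\alpha)+\alpha\cdot y}=\widehat{f^*}(-y).$$
Here the dual-bent hypothesis is essential: because $f^*$ is again bent, $\widehat{f^*}(-y)=\epsilon\,p^{n/2}\epsilon_{p}^{(f^*)^*(-y)}$ with $\epsilon\in\{\pm1\}$ if $p^{n}\equiv1\pmod4$ and $\epsilon\in\{\pm i\}$ if $p^{n}\equiv3\pmod4$. I would then read off the unknown phase from the double-dual relation $(f^*)^*(x)=f(-x)$ valid for dual-bent functions (see \cite{op}), which gives $(f^*)^*(-y)=f(y)$ and hence $B=\epsilon\,p^{n/2}\epsilon_{p}^{f(y)}$.

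It remains to fix the sign $\epsilon$. Interpreting $B=\widehat{f^*}(-y)$ as the value at $0$ of the translate $h(x)=f^*(x)+x\cdot y$, Remark \ref{type} applied to $f^*$ shows that $h$ is of type $(+)$ (resp.\ type $(-)$) precisely when $y\in B_{+}(f^*)$ (resp.\ $y\in B_{-}(f^*)$), i.e.\ precisely under the two hypotheses in the statement; this forces $\epsilon=+1$ versus $-1$ when $p^{n}\equiv1\pmod4$ and $\epsilon=+i$ versus $-i$ when $p^{n}\equiv3\pmod4$. Substituting the value of $A$ from \eqref{inversee} and this $\epsilon$ into $S_{0}=\tfrac12(A+B)$ and $S_{1}=\tfrac12(B-A)$, a direct computation makes exactly one of $S_{0},S_{1}$ vanish and the other equal the displayed value, after writing $p^{n/2}=p^{(n-1)/2}\sqrt p$ in the odd case. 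The main obstacle is precisely the bookkeeping in this last step: correctly matching the sign $\epsilon$ dictated by the type of $h$ with the $\pm$ (or $\pm i$) prefactor coming from \eqref{inversee} across the two congruence classes of $p^{n}$ modulo $4$, together with justifying the double-dual identity $(f^*)^*(x)=f(-x)$, which is exactly where dual-bentness (rather than mere bentness) of $f$ enters.
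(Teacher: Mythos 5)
Your sum/difference decomposition is sound, and the bookkeeping you were worried about does close. With $A=S_0(f,y)-S_1(f,y)=\xi^{-1}p^{n/2}\epsilon_p^{f(y)}$ from \eqref{inversee} (where $\xi=1$ if $p^n\equiv 1 \pmod 4$ and $\xi=i$ if $p^n\equiv 3 \pmod 4$) and $B=S_0(f,y)+S_1(f,y)=\widehat{f^*}(-y)=\epsilon\,p^{n/2}\epsilon_p^{f(y)}$, where $\epsilon$ lies in $\{\pm 1\}$ or $\{\pm i\}$ according to $p^n \bmod 4$ and its sign class is, by definition, the type of $h(x)=f^*(x)+x\cdot y$, the four combinations of $\xi$ against $\epsilon$ give exactly the displayed values of $S_0=\tfrac12(A+B)$ and $S_1=\tfrac12(B-A)$, with one of the two vanishing in every case. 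Note that this is strictly more than the paper does: the paper's entire proof is the citation ``follows from \cite[Lemmas 3.4 and 3.5]{op}'', so you have in effect reconstructed the argument the paper outsources, and your mechanism (inverse Walsh transform for the difference, bentness of $f^*$ for the sum) is the natural one behind that reference.

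The one point you must repair is the provenance of the double-dual identity $(f^*)^*(x)=f(-x)$. The statement you are proving \emph{is} \cite[Lemmas 3.4 and 3.5]{op}, and this paper attributes the double-dual identity to \cite[Theorem 3.1 and Corollary 3.1]{op}; if in that reference those results are themselves deduced from Lemmas 3.4 and 3.5, your argument is circular. There are two clean fixes: (i) cite the identity from \cite{Cesmelioglu2013}, where it is proved for any bent function with bent dual, independently of this proposition; or (ii) derive it inside your own setup, which costs two lines: $S_0$ and $S_1$ are sums of $p$-th roots of unity, hence lie in $\Z[\epsilon_p]$, so $2$ divides $A+B=p^{n/2}\bigl(\xi^{-1}\epsilon_p^{f(y)}+\epsilon\,\epsilon_p^{c}\bigr)$ in $\Z[\epsilon_p]$, where $c:=(f^*)^*(-y)$; in each congruence case $A+B$ factors as an element of odd norm (an odd power of $p$, times a quadratic Gauss sum when $n$ is odd) times $\epsilon_p^{f(y)}\pm\epsilon_p^{c}$, and $\epsilon_p^{a}\pm\epsilon_p^{b}$ is divisible by $2$ in $\Z[\epsilon_p]$ only when $a=b$ (otherwise its norm is $\pm 1$ or $\pm p$), which forces $c=f(y)$. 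Finally, a cosmetic slip: the type of $h(x)=f^*(x)+x\cdot y$ is governed by $\hat{h}(0)=\widehat{f^*}(-y)$, hence by membership of $-y$, not $y$, in $B_{\pm}(f^*)$. This does not affect your proof---the proposition's hypotheses are stated directly in terms of the type of $h$, and you only use the definition of type to fix $\epsilon$---but the parenthetical identification ``precisely when $y\in B_{+}(f^*)$'' should be corrected or dropped.
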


\begin{proof}
	The proof follows from \cite[Lemmas 3.4 and 3.5]{op}.
\end{proof}

\begin{obsrv} \label{obs}
Let $f:\mathbb{F}_{p}^{n}\rightarrow\mathbb{F}_{p}$  be a bent function such that $f(x)=f(-x)$. Then, we have $\hat{f}(-\alpha)=\sum_{x \in \mathbb{F}_{p}^n} \epsilon_{p}^{f(x)+\alpha \cdot x}= \sum_{x \in \mathbb{F}_{p}^n} \epsilon_{p}^{f(-x)-(\alpha \cdot -x)}=\hat{f}(\alpha)$. So, we deduce that if  $\alpha \in  B_{+}(f)$ (resp. $B_{-}(f)$) then $-\alpha \in  B_{+}(f)$ (resp. $B_{-}(f)$). Hence, we have $ B_{\pm}(f)=-B_{\pm}(f)$.
\end{obsrv}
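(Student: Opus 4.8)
The plan is to evaluate the Walsh transform at $-\alpha$ directly and to exploit the evenness of $f$ through the change of variable $x \mapsto -x$. First I would write, straight from the definition, $\hat{f}(-\alpha)=\sum_{x \in \mathbb{F}_{p}^{n}} \epsilon_{p}^{f(x)-(-\alpha)\cdot x}=\sum_{x \in \mathbb{F}_{p}^{n}} \epsilon_{p}^{f(x)+\alpha \cdot x}$. Since $x \mapsto -x$ is a bijection of $\mathbb{F}_{p}^{n}$, I may reindex the sum by replacing $x$ with $-x$; using $f(-x)=f(x)$ together with $\alpha \cdot (-x)=-\alpha\cdot x$, the exponent becomes $f(x)-\alpha\cdot x$ again, and hence $\hat{f}(-\alpha)=\hat{f}(\alpha)$.

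Next I would translate this scalar identity into the statement about the sets $B_{\pm}(f)$. Recall that $\alpha \in B_{+}(f)$ means $\hat{f}(\alpha)=\xi p^{n/2}\epsilon_{p}^{f^*(\alpha)}$ and $\alpha \in B_{-}(f)$ means $\hat{f}(\alpha)=-\xi p^{n/2}\epsilon_{p}^{f^*(\alpha)}$, with $\xi$ fixed. By Proposition \ref{f*0}, the hypothesis $f(x)=f(-x)$ forces $f^*(-\alpha)=f^*(\alpha)$, so $\epsilon_{p}^{f^*(-\alpha)}=\epsilon_{p}^{f^*(\alpha)}$. Consequently, if $\alpha \in B_{+}(f)$ then $\hat{f}(-\alpha)=\hat{f}(\alpha)=\xi p^{n/2}\epsilon_{p}^{f^*(\alpha)}=\xi p^{n/2}\epsilon_{p}^{f^*(-\alpha)}$, which says exactly that $-\alpha \in B_{+}(f)$; the identical substitution handles $B_{-}(f)$. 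This gives $B_{\pm}(f)=-B_{\pm}(f)$.

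The computation itself is routine; the only point demanding a little care is the passage from the scalar identity $\hat{f}(-\alpha)=\hat{f}(\alpha)$ to membership in the sets, because $B_{\pm}(f)$ is defined through both the sign $\pm$ and the dual value $f^*$. I would close this gap either by invoking Proposition \ref{f*0} as above, or, if one prefers to avoid that proposition, by observing that for odd $p$ the only real $p$-th root of unity is $1$, so $-\epsilon_{p}^{k'}\neq \epsilon_{p}^{k}$ for all $k,k'$. This makes the representation $\hat{f}(\alpha)=\pm\xi p^{n/2}\epsilon_{p}^{f^*(\alpha)}$ unique, so that equality of the Walsh values at $\alpha$ and $-\alpha$ already forces both the sign and the exponent $f^*$ to coincide. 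Either route yields the claim.
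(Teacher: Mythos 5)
Your proof is correct and takes essentially the same route as the paper: the change of variables $x \mapsto -x$ together with $f(x)=f(-x)$ gives $\hat{f}(-\alpha)=\hat{f}(\alpha)$, and from this one reads off that membership in $B_{\pm}(f)$ is preserved under negation. You are in fact slightly more careful than the paper, which treats the passage from equality of Walsh values to equality of type as immediate; your uniqueness-of-representation remark (for odd $p$, $-1$ is not a $p$-th root of unity, so the sign and the exponent $f^{*}(\alpha)$ are determined by $\hat{f}(\alpha)$) cleanly justifies that step without even needing Proposition \ref{f*0}.
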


\begin{prp}\label{dualtype}
Let $f:\mathbb{F}_{p}^{n}\rightarrow\mathbb{F}_{p}$  be a dual-bent function such that $f(x)=f(-x)$. Then, the types of $f$ and $f^*$ are same for $p^n\equiv 1$  mod  $4$, and different for  $p^n\equiv 3$  mod  $4$.
\end{prp}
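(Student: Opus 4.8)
The plan is to reduce everything to a single identity between $\hat{f}(0)$ and $\widehat{f^{*}}(0)$ and then read off the two congruence cases. Since $f(x)=f(-x)$, Proposition \ref{f*0} gives $f^{*}(0)=f(0)=:j$ together with $f^{*}(x)=f^{*}(-x)$, and applying Proposition \ref{f*0} again to $f^{*}$ (which is again dual-bent and satisfies $f^{*}(x)=f^{*}(-x)$) yields $f^{**}(0)=f^{*}(0)=j$. Write $\hat{f}(0)=\xi_{0}\,p^{n/2}\epsilon_{p}^{j}$ and $\widehat{f^{*}}(0)=\xi_{0}'\,p^{n/2}\epsilon_{p}^{j}$, where $\xi_{0},\xi_{0}'\in\{\pm1\}$ when $p^{n}\equiv1\pmod 4$ and $\xi_{0},\xi_{0}'\in\{\pm i\}$ when $p^{n}\equiv3\pmod 4$. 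By Remark \ref{type} the type of $f$ (resp. $f^{*}$) is $(+)$ exactly when $\xi_{0}$ (resp. $\xi_{0}'$) lies in $\{1,i\}$. Thus the whole statement is equivalent to the single relation $\xi_{0}'=\xi_{0}^{-1}$: for $p^{n}\equiv1$ we have $\xi_{0}=\pm1$, so $\xi_{0}^{-1}=\xi_{0}$ and the types coincide, while for $p^{n}\equiv3$ we have $\xi_{0}=\pm i$, so $\xi_{0}^{-1}=-\xi_{0}$ and the types are opposite. Hence it suffices to prove $\xi_{0}'=\xi_{0}^{-1}$.

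The next step is to evaluate $\widehat{f^{*}}(0)$ through the sums $S_{0},S_{1}$. Splitting the defining sum over the partition $\mathbb{F}_{p}^{n}=B_{+}(f)\cup B_{-}(f)$ (see \eqref{preim}) gives $\widehat{f^{*}}(0)=\sum_{\alpha}\epsilon_{p}^{f^{*}(\alpha)}=S_{0}(f,0)+S_{1}(f,0)$, while Equation \eqref{inversee} at $y=0$ gives $S_{0}(f,0)-S_{1}(f,0)=\xi^{-1}p^{n/2}\epsilon_{p}^{j}$, with $\xi=1$ or $\xi=i$ according to the class of $p^{n}$. Because $f$ is dual-bent, Proposition \ref{dual} at $y=0$ forces exactly one of $S_{0}(f,0),S_{1}(f,0)$ to vanish. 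Therefore $\widehat{f^{*}}(0)$ equals the surviving summand, which is $\pm\bigl(S_{0}(f,0)-S_{1}(f,0)\bigr)=\pm\,\xi^{-1}p^{n/2}\epsilon_{p}^{j}$; that is, $\xi_{0}'\in\{\xi^{-1},-\xi^{-1}\}$, with the value $\xi^{-1}$ occurring precisely when $S_{1}(f,0)=0$.

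The crux, and the step I expect to be the main obstacle, is to fix this sign, i.e. to decide which of $S_{0}(f,0),S_{1}(f,0)$ survives and to show that the choice is governed by the type of $f$. The difficulty is that Proposition \ref{dual} reads the vanishing of $S_{1}(f,0)$ (resp. $S_{0}(f,0)$) as the condition $0\in B_{+}(f^{*})$ (resp. $0\in B_{-}(f^{*})$), which is literally the type of $f^{*}$ that we are trying to determine, so Proposition \ref{dual} by itself only produces a consistency check. To break this circularity I would pass to the double dual: using $f^{**}=f$ (which follows from Proposition \ref{dual} together with the hypothesis $f(x)=f(-x)$, and with Observation \ref{obs} giving $B_{\pm}(f)=-B_{\pm}(f)$), the same expansion applied to $f^{*}$ writes $\hat{f}(0)=S_{0}(f^{*},0)+S_{1}(f^{*},0)$ with the surviving summand now controlled by the type of $f$. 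The remaining quantitative input, namely that $f^{*}$ is balanced on exactly one of the halves $B_{\pm}(f)$ and that the selected half is the one matching $\hat{f}(0)$ (equivalently, that the surviving sum for $f$ carries the sign $\xi_{0}^{-1}$), is exactly what the structural lemmas \cite[Lemmas 3.4 and 3.5]{op} underlying Proposition \ref{dual} supply. Granting $\xi_{0}'=\xi_{0}^{-1}$, the two congruence cases follow from the bookkeeping of the first paragraph, which finishes the proof.
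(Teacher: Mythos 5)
Your first two paragraphs are sound: the reduction of the proposition to the single identity $\xi_{0}'=\xi_{0}^{-1}$ is correct, the expansion $\widehat{f^*}(0)=S_{0}(f,0)+S_{1}(f,0)$ combined with \eqref{inversee} is correct, and you have honestly located the crux --- Proposition \ref{dual} indexes which of $S_{0}(f,0),S_{1}(f,0)$ vanishes by the type of $f^*$, which is precisely the unknown. The gap is that your proposed escape from this circularity does not escape it. Applying the same expansion to $f^*$ (via $f^{**}=f$) gives $\hat{f}(0)=S_{0}(f^*,0)+S_{1}(f^*,0)$, and Proposition \ref{dual} applied to $f^*$ tells you which of these two sums survives in terms of the type of $f^{**}=f$ --- which you already know. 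What you learn is the value of $\sum_{\alpha\in B_{\pm}(f^*)}\epsilon_{p}^{f(\alpha)}$; this says nothing about whether $0$ lies in $B_{+}(f^*)$ or in $B_{-}(f^*)$, which is what the type of $f^*$ means. Concretely, the system ``$S_{0}+S_{1}=\widehat{f^*}(0)$, $S_{0}-S_{1}=\xi^{-1}p^{n/2}\epsilon_{p}^{j}$, exactly one of $S_{0},S_{1}$ vanishes'' has exactly two solutions, one for each putative type of $f^*$, and nothing in your argument excludes either. The closing appeal to \cite[Lemmas 3.4 and 3.5]{op} is circular: those lemmas \emph{are} Proposition \ref{dual} (the paper proves Proposition \ref{dual} by citing them), so they cannot additionally supply the sign; if they did, the proposition would be immediate. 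A telling symptom is that your argument uses the hypothesis $f(x)=f(-x)$ only through $f^*(0)=f(0)$ and $f^{**}=f$, whereas this symmetry is the engine of the actual proof. (For weakly regular $f$ one of $B_{\pm}(f)$ is empty and your argument closes trivially; the substance of the proposition, and of the gap, is the non-weakly regular case.)

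What breaks the tie in the paper is a parity count. Suppose $f$ is of type $(+)$. Then $0\in B_{+}(f)$ and $B_{\pm}(f)=-B_{\pm}(f)$ (Observation \ref{obs}), so $|B_{+}(f)|$ is odd and $|B_{-}(f)|$ is even; moreover each fibre $D_{j}(f)=\{\alpha\in B_{-}(f):f^*(\alpha)=j\}$ satisfies $D_{j}(f)=-D_{j}(f)$ (using $f^*(x)=f^*(-x)$ from Proposition \ref{f*0}) and avoids $0$, hence has \emph{even} cardinality. Now assume $f^*$ has the wrong type. Then Proposition \ref{dual} makes $\sum_{\alpha\in B_{-}(f)}\epsilon_{p}^{f^*(\alpha)}$ a specific nonzero value; expanding it in the basis of $p$-th roots of unity (via the Gauss sum \eqref{gauss} when $n$ is odd) and using orthogonality, all the $|D_{j}(f)|$ are expressed through a single integer $k$, equal to $(|B_{-}(f)|+p^{n/2})/p$ when $n$ is even and to $|B_{-}(f)|/p$ when $n$ is odd. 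In each of the three cases ($n$ even; $n$ odd, $p\equiv 1$; $n$ odd, $p\equiv 3$) the parity of $k$ forced by the evenness of $|B_{-}(f)|$ contradicts the parity forced by the evenness of the $|D_{j}(f)|$. This counting input, which uses the symmetry of $f$ and $f^*$ in an essential way, is exactly what your proof lacks, and some substitute for it is unavoidable: as noted above, the character-sum identities you rely on are consistent with both signs.
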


\begin{proof}
Let $f$ is of type\;$(+)$.  Put $f(0)=j_0$. Since $f$ is of type\;$(+)$ then $0 \in B_{+}(f)$. From Observation \ref{obs}, we have also $B_{+}(f)=-B_{+}(f)$. So, we can see that $|B_{+}(f)|$ is an odd integer. Then, the equality $|B_{+}(f)|+|B_{-}(f)|=p^n$ implies that $|B_{-}(f)|$ is an even integer.

\begin{case} $n$ is even

 On the contrary assume that $f^*$ is of type\;$(-)$.  Then, by Proposition \ref{dual} we have $$-p^{\frac{n}{2}}\epsilon_{p}^{f(0)}=\sum_{\alpha \in B_{-}(f)} \epsilon_{p}^{f^*(\alpha)}.$$ Then we have,  $$(|D_{j_0}(f)|+p^{\frac{n}{2}})\epsilon_{p}^{j_0}+\sum_{j\neq j_0}|D_{j}(f)|\epsilon_{p}^{j}=0.$$ By the orthogonality relations of character sums there exists an integer $k$ such that $|D_{j_0}(f)|=k-p^{\frac{n}{2}}$ and $|D_{j}(f)|=k$ for all $j\neq j_0$. Hence, we have $$k=\frac{|B_{-}(f)|+p^{\frac{n}{2}}}{p}.$$ The fact that $|B_{-}(f)|$ is an even integer implies $k$ is an odd integer. On the other hand, from Observation \ref{obs} respectively from Proposition \ref{f*0} we have $B_{-}(f)=-B_{-}(f)$ respectively $f^{*}(x)=f^{*}(-x)$. These imply that $D_{j}(f)=-D_{j}(f)$ for all $j \in \F_p$. Hence, we deduce that $k$ is an even integer which gives a contradiction

\end{case}

\begin{case} $n$ is odd and $p \equiv 1$ mod  $4$.

On the contrary assume that $f^*$ is of type\;$(-)$. Then by Equation \eqref{gauss} and Proposition \ref{dual} we have $$-p^{\frac{n}{2}}\epsilon_{p}^{f(0)}=-p^{\frac{n-1}{2}}\sum_{j\in\mathbb{F}_p^{\star}}\left( \frac{j}{p} \right)\epsilon_p^{j_0+j}=\sum_{\alpha \in B_{-}(f)} \epsilon_{p}^{f^*(\alpha)}.$$ Then we have,  $$|D_{j_0}(f)|\epsilon_{p}^{j_0}+\sum_{j\neq 0}(|D_{j}(f)|+\left( \frac{j}{p} \right)p^{\frac{n-1}{2}})\epsilon_{p}^{j_0+j}=0.$$ By the orthogonality relations of character sums there exists an integer $k$ such that $|D_{j_0}(f)|=k$, $|D_{j+j_0}(f)|=k-\left( \frac{j}{p} \right)p^{\frac{n-1}{2}}$ for all $j\neq 0$. Hence, we have $$k=\frac{|B_{-}(f)|}{p}.$$ The fact that $|B_{-}(f)|$ is an even integer implies $k$ is an even integer. On the other hand, we have $B_{-}(f)=-B_{-}(f)$ and $f^{*}(x)=f^{*}(-x)$. This implies that $D_{j}(f)=-D_{j}(f)$  and $|D_{j}(f)|$ is even for all $j \in \F_p$. For $j\ne j_0$ we have $|D_{j}(f)|=k-\left( \frac{j-j_0}{p} \right)p^{\frac{n-1}{2}}$. Combining this with the observation $|D_{j}(f)|$ is even for all $j \in \F_p$  we deduce that $k$ is an odd integer. Hence, we arrive at a contradiction.

\end{case}
\begin{case} $n$ is odd and $p \equiv 3$ mod  $4$.

On the contrary assume that $f^*$ is of type\;$(+)$. Then by Equation \eqref{gauss} and Proposition \ref{dual} we have $$p^{\frac{n}{2}}\epsilon_{p}^{f(0)}=p^{\frac{n-1}{2}}\sum_{j\in\mathbb{F}_p^{\star}}\left( \frac{j}{p} \right)\epsilon_p^{j_0+j}=\sum_{\alpha \in B_{-}(f)} \epsilon_{p}^{f^*(\alpha)}.$$ Then we have,  $$|D_{j_0}(f)|\epsilon_{p}^{j_0}+\sum_{j\neq 0}(|D_{j}(f)|-\left( \frac{j}{p} \right)p^{\frac{n-1}{2}})\epsilon_{p}^{j_0+j}=0.$$ By the orthogonality relations of character sums there exists an integer $k$ such that $|D_{j_0}(f)|=k$ and $|D_{j+j_0}(f)|=k+\left( \frac{j}{p} \right)p^{\frac{n-1}{2}})$ for all $j\neq 0$. Hence, we have $$k=\frac{|B_{-}(f)|}{p}.$$ The fact that $|B_{-}(f)|$ is an even integer implies $k$ is an even integer. On the other hand, we have $B_{-}(f)=-B_{-}(f)$ and $f^{*}(x)=f^{*}(-x)$. This implies that $D_{j}(f)=-D_{j}(f)$ and $|D_{j}(f)|$ is even for all $j \in \F_p$. By the similar arguments as in the previous case we deduce that $k$ is an odd integer. Hence, we arrive at a contradiction.

\end{case}
Since the case for $f$ is of type\;$(-)$ is similar, we skip it.
\end{proof}
\begin{definition}
Let $V$ be a subspace of  $\mathbb{F}_{p}^{n}$. $V$ is said to be a non-degenerate subspace with respect to the standart dot product in $\mathbb{F}_{p}^n$  if $\{x: x \in V| x \cdot y=0\;\text{for all}\; y \in V\}=\{0\}$.
\end{definition}
Let $W$ be a non-degenerate subspace of  $\mathbb{F}_{p}^{n}$. Then for all $x \in \mathbb{F}_{p}^{n}$ there exist unique  elements $u\in W$, $v\in W^{\perp}$ such that $x=u+v$. We denote this representation by the direct sum $\mathbb{F}_{p}^{n}=W \bigoplus W^{\perp} $.

Let $V$ be a non-degenerate vector space over $\F_p$ with respect to the standart dot product. Let $f:V\rightarrow\mathbb{F}_{p}$ be a $p$-ary function. Then the following equality is called Parseval's identity. $$\sum_{a \in V}|\hat{f}(a)|^2=|V|^2.$$
\begin{lemma}\label{sonn}
Let $f:\mathbb{F}_{p}^{n}\rightarrow\mathbb{F}_{p}$  be an non-weakly regular dual-bent function such that $f(x)=f(-x)$. If $B_{+}(f)$  is a non-degenerate subspace of dimension $r$, then there exist a partition $I_0^+(f),I_0^-(f)$ of $B_{+}(f)$  such that $$ B_{+}(f^*)=\cup_{u \in I_0^+(f)}(u+B_{+}(f)^{\perp}),$$ $$B_{-}(f^*)=\cup_{u \in I_0^-(f)}(u+B_{+}(f)^{\perp}).$$  If $B_{-}(f)$  is a non-degenerate subspace of  dimension $r$, then there exist a partition $I_1^+(f),I_1^-(f)$ of $B_{-}(f)$  such that $$ B_{+}(f^*)=\cup_{u \in I_1^+(f)}(u+B_{-}(f)^{\perp}),$$ $$B_{-}(f^*)=\cup_{u \in I_1^-(f)}(u+B_{-}(f)^{\perp}).$$  Moreover, for $p^n\equiv 1$  mod  $4$ if  $u \in I_0^+(f)$ then we have $$f\left|_{(u+B_{+}(f))^{\perp}}(x)=f(u),\right.$$ if $u \in I_1^-(f)$, then we have $$f\left|_{(u+B_{-}(f))^{\perp}}(x)=f(u).\right.$$ For $p^n\equiv 3$  mod  $4$, if  $u \in I_0^-(f)$ then we have $$f\left|_{(u+B_{+}(f))^{\perp}}(x)=f(u),\right.$$ if  $u \in I_1^+(f)$, then we have $$f\left|_{(u+B_{-}(f))^{\perp}}(x)=f(u).\right.$$
\end{lemma}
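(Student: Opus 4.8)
The plan is to push everything through the two partial character sums $S_0(f,y)$ and $S_1(f,y)$, using the fact that for a dual-bent $f$ with $f(x)=f(-x)$ the dual of the dual is $x\mapsto f(-x)=f(x)$, so that $(f^*)^*=f$ and the partition $B_\pm(f^*)$ is governed by $\widehat{f^*}(w)=\pm\,\xi\, p^{\frac n2}\epsilon_p^{f(w)}$, with the $+$ sign characterizing $w\in B_+(f^*)$. First I would record the identity $\widehat{f^*}(w)=S_0(f,-w)+S_1(f,-w)$, obtained by splitting $\sum_{\alpha\in\F_p^n}\epsilon_p^{f^*(\alpha)-w\cdot\alpha}$ over the disjoint union $\F_p^n=B_+(f)\sqcup B_-(f)$. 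Combined with Equation \eqref{inverse} read at $y=-w$ (using $f(-w)=f(w)$), which gives $S_0(f,-w)-S_1(f,-w)=\xi^{-1}p^{\frac n2}\epsilon_p^{f(w)}$, this is a $2\times 2$ linear system. Solving it and substituting $\widehat{f^*}(w)=\pm\,\xi\,p^{\frac n2}\epsilon_p^{f(w)}$ yields, separately for $p^n\equiv 1$ and $p^n\equiv 3$ $(\mathrm{mod}\ 4)$, a clean dichotomy: exactly one of $S_0(f,-w),S_1(f,-w)$ is zero, and which one is nonzero decides whether $w\in B_+(f^*)$ or $w\in B_-(f^*)$, the nonzero value always being a fixed fourth root of unity times $p^{\frac n2}\epsilon_p^{f(w)}$. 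This is essentially Proposition \ref{dual} read in reverse, but phrased as a statement about $\widehat{f^*}$.

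The structural heart is the observation that $S_0(f,y)=\sum_{\alpha\in B_+(f)}\epsilon_p^{f^*(\alpha)+\alpha\cdot y}$ depends only on the projection of $y$ onto $B_+(f)$. Indeed, when $B_+(f)$ is a non-degenerate subspace $V$ with $\F_p^n=V\oplus V^\perp$, every $\alpha\in V$ satisfies $\alpha\cdot y=\alpha\cdot y_V$ (where $y=y_V+y_\perp$), so $S_0(f,y)=S_0(f,y_V)$; symmetrically $S_1(f,y)$ depends only on the projection onto $B_-(f)$ whenever the latter is a subspace. Feeding this into the dichotomy above shows that membership of $w$ in $B_+(f^*)$ versus $B_-(f^*)$ is determined by $w_V$ alone, hence each coset $u+V^\perp$ with $u\in V=B_+(f)$ lies entirely inside one of $B_\pm(f^*)$. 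Defining $I_0^+(f)$ (resp. $I_0^-(f)$) as the set of $u\in B_+(f)$ whose coset falls into $B_+(f^*)$ (resp. $B_-(f^*)$) then gives a partition of $B_+(f)$ together with the asserted decompositions; the case where $B_-(f)$ is a subspace is identical with $S_1$ in place of $S_0$, producing $I_1^\pm(f)$. The assignment of an index set to ``$S_0\neq 0$'' is precisely where the distinction between $p^n\equiv 1$ and $p^n\equiv 3$ enters, since the sign pattern in the solved system flips between the two cases.

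For the final ``moreover'' assertion I would fix $u$ in the relevant index set and write a general element of its coset as $w=u+v$ with $v\in V^\perp$. Along this coset the pertinent sum $S_0(f,-w)$ (or $S_1(f,-w)$) is nonzero and, by the dichotomy, equal to a fixed constant times $p^{\frac n2}\epsilon_p^{f(w)}$; but that same sum equals $S_0(f,-u)$ and is therefore independent of $v$. Consequently $\epsilon_p^{f(w)}$ is constant along the coset, i.e. $f$ is constant there, and evaluating at $v=0$ identifies the constant as $f(u)$. Throughout, Proposition \ref{f*0} ($f^*(x)=f^*(-x)$, $f^*(0)=f(0)$) and Observation \ref{obs} ($B_\pm(f)=-B_\pm(f)$) let me move between $w$ and $-w$ without sign trouble, while the non-degeneracy of $V$ guarantees $\F_p^n=V\oplus V^\perp$ and $|V|\,|V^\perp|=p^n$, so that the cosets genuinely tile the space and the index sets genuinely partition $B_\pm(f)$.

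I expect the main difficulty to be bookkeeping rather than conceptual: correctly carrying the factor $\xi^{\pm 1}$ and the overall sign through both congruence classes, so that the unique nonzero sum is matched to the correct member of $\{B_+(f^*),B_-(f^*)\}$ and hence to the correct index set $I_0^\pm,I_1^\pm$. The single genuinely structural step—that $S_0$ and $S_1$ factor through the projections onto $B_+(f)$ and $B_-(f)$—is what upgrades the pointwise description of $B_\pm(f^*)$ to the coset decomposition, after which the partition and the restriction identities follow immediately from $\F_p^n=V\oplus V^\perp$.
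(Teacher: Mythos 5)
Your proposal is correct and takes essentially the same approach as the paper: your dichotomy is just Proposition \ref{dual} (as you acknowledge), and your key structural step --- that $S_0(f,y)$ and $S_1(f,y)$ are unchanged when $y$ is translated by $B_{+}(f)^{\perp}$ resp.\ $B_{-}(f)^{\perp}$ --- is exactly the mechanism the paper uses in each of its four cases to show that every coset $u+B_{\pm}(f)^{\perp}$ lies entirely inside $B_{+}(f^*)$ or $B_{-}(f^*)$, with $I_0^{\pm}(f)=B_{+}(f)\cap B_{\pm}(f^*)$, $I_1^{\pm}(f)=B_{-}(f)\cap B_{\pm}(f^*)$, and that $f$ is constant on the cosets attached to the nonzero sum. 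The only differences are presentational: you treat the four cases uniformly and recover the dichotomy from the $2\times 2$ system via $f^{**}=f$ (legitimate here, since the paper itself invokes this from \cite{op}), whereas the paper writes out the cases separately and phrases the coset claim as a proof by contradiction.
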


\begin{proof}

\begin{case}\label{wd++} $B_{+}(f)$  is a non-degenerate subspace of dimension $r$ and $p^n\equiv 1$  mod  $4$. 

Let $\alpha \in B_{+}(f^*)$.  There exist $u_0\in B_{+}(f)$, $v_0\in B_{+}(f)^{\perp}$ such that $\alpha=u_0+v_0$. By Proposition \ref{dual} we have $$ p^{\frac{n}{2}}\epsilon_{p}^{f(u_0+v_0)}=\sum_{x \in B_{+}(f)} \epsilon_{p}^{f^*(x)+(u_0+v_0) \cdot x}=\sum_{x \in B_{+}(f)} \epsilon_{p}^{f^*(x)+(u_0) \cdot x}.$$ Assume there exists $v_1 \neq v_0 \in B_{+}(f)^{\perp}$ such that $u_0+v_1 \in B_{-}(f^*).$ Then, by Proposition \ref{dual} we have $$\sum_{x \in B_{+}(f)} \epsilon_{p}^{f^*(x)+(u_0+v_1) \cdot x}=\sum_{x \in B_{+}(f)} \epsilon_{p}^{f^*(x)+(u_0) \cdot x}=0$$ which is a contradiction. Therefore, for all $v\in B_{+}(f)^{\perp}$, we have $u_0+v\in  B_{+}(f^*).$ This implies that $ p^{\frac{n}{2}}\epsilon_{p}^{f(u_0+v)}=\sum_{x \in B_{+}(f)} \epsilon_{p}^{f^*(x)+(u_0) \cdot x}$ for all $v\in B_{+}(f)^{\perp}$. In particular, $u_0\in  B_{+}(f^*).$ By Proposition \ref{dual}, we have $ p^{\frac{n}{2}}\epsilon_{p}^{f(u_0)}=\sum_{x \in B_{+}(f)} \epsilon_{p}^{f^*(x)+(u_0) \cdot x}$. Hence, for all $v\in B_{+}(f)^{\perp}$, we have $f(u_0+v)=f(u_0).$  By similar arguments, if $u_0+v_0\in  B_{-}(f^*)$, then for all $v\in B_{+}(f)^{\perp}$ we have $u_0+v\in  B_{-}(f^*).$ Let us define the subsets $I_0^+(f):= B_{+}(f)\cap B_{+}(f^*)$ and $I_0^-(f):= B_{+}(f)\cap B_{-}(f^*)$. Then, for all $u_0 \in I_0^+(f)$, the restriction of $f$ onto the coset $u_0+B_{+}(f)^{\perp}$ is the constant $f(u_0)$. Hence, the assertion of the lemma follows. 

\end{case}

\begin{case}\label{wd+-} $B_{+}(f)$  is a non-degenerate subspace of dimension $r$ and $p^n\equiv 3$  mod  $4$. 

Let $\alpha \in B_{-}(f^*).$  There exist $u_0\in B_{+}(f)$, $v_0\in B_{+}(f)^{\perp}$ such that $\alpha=u_0+v_0$. By Proposition \ref{dual} we have $$ -ip^{\frac{n}{2}}\epsilon_{p}^{f(u_0+v_0)}=\sum_{x \in B_{+}(f)} \epsilon_{p}^{f^*(x)+(u_0+v_0) \cdot x}=\sum_{x \in B_{+}(f)} \epsilon_{p}^{f^*(x)+(u_0) \cdot x}.$$ Assume there exists $v_1 \neq v_0 \in B_{+}(f)^{\perp}$ such that $u_0+v_1 \in B_{+}(f^*).$ Then, by Proposition \ref{dual} we have $$\sum_{x \in B_{+}(f)} \epsilon_{p}^{f^*(x)+(u_0+v_1) \cdot x}=\sum_{x \in B_{+}(f)} \epsilon_{p}^{f^*(x)+(u_0) \cdot x}=0$$ which is a contradiction. Therefore, for all $v\in B_{+}(f)^{\perp}$, we have $u_0+v\in  B_{-}(f^*).$ This implies that $ p^{\frac{n}{2}}\epsilon_{p}^{f(u_0+v)}=\sum_{x \in B_{+}(f)} \epsilon_{p}^{f^*(x)+(u_0) \cdot x}$ for all $v\in B_{+}(f)^{\perp}$. In particular, $u_0\in  B_{-}(f^*).$ By Proposition \ref{dual}, we have $ -ip^{\frac{n}{2}}\epsilon_{p}^{f(u_0)}=\sum_{x \in B_{+}(f)} \epsilon_{p}^{f^*(x)+(u_0) \cdot x}$. Hence, for all $v\in B_{+}(f)^{\perp}$, we have $f(u_0+v)=f(u_0).$ Therefore, for all $u_0 \in I_0^-(f)$ the restriction of $f$ onto the coset $u_0+B_{+}(f)^{\perp}$ is the constant $f(u_0)$. By similar arguments,  if $u_0+v_0\in  B_{+}(f^*)$, then for all $v\in B_{+}(f)^{\perp}$, we have $u_0+v\in  B_{+}(f^*).$ Hence, the assertion of the lemma follows. 

\end{case}

\begin{case}\label{wd-+} $B_{-}(f)$  is a non-degenerate subspace of dimension $r$ and $p^n\equiv 1$  mod  $4$. 

Let $\alpha \in B_{-}(f^*).$  There exist $u_0\in B_{-}(f)$, $v_0\in B_{-}(f)^{\perp}$ such that $\alpha=u_0+v_0$. By Proposition \ref{dual} we have $$ -p^{\frac{n}{2}}\epsilon_{p}^{f(u_0+v_0)}=\sum_{x \in B_{-}(f)} \epsilon_{p}^{f^*(x)+(u_0+v_0) \cdot x}=\sum_{x \in B_{-}(f)} \epsilon_{p}^{f^*(x)+(u_0) \cdot x}.$$ Assume there exists $v_1 \neq v_0 \in B_{-}(f)^{\perp}$ such that $u_0+v_1 \in B_{+}(f^*).$ Then, by Proposition \ref{dual} we have $$\sum_{x \in B_{-}(f)} \epsilon_{p}^{f^*(x)+(u_0+v_1) \cdot x}=\sum_{x \in B_{-}(f)} \epsilon_{p}^{f^*(x)+(u_0) \cdot x}=0$$ which is a contradiction. Therefore, for all $v\in B_{-}(f)^{\perp}$  we have $u_0+v\in  B_{-}(f^*).$ This implies that $ -p^{\frac{n}{2}}\epsilon_{p}^{f(u_0+v)}=\sum_{x \in B_{-}(f)} \epsilon_{p}^{f^*(x)+(u_0) \cdot x}$ for all $v\in B_{-}(f)^{\perp}$. In particular, $u_0\in  B_{-}(f^*).$ By Proposition \ref{dual}, we have $- p^{\frac{n}{2}}\epsilon_{p}^{f(u_0)}=\sum_{x \in B_{-}(f)} \epsilon_{p}^{f^*(x)+(u_0) \cdot x}$. Hence, for all $v\in B_{-}(f)^{\perp}$, we have $f(u_0+v)=f(u_0).$ By similar arguments, if $u_0+v_0\in  B_{+}(f^*)$, then for all $v\in B_{-}(f)^{\perp}$ we have $u_0+v\in  B_{+}(f^*).$ Let us define the subsets $I_1^+(f):= B_{-}(f)\cap B_{+}(f^*)$ and $I_1^-(f):= B_{-}(f)\cap B_{-}(f^*)$. Then, for all $u_0 \in I_1^-(f)$, the restriction of $f$ onto the coset $u_0+B_{-}(f)^{\perp}$ is the constant $f(u_0)$. Hence, the assertion of the lemma follows. 

\end{case}

\begin{case}\label{wd--} $B_{-}(f)$  is a non-degenerate subspace of dimension $r$ and $p^n\equiv 3$  mod  $4$. 

Let $\alpha \in B_{+}(f^*).$  There exist $u_0\in B_{-}(f)$, $v_0\in B_{-}(f)^{\perp}$ such that $\alpha=u_0+v_0$. By Proposition \ref{dual} we have $$ ip^{\frac{n}{2}}\epsilon_{p}^{f(u_0+v_0)}=\sum_{x \in B_{-}(f)} \epsilon_{p}^{f^*(x)+(u_0+v_0) \cdot x}=\sum_{x \in B_{-}(f)} \epsilon_{p}^{f^*(x)+(u_0) \cdot x}.$$ Assume there exists $v_1 \neq v_0 \in B_{-}(f)^{\perp}$ such that $u_0+v_1 \in B_{-}(f^*).$ Then, by Proposition \ref{dual} we have $$\sum_{x \in B_{-}(f)} \epsilon_{p}^{f^*(x)+(u_0+v_1) \cdot x}=\sum_{x \in B_{-}(f)} \epsilon_{p}^{f^*(x)+(u_0) \cdot x}=0$$ which is a contradiction. Therefore, for all $v\in B_{-}(f)^{\perp}$  we have $u_0+v\in  B_{+}(f^*).$ This implies that $i p^{\frac{n}{2}}\epsilon_{p}^{f(u_0+v)}=\sum_{x \in B_{-}(f)} \epsilon_{p}^{f^*(x)+(u_0) \cdot x}$ for all $v\in B_{-}(f)^{\perp}$. In particular, $u_0\in  B_{+}(f^*).$ By Proposition \ref{dual}, we have $ ip^{\frac{n}{2}}\epsilon_{p}^{f(u_0)}=\sum_{x \in B_{-}(f)} \epsilon_{p}^{f^*(x)+(u_0) \cdot x}$. Hence, for all $v\in B_{-}(f)^{\perp}$, we have $f(u_0+v)=f(u_0).$ Therefore, for all $u_0 \in I_1^+(f)$, the restriction of $f$ onto the coset $u_0+B_{-}(f)^{\perp}$ is the constant $f(u_0)$. By similar arguments, if $u_0+v_0\in  B_{-}(f^*)$, then for all $v\in B_{-}(f)^{\perp}$ we have $u_0+v\in  B_{-}(f^*).$ Then, the assertion of the lemma follows. 

\end{case}

\end{proof}

\begin{prp}\label{dualsize}
Let $f:\mathbb{F}_{p}^{n}\rightarrow\mathbb{F}_{p}$  be an non-weakly regular dual-bent function such that $f(x)=f(-x)$. 

\begin{itemize}

\item If $B_{+}(f)$  is a non-degenerate subspace of dimension $r$ over $\F_p$ and $p^n\equiv 1$  mod  $4$, or $B_{-}(f)$  is a non-degenerate subspace of dimension $r$ over $\F_p$ and $p^n\equiv 3$  mod  $4$. Then, $| B_{+}(f^*) |=p^r$,

\item If $B_{+}(f)$  is a non-degenerate subspace of dimension $r$ over $\F_p$ and $p^n\equiv 3$  mod  $4$, or $B_{-}(f)$  is a non-degenerate subspace of dimension $r$ over $\F_p$ and $p^n\equiv 1$  mod  $4$. Then, $| B_{-}(f^*) |=p^r$.

\end{itemize}
\end{prp}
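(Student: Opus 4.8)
The plan is to evaluate, in each of the four sub-cases, the quantity $\sum_{y\in\mathbb{F}_p^n}\norm{S_0(f,y)}^2$ (or the analogue for $S_1$) in two different ways: once by brute expansion and the orthogonality relations of additive characters, and once pointwise via Proposition \ref{dual}. Comparing the two evaluations will immediately read off the cardinality of $B_+(f^*)$ or $B_-(f^*)$. I will write out the representative sub-case in which $B_+(f)$ is a non-degenerate subspace of dimension $r$ and $p^n\equiv 1\pmod 4$, targeting $|B_+(f^*)|=p^r$; the other three are identical up to bookkeeping.

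For the first evaluation I would expand, using the definition of $S_0$,
\[
\sum_{y\in\mathbb{F}_p^n}\norm{S_0(f,y)}^2=\sum_{\alpha,\beta\in B_+(f)}\epsilon_p^{f^*(\alpha)-f^*(\beta)}\sum_{y\in\mathbb{F}_p^n}\epsilon_p^{(\alpha-\beta)\cdot y}.
\]
By orthogonality the inner sum equals $p^n$ when $\alpha=\beta$ and $0$ otherwise, so the whole expression collapses to $|B_+(f)|\,p^n=p^{r}\cdot p^{n}=p^{n+r}$, where $|B_+(f)|=p^r$ since $B_+(f)$ is a subspace of dimension $r$ over $\mathbb{F}_p$. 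For the second evaluation I would read $\norm{S_0(f,y)}^2$ off Proposition \ref{dual}. By Remark \ref{type} applied to the bent function $f^*$, the function $f^*(x)+x\cdot y$ is of type $(+)$ precisely when $y\in B_+(f^*)$ and of type $(-)$ precisely when $y\in B_-(f^*)$. Hence Proposition \ref{dual} yields $S_0(f,y)=p^{n/2}\epsilon_p^{f(y)}$, so $\norm{S_0(f,y)}^2=p^n$, for every $y\in B_+(f^*)$, while $S_0(f,y)=0$ for every $y\in B_-(f^*)$. Therefore $\sum_{y}\norm{S_0(f,y)}^2=|B_+(f^*)|\,p^n$. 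Equating the two evaluations gives $|B_+(f^*)|\,p^n=p^{n+r}$, that is $|B_+(f^*)|=p^r$.

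The remaining three sub-cases follow from the same device. When $B_-(f)$ is the non-degenerate subspace one works with $S_1(f,y)=\sum_{\alpha\in B_-(f)}\epsilon_p^{f^*(\alpha)+\alpha\cdot y}$ instead of $S_0$; the first evaluation then yields $|B_-(f)|\,p^n=p^{n+r}$. To hit $|B_+(f^*)|$ (first bullet) I pick, in the relevant parity class, the sum among $S_0,S_1$ whose nonzero values occur — by Proposition \ref{dual} — on the type-$(+)$ set $B_+(f^*)$; to hit $|B_-(f^*)|$ (second bullet) I pick the one supported on the type-$(-)$ set $B_-(f^*)$. In every instance the surviving values have magnitude $p^{n/2}$ (note $p^{(n-1)/2}\sqrt p=p^{n/2}$ in the $p^n\equiv 3\pmod 4$ rows), so the second evaluation reads $|B_\pm(f^*)|\,p^n$, and comparison with $p^{n+r}$ gives the asserted cardinality. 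I do not anticipate a genuine obstacle: the only point requiring care is correctly matching the type condition in Proposition \ref{dual} to membership in $B_\pm(f^*)$ through Remark \ref{type}, and tracking which of $S_0,S_1$ is supported on which block across the four parity/subspace combinations.
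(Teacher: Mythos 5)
Your proof is correct, but it takes a genuinely different route from the paper's. The paper restricts $f^*$ to the $r$-dimensional non-degenerate subspace $B_{\pm}(f)$ and applies Parseval's identity \emph{over that subspace}: by Proposition \ref{dual}, the restricted transform $\hat g(\alpha)$ has magnitude $p^{n/2}$ exactly on the intersection $B_{\pm}(f)\cap B_{\pm}(f^*)$ (the sets $I_0^{\pm}(f)$, $I_1^{\pm}(f)$ of Lemma \ref{sonn}) and vanishes on the complementary part, so Parseval gives that this intersection has size $p^{2r-n}$; the paper then invokes Lemma \ref{sonn}, which writes $B_{\pm}(f^*)$ as a disjoint union of cosets $u+B_{\pm}(f)^{\perp}$ indexed by that intersection, whence $|B_{\pm}(f^*)|=p^{2r-n}\cdot p^{n-r}=p^r$. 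You instead sum $\norm{S_0(f,y)}^2$ (or $\norm{S_1(f,y)}^2$) over all of $\mathbb{F}_p^n$, comparing full-space character orthogonality with the pointwise values from Proposition \ref{dual}, and identify the support with $B_{\pm}(f^*)$ via Remark \ref{type}. Your route is shorter and strictly more elementary: it never touches Lemma \ref{sonn}, and it uses neither non-degeneracy nor the subspace structure nor $f(x)=f(-x)$ --- only the cardinality $|B_{\pm}(f)|=p^r$ --- so it in fact proves the cleaner general statement that $|B_{+}(f^*)|=|B_{+}(f)|$ and $|B_{-}(f^*)|=|B_{-}(f)|$ when $p^n\equiv 1 \bmod 4$, with the two sets swapped when $p^n\equiv 3 \bmod 4$, for an arbitrary non-weakly regular dual-bent $f$. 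One pedantic caveat: the type of $f^*(x)+x\cdot y$ is governed by $\widehat{f^*}(-y)$, so Remark \ref{type} literally identifies the support with $-B_{\pm}(f^*)$; under the standing hypothesis $f(x)=f(-x)$ this equals $B_{\pm}(f^*)$ by Proposition \ref{f*0} and Observation \ref{obs}, so your identification stands, but without that hypothesis this is where a sign would have to be tracked. What the paper's longer route buys is the intermediate structural data --- the count $|I_0^{\pm}(f)|$ (resp. $|I_1^{\pm}(f)|$) $=p^{2r-n}$ and the coset decomposition of $B_{\pm}(f^*)$ --- which is reused in the weight-distribution computations of Theorems \ref{even}, \ref{odd} and \ref{even-}; your argument yields only the cardinality.
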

\begin{proof}

\begin{itemize}

\item  $B_{+}(f)$  be a non-degenerate subspace of dimension $r$ over $\F_p$ and $p^n\equiv 1$  mod  $4$\\

Let $g$ be the restriction of $f^*$ into the subset $B_{+}(f)$. By Parseval's identity over $B_{+}(f)$  we have $$ \sum_{\alpha \in B_{+}(f)} |\hat{g}(\alpha)|^2=p^{2r},$$ where $\hat{g}(\alpha)= \sum_{x \in B_{+}(f)}\epsilon_{p}^{f^*(x)+\alpha \cdot x}$ for $\alpha \in B_{+}(f)$. Moreover, from Proposition \ref{dual} we have $$ \hat{g}(\alpha)=\left\{ \begin{array}{ll} p^{\frac{n}{2}}\epsilon_{p}^{f(\alpha)}  & \mbox{if $\alpha \in  I_0^+(f)$}; \\
	0 & \mbox{if $\alpha \in  I_0^-(f)$}.
	\end{array}
	\right.$$

From the arguments above we have $| I_0^+(f)|=p^{2r-n}$. Hence by Lemma \ref{sonn} we get $| B_{+}(f^*) |=p^r$.

\item $B_{-}(f)$  be a non-degenerate subspace of dimension $r$ over $\F_p$ and $p^n\equiv 3$  mod  $4$\\
Let $g$ be the restriction of $f^*$ into the subset $B_{-}(f)$. By Parseval's identity over $B_{-}(f)$  we have $$ \sum_{\alpha \in B_{-}(f)} |\hat{g}(\alpha)|^2=p^{2r},$$ where $\hat{g}(\alpha)= \sum_{x \in B_{-}(f)}\epsilon_{p}^{f^*(x)+\alpha \cdot x}$ for $\alpha \in B_{-}(f)$. Moreover, from Proposition \ref{dual} we have $$ \hat{g}(\alpha)=\left\{ \begin{array}{ll}i p^{\frac{n}{2}}\epsilon_{p}^{f(\alpha)}  & \mbox{if $\alpha \in  I_1^+(f)$}; \\
	0 & \mbox{if $\alpha \in  I_1^-(f)$}.
	\end{array}
	\right.$$
From the arguments above we have $| I_1^+(f)|=p^{2r-n}$. Hence, by Lemma \ref{sonn} we get $| B_{+}(f^*) |=p^r$.

\item If $B_{+}(f)$  be a non-degenerate subspace of dimension $r$ over $\F_p$ and $p^n\equiv 3$  mod  $4$\\
Let $g$ be the restriction of $f^*$ into the subset $B_{+}(f)$. By Parseval's identity over $B_{+}(f)$  we have $$ \sum_{\alpha \in B_{+}(f)} |\hat{g}(\alpha)|^2=p^{2r},$$ where $\hat{g}(\alpha)= \sum_{x \in B_{+}(f)}\epsilon_{p}^{f^*(x)+\alpha \cdot x}$ for $\alpha \in B_{+}(f)$. Moreover, from Proposition \ref{dual} we have $$ \hat{g}(\alpha)=\left\{ \begin{array}{ll} -ip^{\frac{n}{2}}\epsilon_{p}^{f(\alpha)}  & \mbox{if $\alpha \in  I_0^-(f)$}; \\
	0 & \mbox{if $\alpha \in  I_0^+(f)$}.
	\end{array}
	\right.$$

From the arguments above we have $| I_0^-(f)|=p^{2r-n}$. Hence by Lemma \ref{sonn} we get $| B_{-}(f^*) |=p^r$.

\item $B_{-}(f)$  be a non-degenerate subspace of dimension $r$ over $\F_p$ and $p^n\equiv 1$  mod  $4$

Let $g$ be the restriction of $f^*$ into the subset $B_{-}(f)$. By Parseval's identity over $B_{-}(f)$  we have $$ \sum_{\alpha \in B_{-}(f)} |\hat{g}(\alpha)|^2=p^{2r},$$ where $\hat{g}(\alpha)= \sum_{x \in B_{-}(f)}\epsilon_{p}^{f^*(x)+\alpha \cdot x}$ for $\alpha \in B_{-}(f)$. Moreover, from Proposition \ref{dual} we have $$ \hat{g}(\alpha)=\left\{ \begin{array}{ll}-p^{\frac{n}{2}}\epsilon_{p}^{f(\alpha)}  & \mbox{if $\alpha \in  I_1^-(f)$}; \\
	0 & \mbox{if $\alpha \in  I_1^+(f)$}.
	\end{array}
	\right.$$
From the arguments above we have $| I_1^-(f)|=p^{2r-n}$. Hence, by Lemma \ref{sonn} we get $| B_{-}(f^*) |=p^r$.

\end{itemize}

\end{proof}

A function $f:\mathbb{F}_{p}^{n}\rightarrow\mathbb{F}_{p}$ is called $s$-\textit{plateaued} if $|{\hat{f}(\alpha)}|=p^{\frac{n+s}{2}}\;\;\text{or}\;\;0$ for all $\alpha \in \mathbb{F}_{p}^n.$ The Walsh spectrum of $s$-\textit{plateaued} functions is given as follows (see \cite{jjy}),
$$\hat{f}(\alpha)= \left\{ \begin{array}{ll}
\pm p^{\frac{n+s}{2}}\epsilon_{p}^{f^*(\alpha)},0 & \mbox{if $n+s$ even or $n+s$ odd and $p\equiv 1$ mod $ 4$}; \\
\pm i p^{\frac{n+s}{2}}\epsilon_{p}^{f^*(\alpha)},0 & \mbox{if $n+s$ odd and $p\equiv 3$  mod  $4$}.
\end{array}
\right. $$

We denote the support of  $\hat{f}$  by $\mathrm{Supp}(\hat{f})$ and it is defined by $\mathrm{Supp}(\hat{f}):=\{\alpha:\alpha\in \mathbb{F}_{p}^{n}\mid \hat{f}(\alpha)\ne 0 \}.$  For $v\in \mathbb{F}_{p}^{n}$  let  $D_vf$ be the derivative function $D_vf(x):\mathbb{F}_{p}^{n}\rightarrow\mathbb{F}_{p}$ given by $D_vf(x)=f(x+v)-f(x)$. 

\begin{definition} 
A function $f:\mathbb{F}_{p}^{n}\rightarrow\mathbb{F}_{p}$ is called partially bent if the following property holds: For $v\in \mathbb{F}_{p^n}$, the derivative function is either a balanced or a constant. 
\end{definition}
Note that \textit{partially bent} functions are special subclass of \textit{plateaued} functions, and most of the known \textit{plateaued} functions are \textit{partially bent}. In the following subsection, we use the arguments that are given in \cite{op1}.

\subsection{GMMF Bent Functions}
Let $p$ be an odd prime and $F:\mathbb{F}_{p}^{m}\times\mathbb{F}_{p}^{s}\rightarrow\mathbb{F}_{p}$ be the map $(x,y)\rightarrow f_y(x),$  where $f_y:\mathbb{F}_{p}^{m}\rightarrow \mathbb{F}_{p}$ is an $s$-\textit{plateaued} function for each $y\in \mathbb{F}_{p}^{s}$ such that $\mathrm{Supp}(\hat{f_i})\cap \mathrm{Supp}(\hat{f_j})=\emptyset$ for $i\neq j$, $i,j\in \mathbb{F}_p^s.$ In \cite{Cesmelioglu2013b}, the authors  showed that $F$ is a bent function.

The \textit{bent} functions of the form $F(x,y)=f_y(x)$ are called \textit{GMMF} (Generalized Maiorana-McFarland) (see \cite{Cesmelioglu2013a}). From now on, all the plateaued functions we consider are partially bent. It is known that all partially bent functions can be written as sum of a bent function and an affine function. Let $f^{(a)}:\mathbb{F}_{p}^{m}\rightarrow\mathbb{F}_{p}$ be bent for all $a\in \F_{p^s}$ and $f_a:\mathbb{F}_{p}^{m}\times\mathbb{F}_{p}^{s}\rightarrow\mathbb{F}_{p}$ be the map $(x,y)\rightarrow f^{(a)}(x)+a \cdot y$. Then the function $F:\mathbb{F}_{p}^{m}\times\mathbb{F}_{p}^{s}\times\mathbb{F}_{p}^{s}\rightarrow\mathbb{F}_{p}$ defined by \be F(x,y,z)=f_z(x,y)=f^{z}(x)+z \cdot y \label{gmmf}\ee  belongs to GMMF class.
The Walsh transform of $F$ at $(\alpha,\beta, \gamma)$ is given by

$$\begin{array}{lll}
\widehat{F}(\alpha,\beta,\gamma)&=&\sum_{x\in\mathbb{F}_p^m}\sum_{y\in\mathbb{F}_p^s}\sum_{z\in\mathbb{F}_p^s}\epsilon_p^{F(x,y,z)-\alpha \cdot x-\beta \cdot y-\gamma \cdot z}
\\
&=&\sum_{x\in\mathbb{F}_p^m}\epsilon_{p}^{f^{(z)}(x)-\alpha \cdot x}\sum_{y\in\mathbb{F}_p^s}\epsilon_{p}^{y \cdot (z-\beta)}\sum_{z\in\mathbb{F}_p^s}\epsilon_p^{-\gamma \cdot z}
\\
&=& p^s\epsilon_{p}^{-\gamma \cdot\beta}\widehat{f^{(\beta)}}(\alpha).
\end{array}$$  Then we have,
\be \label{typ}\widehat{F}(\alpha,\beta,\gamma)=\xi_{\alpha,\beta}p^{\frac{m+2s}{2}}\epsilon_{p}^{{f^{(\beta)}}^*(\alpha)-\gamma \cdot \beta}\ee which follows from $\widehat{f^{(\beta)}}(\alpha)=\xi_{\alpha,\beta}p^{\frac{m}{2}}\epsilon_{p}^{{f^{(\beta)}}^*(\alpha)},$ where $\xi_{\alpha,\beta}\in \{\pm 1,\pm i\}.$ Hence we have \be F^*(x,y,z)={f^{(y)}}^*(x)-y \cdot z. \label{gmmfdual}\ee
\begin{obsrv} \label{nwrb}
 $F$ is weakly regular if $f^{(z)}$ is weakly regular bent with the same sign for all $z\in \mathbb{F}_p^s$  in their Walsh coefficients. $F$ is non-weakly regular bent if $f^{(z)}$ is weakly regular bent for all $z\in \mathbb{F}_p^s$ and there are $ z_1,z_2\in \mathbb{F}_p^s$ such that $f^{(z_1)}$ and $f^{(z_2)}$  have different signs in their Walsh coefficients or there exists $z\in \mathbb{F}_p^s$ such that $f^{(z)}$ is non-weakly regular bent.
\end{obsrv}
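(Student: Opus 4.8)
The plan is to reduce the statement entirely to the explicit Walsh spectrum of $F$ recorded in \eqref{typ}, since that formula already exhibits the sign of each Walsh coefficient. Writing $n=m+2s$ for the number of variables of $F$ and recalling $F^*(x,y,z)={f^{(y)}}^*(x)-y\cdot z$ from \eqref{gmmfdual}, I would first rewrite \eqref{typ} as the normalized identity
$$p^{-n/2}\widehat{F}(\alpha,\beta,\gamma)=\xi_{\alpha,\beta}\,\epsilon_{p}^{F^*(\alpha,\beta,\gamma)},$$
where $\xi_{\alpha,\beta}$ is exactly the sign occurring in $\widehat{f^{(\beta)}}(\alpha)=\xi_{\alpha,\beta}p^{m/2}\epsilon_{p}^{{f^{(\beta)}}^*(\alpha)}$. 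The key structural fact to extract is that the sign of the Walsh coefficient of $F$ at $(\alpha,\beta,\gamma)$ does not depend on $\gamma$ and coincides with the sign of $f^{(\beta)}$ at $\alpha$; the variable $\gamma$ enters only through the phase $\epsilon_{p}^{-\gamma\cdot\beta}$, which is absorbed into $\epsilon_{p}^{F^*(\alpha,\beta,\gamma)}$ and leaves $\xi_{\alpha,\beta}$ untouched.

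With this in hand, the whole proof becomes an unwinding of the definition of (weak) regularity. By definition, $F$ is weakly regular precisely when the map $(\alpha,\beta,\gamma)\mapsto\xi_{\alpha,\beta}$ equals a single constant $\xi\in\{\pm1,\pm i\}$ independent of its argument, and non-weakly regular otherwise. Since $\xi_{\alpha,\beta}$ is independent of $\gamma$, weak regularity of $F$ is equivalent to the two-variable map $(\alpha,\beta)\mapsto\xi_{\alpha,\beta}$ being globally constant, and I would then treat the three assertions as cases. If every $f^{(z)}$ is weakly regular with one common sign $\xi$, then $\xi_{\alpha,\beta}=\xi$ for all $(\alpha,\beta)$, so $F$ is weakly regular. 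If all $f^{(z)}$ are weakly regular but $f^{(z_1)}$ and $f^{(z_2)}$ carry different signs $\xi_{z_1}\neq\xi_{z_2}$, then $\xi_{\alpha,z_1}\neq\xi_{\alpha,z_2}$, so $(\alpha,\beta)\mapsto\xi_{\alpha,\beta}$ is non-constant and $F$ is non-weakly regular. Finally, if some $f^{(z)}$ is itself non-weakly regular, then $\alpha\mapsto\xi_{\alpha,z}$ is already non-constant, and again $F$ is non-weakly regular.

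The argument is essentially a direct reading of \eqref{typ}, so there is no deep obstacle; the one point requiring care is a coherence check on the ambient residue class, namely that the sign alphabet $\{\pm1\}$ versus $\{\pm i\}$ is the same for $f^{(\beta)}$ on $\mathbb{F}_{p}^{m}$ and for $F$ on $\mathbb{F}_{p}^{n}$, so that comparing the signs $\xi_{\alpha,\beta}$ across the two functions is meaningful. This holds because $p^{2s}$ is an odd square, hence $p^{n}=p^{m}\cdot p^{2s}\equiv p^{m}\pmod 4$, and the two sign alphabets therefore agree. Once this is noted, the equivalence that $F$ is weakly regular if and only if $(\alpha,\beta)\mapsto\xi_{\alpha,\beta}$ is constant delivers all three claims.
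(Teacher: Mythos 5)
Your proposal is correct and follows essentially the same route the paper intends: the Observation is stated immediately after the computation $\widehat{F}(\alpha,\beta,\gamma)=p^s\epsilon_{p}^{-\gamma\cdot\beta}\widehat{f^{(\beta)}}(\alpha)$ and Equation \eqref{typ}, and its justification is precisely your reading that the sign of $\widehat{F}$ at $(\alpha,\beta,\gamma)$ equals the sign $\xi_{\alpha,\beta}$ of $\widehat{f^{(\beta)}}(\alpha)$, independently of $\gamma$, so weak regularity of $F$ amounts to constancy of that sign map. Your added coherence check that $p^{n}=p^{m}p^{2s}\equiv p^{m}\pmod 4$, so the sign alphabets $\{\pm1\}$ versus $\{\pm i\}$ agree for $F$ and the $f^{(\beta)}$, is a point the paper leaves tacit but is a welcome clarification.
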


\par Let $F\in$ \textit{GMMF} be a non-weakly regular bent function. We determine the structure of the sets $B_{+}(F)$ and $B_{-}(F)$ in the case of $f^{(z)}$ is weakly regular bent for all $z\in \mathbb{F}_p^s$. Since sign of the Walsh coefficients of a weakly regular bent function doesn't change with respect to $\alpha$, in this case Equation \eqref{typ} reduces to \be \label{typp}\widehat{F}(\alpha,\beta,\gamma)=\xi_{\beta}p^{\frac{n+s}{2}}\epsilon_{p}^{{f^{(\beta)}}^*(\alpha)-\gamma \cdot \beta}.\ee

By the observation above, one can partition $\mathbb{F}_{p}^s$  into two subsets as $W^{+}(F):=\{z: z \in \mathbb{F}_{p}^{s} | f^{(z)}\; \text{is of type}
\;(+)\}$ and $W^{-}(F):=\{z: z\in \mathbb{F}_{p}^{s} | f^{(z)}\; \text{is of type}\;(-)\},$ where $F:\mathbb{F}_{p}^m\times \mathbb{F}_{p}^s\times \mathbb{F}_{p}^s\rightarrow \mathbb{F}_p$ is given by $F(x,y,z)= f_z(x,y).$ Then by Equation (\ref{typp}) we deduce that \be \label{cos} B_{\pm}(F)=\mathbb{F}_{p}^{m}\times W^{\pm}(F) \times \mathbb{F}_{p}^s.\ee
\begin{rem} \label{rem1} Note that Equation (\ref{gmmfdual}) implies $F^*$ is also bent and belongs to the \textit{GMMF} class if and only if  ${f^{(y)}}^*(x)$ is bent for all $y \in \mathbb{F}_{p}^{s} $. Observe that if  $f^{(z)}$ is weakly regular bent for all $z\in \mathbb{F}_p^s$ then  $F^*$ is bent. \end{rem} 
Let $F^*$ be bent. The Walsh transform of $F^*$ at $(\alpha,\beta, \gamma)$ is given by

$$ \begin{array}{lll}
\widehat{F^*}(\alpha,\beta,\gamma)&=&\sum_{x\in\mathbb{F}_p^m}\sum_{y\in\mathbb{F}_p^s}\sum_{z\in\mathbb{F}_p^s}\epsilon_p^{F^*(x,y,z)-\alpha \cdot x-\beta \cdot y-\gamma \cdot z}
\\
&=&\sum_{x\in\mathbb{F}_p^m}\epsilon_{p}^{{f^{(y)}}^*(x)-\alpha \cdot x}\sum_{y\in\mathbb{F}_p^s}\epsilon_{p}^{-y \cdot \beta}\sum_{z\in\mathbb{F}_p^s}\epsilon_p^{-z \cdot (\gamma+y)}
\\
&=& p^s\epsilon_{p}^{\gamma \cdot\beta}\widehat{{f^{(-\gamma)}}^*}(\alpha). 
\end{array} $$
In the case of $f^{(z)}$ is weakly regular bent for all $z\in \mathbb{F}_p^s$, from Equation (\ref{wrdual}) we obtain,
\be \label{typpp}\widehat{F^*}(\alpha,\beta,\gamma)=\xi_{\beta}^{-1}p^{\frac{m+2s}{2}}\epsilon_{p}^{{f^{(-\gamma)}}(-\alpha)+\gamma \cdot \beta},\ee where $\widehat{{f^{(-\gamma)}}^*}(\alpha)=\xi_{\beta}^{-1}p^{\frac{m}{2}}\epsilon_{p}^{{f^{(-\gamma)}}(-\alpha)}$ and $\xi_{\beta}\in \{\pm 1,\pm i\}.$ 

Hence we have $$ \begin{array}{lll} F^{**}(x,y,z)&=&{f^{(-z)}}^{**}(x)+y \cdot z
\\
&=& f^{(-z)}(-x)+y \cdot z.
\end{array} $$ 

Then, from Equation (\ref{typpp}) we have \begin{eqnarray} \label{coss} B_{+}(F^*)&=&\mathbb{F}_{p}^{m} \times \mathbb{F}_{p}^s\times(-W^{\pm}(F)), \\ \label{cosss} B_{-}(F^*)&=&\mathbb{F}_{p}^{m} \times \mathbb{F}_{p}^s\times (-W^{\pm}(F)),\end{eqnarray} where $(\pm)$ sign depends on $p$ and $m$ (see Remark \ref{rem0}).

\par Character theory of finite groups are a very useful tool to study exponential sums over finite fields, the reader may refer to anygood textbook on character theory for basic facts and notations, see  for instance \cite{isac}. The combination of character theory of finite groups and elementary number theory including cyclotomic fields are also very useful to evaluate the Walsh spectrum of $p$-ary functions. For further reading on elementary number theory and cyclotomic fields, we refer to \cite{hardy} and  \cite{wash} respectively.

\subsection{Basic Character Theory.}

The functions $\chi_{j} : \F_{p}^n \rightarrow \C^{\star},\;\; j\in \F_{p}^n$, defined by
$$ \chi_{j}(x)=\epsilon_{p}^{j \cdot x}$$ are all additive characters of $\F_{p}^n$. 
Let $\widehat{G}$ denotes the character group of an finite abelian group $G$ and $\chi_0$ be the trivial character. We identify a subset $A$ of $G$ with the group ring element $\sum_{x\in A}x$, which will also be denoted by $A$. By linearity, we extend each character $\chi \in G$ to a homomorphism from $\C[G]$ to $\C$, where $\C[G]$ is the group ring of $G$ over the field of complex numbers. We still denote this homomorphism by $\chi$, i.e., $\chi(A)=\sum_{x\in A}\chi(x)$.

\subsection{Elementary Number Theory.}

Let $a$ be a positive integer and $p$ be an odd prime number.  Let $ a\equiv\tilde{a}\;(mod\;p)$.
The \textit{Legendre symbol} is defined as
\begin{displaymath}
\left( \frac{a}{p} \right) = \left\{ \begin{array}{ll}
\,\,\,\, 0 & \textrm{if }  \tilde{a}=0;\\
\,\,\, \, 1 & \textrm{if } \sqrt{\tilde{a}} \in \F_p^\star;\\
-1 & \textrm{if }  \sqrt{\tilde{a}} \notin \F_{p}^{\star}.
\end{array} \right.
\end{displaymath}
Let $p$ be an odd prime number. The quadratic Gauss sum is defined as
\be \sum_{i\in\mathbb{F}_p^{\star}}\left( \frac{i}{p} \right)\epsilon_p^i=\left\{ \begin{array}{ll} \sqrt{p}  & \mbox{if  $ p \equiv 1\;(mod\;4)$}; \\
	i\sqrt{p} & \mbox{if  $ p \equiv 3\;(mod\;4)$}.
\end{array}
\right. \label{gauss}\ee

In the following parts, we use the notation as in \cite{mos} to define the cyclotomic fields and linear codes.

\noindent \textbf{Cyclotomic field.}
Let $p$ be an odd prime. The extension field $\Q(\epsilon_p)$ of the field of rational numbers is called $p$-th cyclotomic field. 
The ring of integers in $\mathbb{Q}(\epsilon_p)$ is defined as $\mathcal {O}_{\mathbb{Q}(\epsilon_p)}:=\mathbb{Z}(\epsilon_p)$. Let $I \subset \F_p$ be a subset of size $p-1$ then, $\{\epsilon_p^i : i \in I\}$ is an integral basis  of $\mathcal {O}_{\mathbb{Q}(\epsilon_p)}$.
The field extension $\mathbb{Q}(\epsilon_p)/\mathbb{Q}$ is Galois of degree $p-1$, and  the Galois group 
$$Gal (\mathbb{Q}(\epsilon_p)/\mathbb{Q})=\{\sigma_a : a \in \F_p^{\star}\},$$
where the automorphism $\sigma_a$ of $\mathbb{Q}(\epsilon_p)$ is defined by $\sigma_a(\epsilon_p)=\epsilon_p^a$.

\subsection{Linear codes.}
Let $p$ be a prime number and $n,\;k \in \Z^+$. A linear code $\mathcal {C}$  of length $n$ and dimension $k$ over $\mathbb {F}_{p}$ is a $k$-dimensional linear subspace of $\F_p^n$, denoted by $[n,k]_{p}$. The elements of $\sC$  are called  \textit{codewords}.
A linear code $\mathcal{C}$  of length $n$ and dimension $k$ over $\mathbb {F}_{p}$ with minimum Hamming distance $d$ is denoted by  $[n,k,d]_{p}$. The Hamming weight of  a vector $ u=(u_0,\ldots, u_{n-1})\in \mathbb {F}_{p}^n$, denoted by $wt( u)$, is the size of its support  defined as $$\supp(u)=\{0 \leq i\leq n-1: u_i\not=0 \}.$$ It is clear that the minimum Hamming weight of nonzero codewords of a linear code $\sC$ correponds to minimum Hamming distance of $\sC$. 

Let $E_a$ be denote the number of codewords with Hamming weight $a$ in $\sC$ of length $n$. Then, $(1,E_1, \ldots, E_n)$ is the \textit{ weight distribution} of $\sC$ and the polynomial $1+E_1y + \cdots + E_ny^n$ is called the \textit{ weight enumerator} of $\sC$. 
The code $\sC$ is called a \textit{$t$-weight code} if  $|\{a:a \in \F_p^{\star}| E_a \ne 0 \}|=t$.

\par In the following two sections, we construct three-weight linear codes based on the second generic construction. Although the general idea of the construction method employed is a classical one, we use non-weakly regular bent functions to construct linear codes over finite fields.

\section{Three-weight ternary linear codes on $\mathbf{B_{+}(f)}$}\label{SectionConstruction}

Let $S$ be a non empty subset of $\mathbb{F}_{p}^{n}.$ The following generic construction method was employed
in \cite{ding1} for obtaining linear codes with a few weights.
\begin{equation}\label{defCode}
\mathcal {C}_{S}=\{c_{u}=\left(u \cdot x_1, u \cdot x_2,\ldots, u \cdot x_{k}\right) \, : \; u\in\mathbb {F}_{p^n}\},
\end{equation}
where $x_1, \ldots, x_{k}$ are the non-zero elements of $S$  and  $c_{u}$ denotes a codeword of $\mathcal {C}_{S}$. $S$ is called the \textit{defining set} of the code $\mathcal {C}_{S}.$ Clearly, the length of the linear code $\mathcal {C}_{S}$ is $k$ and its dimension is at most $n$.  For any $u \in \mathbb{F}_{p}^{n},$ the weight of the codeword $c_{u}$ of $\mathcal {C}_{S}$ can be written as follows: \be  \begin{array}{lll} wt(c_{u})&=& k-\frac{1}{p}\sum_{x \in S}\sum_{y\in \mathbb {F}_{p}}\epsilon_{p}^{y(u \cdot x)} \\
	&=&\frac{p-1}{p}k-\frac{1}{p}\sum_{y\in \mathbb {F}_{p}^{*}}\sigma_y(\chi_u(S)).
\end{array} \label{wt}\ee
Let $D$ be  a subset of $\F_p^n$. Any function  $f :  D \longrightarrow  \mathbb {F}_{p}$ is said to be \textit{balanced} over $\F_p$ if 
$f$ takes every value of $\F_p$ the same number of times. 

\begin{fact}\label{bal}
	If $D$ is a subspace of $\F_p^n$ and $j \notin D^{\perp}$, then it is well-known that $j \cdot x$ is balanced over $\F_p$.
\end{fact}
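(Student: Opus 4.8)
The plan is to show directly that each value $c \in \F_p$ is attained by the map $x \mapsto j \cdot x$ on $D$ exactly $|D|/p$ times. The conceptual reason is that $x \mapsto j \cdot x$ is an $\F_p$-linear functional on the subspace $D$, and the hypothesis $j \notin D^{\perp}$ says precisely that this functional is not identically zero on $D$; a nonzero linear functional into $\F_p$ is surjective, its kernel has codimension one in $D$, and each fiber is a coset of the kernel and hence has cardinality $|D|/p$. To stay in line with the orthogonality arguments used throughout the paper, however, I would carry this out via character sums.

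First I would fix $c \in \F_p$ and express the fiber count through additive characters:
$$N_c := |\{x \in D : j \cdot x = c\}| = \frac{1}{p}\sum_{x \in D}\sum_{t \in \F_p}\epsilon_{p}^{t(j \cdot x - c)}.$$
Interchanging the two sums and separating the part depending only on $t$ gives
$$N_c = \frac{1}{p}\sum_{t \in \F_p}\epsilon_{p}^{-tc}\sum_{x \in D}\epsilon_{p}^{(tj)\cdot x} = \frac{1}{p}\sum_{t \in \F_p}\epsilon_{p}^{-tc}\,\chi_{tj}(D).$$

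The key step is to evaluate the inner character sum $\chi_{tj}(D)=\sum_{x \in D}\epsilon_{p}^{(tj)\cdot x}$. By orthogonality of characters on the finite group $D$, this equals $|D|$ when the character $x \mapsto \epsilon_{p}^{(tj)\cdot x}$ is trivial on $D$, i.e. when $tj \in D^{\perp}$, and equals $0$ otherwise. For $t=0$ we have $tj = 0 \in D^{\perp}$, contributing $|D|$. The hypothesis enters exactly in the case $t \neq 0$: since $t$ is a unit in $\F_p$ and $D^{\perp}$ is a subspace, $tj \in D^{\perp}$ would force $j \in D^{\perp}$, contradicting the assumption; hence the inner sum vanishes for every nonzero $t$. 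Substituting back, only the $t=0$ term survives and $N_c = |D|/p$, independent of $c$, which is precisely the assertion that $j \cdot x$ is balanced over $\F_p$.

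There is no genuine obstacle here; the single point that deserves a word of care is the implication $tj \in D^{\perp} \Rightarrow j \in D^{\perp}$ for $t \neq 0$, which relies on $t$ being invertible modulo $p$ and is therefore where the primality of $p$ (so that $\F_p$ is a field) is used.
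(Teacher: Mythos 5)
Your argument is correct and complete. One thing to be aware of: the paper never proves this Fact---it is asserted as ``well-known'' and then used directly (e.g., in Proposition \ref{kernel} to count $|\{x\in B_{+}(f): u\cdot x=0\}|$, and in Theorems \ref{even} and \ref{odd} to conclude $\chi_u(C_0(f))+\chi_u(C_1(f))+\chi_u(C_2(f))=0$)---so there is no proof in the paper to compare yours against; you are supplying the missing justification. You in fact give two independent proofs. The opening paragraph's linear-algebra argument already suffices: $x\mapsto j\cdot x$ is an $\F_p$-linear functional on $D$, it is not identically zero precisely because $j\notin D^{\perp}$, hence its image is a nonzero subspace of $\F_p$, i.e.\ all of $\F_p$, and every fiber is a coset of its kernel, of size $|D|/p$. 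The character-sum computation is likewise correct: the two points needing care are the orthogonality evaluation $\chi_{tj}(D)=|D|$ or $0$ according to whether $tj\in D^{\perp}$ or not, and the implication $tj\in D^{\perp}\Rightarrow j\in D^{\perp}$ for $t\neq 0$, which you correctly isolate as the step using invertibility of $t$ in $\F_p$ (and closure of the subspace $D^{\perp}$ under scalar multiplication). The character version has the advantage of matching the orthogonality machinery the paper invokes repeatedly; the coset argument is the shorter and more elementary route. Either one alone would serve as a proof of the Fact.
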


From now on we further assume that $f:\mathbb{F}_{3}^{n}\rightarrow\mathbb{F}_{3}$  be a non-weakly regular dual-bent function such that $f(x)=f(-x)$. Note that $f^{*}$ is non-weakly regular bent and $f^{**}(x)=f(x)$ (see\; \cite[Corollary 3.1 and Theorem 3.1]{op}). Since $B_{+}(f)$ is a vector space and $f^*(x)=f^*(-x)$ from Proposition \ref{f*0}, then we have $C_i(f)=-C_i(f)$ for all $i \in \F_3$ which implies $\chi_u(C_i(f)) \in \Z$ for all $i\in \F_3$, $u\in \F_3^n$. Then, by Equation (\ref{wt}), for all $i\in \F_3$ we have \be \label{wts} wt(c_{u})=\frac{2}{3}\left(|C_i(f)^{\star}|-\chi_u(C_i(f)^{\star})\right).\ee
\\

\par In Section \ref{SectionConstruction}, we assume $B_{+}(f)$ is an $\mathbb {F}_{3}$-vector space with $\Dim(B_{+}(f))\ge \lfloor \frac{n}{2}\rfloor+1.$ Put $\dim(B_{+}(f))=r$. Since $0 \in B_{+}(f),$ then $f$ is of type $(+)$.
Let $n$ be even. Then, by Proposition \ref{dualtype} $f^*$ is also of type $(+)$. We next determine the $|C_i(f)|$ for each $i\in \F_3$. Put $f(0)=j_0$. Then, from Proposition \ref{dual} we have $3^{\frac{n}{2}}\epsilon_{3}^{j_0}=|C_0(f)|+|C_1(f)|\epsilon_{3}+|C_2(f)|\epsilon_{3}^2$. Since, $\{\epsilon_{3}^{j_0+1},\epsilon_{3}^{j_0+2}\}$ be an integral basis for $\Q(\epsilon_{3})$, then we have $|C_{j_0}(f)|-3^{\frac{n}{2}}=|C_{j_0+1}(f)|=|C_{j_0+2}(f)|,$ where the summations $j_0+1$ and $j_0+2$ are modulo $3$. On the other hand, we have $|C_0(f)|+|C_1(f)|+|C_2(f)|=3^r$. Hence, we get \be \label{car+}  |C_{j_0}(f)|=3^{r-1}-3^{\frac{n}{2}-1}+3^{\frac{n}{2}},\;\; |C_{j_0+1}(f)|=|C_{j_0+2}(f)|=3^{r-1}-3^{\frac{n}{2}-1}.  \ee

\begin{prp}\cite[Proposition 1]{Cesmelioglu2013a} \label{w+-}

 Let $f(x_1,x_2,\dots, x_n)=d_1x_1^2+d_2x_2^2+\dots+d_nx_n^2$ be a quadratic bent function from $\mathbb{F}_{p}^n$ to $\mathbb{F}_p$. Let $\Delta:=\prod_{i=1}^{n}d_i$, and let $\eta$ denote the quadratic character of $\mathbb{F}_{p}$. The Walsh spectrum of $f$ is given by
$$\hat{f}(\alpha)= \left\{ \begin{array}{ll}
\eta(\Delta) p^{n/2}\epsilon_{p}^{f^*(\alpha)} & \mbox{if $p\equiv 1$ mod $ 4$}; \\ \eta(\Delta )i^n p^{n/2} \epsilon_{p}^{f^*(\alpha)} & \mbox{if $p\equiv 3$ mod $4$}.
\end{array}
\right.$$ 
\end{prp}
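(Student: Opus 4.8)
The plan is to exploit the diagonal form of $f$ to factor the Walsh transform into a product of $n$ one-dimensional exponential sums, each of which is a quadratic Gauss sum. Writing $\alpha=(\alpha_1,\dots,\alpha_n)$ and using $f(x)=\sum_{i=1}^n d_i x_i^2$ together with $\alpha\cdot x=\sum_{i=1}^n \alpha_i x_i$, the definition of the Walsh transform gives
$$\hat{f}(\alpha)=\sum_{x\in\F_p^n}\epsilon_p^{\sum_{i=1}^n(d_ix_i^2-\alpha_ix_i)}=\prod_{i=1}^n\left(\sum_{x_i\in\F_p}\epsilon_p^{d_ix_i^2-\alpha_ix_i}\right).$$
First I would observe that bentness forces $d_i\ne 0$ for every $i$: if some $d_i=0$ the corresponding factor is $\sum_{x_i}\epsilon_p^{-\alpha_ix_i}$, which vanishes whenever $\alpha_i\ne 0$, contradicting $\norm{\hat f(\alpha)}=p^{n/2}$. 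In particular $\Delta=\prod_i d_i\ne0$, so $\eta(\Delta)=\pm1$ is well defined.

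Next I would evaluate each factor. Since $p$ is odd, $2$ and hence $4d_i$ are invertible modulo $p$, so completing the square gives $d_ix_i^2-\alpha_ix_i=d_i\bigl(x_i-(2d_i)^{-1}\alpha_i\bigr)^2-(4d_i)^{-1}\alpha_i^2$. Translating the summation variable removes the shift, leaving $\epsilon_p^{-(4d_i)^{-1}\alpha_i^2}\sum_{y\in\F_p}\epsilon_p^{d_iy^2}$. A standard counting argument (the number of square roots of $u$ is $1+\eta(u)$) reduces $\sum_y\epsilon_p^{d_iy^2}$ to $\eta(d_i)$ times the quadratic Gauss sum $g:=\sum_{v\in\F_p^\star}\left(\frac{v}{p}\right)\epsilon_p^v$ evaluated in Equation \eqref{gauss}. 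Thus the $i$-th factor equals $\eta(d_i)\,g\,\epsilon_p^{-(4d_i)^{-1}\alpha_i^2}$, and taking the product over $i$ yields
$$\hat f(\alpha)=\eta(\Delta)\,g^{\,n}\,\epsilon_p^{-\sum_{i=1}^n(4d_i)^{-1}\alpha_i^2},$$
using multiplicativity of $\eta$. This simultaneously identifies the dual as $f^*(\alpha)=-\sum_{i=1}^n(4d_i)^{-1}\alpha_i^2$.

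Finally I would split on $p\bmod 4$ using \eqref{gauss}. For $p\equiv1\pmod4$ we have $g=\sqrt p$, so $g^n=p^{n/2}$ and $\hat f(\alpha)=\eta(\Delta)p^{n/2}\epsilon_p^{f^*(\alpha)}$; for $p\equiv3\pmod4$ we have $g=i\sqrt p$, so $g^n=i^np^{n/2}$ and $\hat f(\alpha)=\eta(\Delta)i^np^{n/2}\epsilon_p^{f^*(\alpha)}$, exactly the claimed spectrum. The computation is essentially bookkeeping; the only points requiring care are the evaluation of $g^n$ in the case $p\equiv3\pmod4$, where the factor $i^n$ must be tracked correctly, and the verification that the exponent $-\sum_i(4d_i)^{-1}\alpha_i^2$ really is the dual function $f^*$ rather than an unrelated phase.
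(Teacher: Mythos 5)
Your proof is correct, and there is nothing in the paper to compare it against: Proposition \ref{w+-} is imported by citation from \cite[Proposition 1]{Cesmelioglu2013a} and the paper gives no proof of it. Your argument --- factoring the Walsh transform of the diagonal form into $n$ one-dimensional sums, completing the square, reducing each factor to $\eta(d_i)$ times the quadratic Gauss sum of Equation \eqref{gauss}, and multiplying --- is the standard computation and is essentially the proof given in the cited reference; the two delicate points you flag (tracking $i^n$ when $p\equiv 3 \bmod 4$, and identifying the phase $-\sum_i (4d_i)^{-1}\alpha_i^2$ with $f^*$) are handled correctly, the latter because a fourth root of unity times a $p$-th root of unity decomposes uniquely since $\gcd(4,p)=1$, so the phase appearing in your formula is forced to be the dual in the sense of the characterization of \cite{ksw} quoted in Section \ref{matback}.
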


\begin{cor} \label{qtype}
For any odd prime $p$ and $n\in Z^+$ there exist quadratic bent functions of type\;$(+)$ and type\;$(-)$.
\end{cor}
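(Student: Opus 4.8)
The plan is to use Proposition \ref{w+-}, which gives the Walsh spectrum of a diagonal quadratic form $f(x_1,\dots,x_n)=d_1x_1^2+\dots+d_nx_n^2$ in terms of $\eta(\Delta)$ where $\Delta=\prod_i d_i$. Recalling the type definitions, $f$ is of type $(+)$ precisely when the sign $\epsilon$ in $\hat f(0)=\epsilon\,p^{n/2}\epsilon_p^{f^*(0)}$ lies in $\{1,i\}$, and of type $(-)$ when $\epsilon\in\{-1,-i\}$. So the whole argument reduces to showing that, as the coefficients $d_i\in\F_p^{\star}$ range over all choices, the leading factor in Proposition \ref{w+-} realizes both a type $(+)$ sign and a type $(-)$ sign.

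First I would split into the two congruence cases for $p$. When $p\equiv 1\pmod 4$, the leading factor is simply $\eta(\Delta)\in\{1,-1\}$, so $f$ is type $(+)$ when $\eta(\Delta)=1$ and type $(-)$ when $\eta(\Delta)=-1$. Since $\eta$ is surjective onto $\{\pm 1\}$ on $\F_p^{\star}$, I can make $\Delta$ a square (e.g. all $d_i=1$) or a non-square (e.g. take one $d_i$ to be a fixed non-residue and the rest $1$), so both types occur. When $p\equiv 3\pmod 4$, the leading factor is $\eta(\Delta)\,i^n$; here I would note that $i^n\in\{1,i,-1,-i\}$ depending on $n\bmod 4$, and multiplying by $\eta(\Delta)=\pm1$ toggles the sign. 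In each residue class of $n$ modulo $4$ the two choices $\eta(\Delta)=\pm 1$ land the product $\eta(\Delta)i^n$ into one element of $\{1,i\}$ and one element of $\{-1,-i\}$ (for instance $n$ even gives $\pm 1$, $n$ odd gives $\pm i$), so again both type $(+)$ and type $(-)$ are attained by a suitable choice of $\Delta$.

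Concretely, I would exhibit explicit witnesses rather than argue abstractly: fix a quadratic nonresidue $\nu\in\F_p^{\star}$, take $f_+$ with all $d_i=1$ and $f_-$ with $d_1=\nu$, $d_2=\dots=d_n=1$, then read off from Proposition \ref{w+-} that $f_+$ and $f_-$ have opposite signs in the leading factor, and check case by case (on $p\bmod 4$ and $n\bmod 4$) that one of them is type $(+)$ and the other type $(-)$. The main obstacle, such as it is, is purely bookkeeping: for $p\equiv 3\pmod 4$ one must carefully track $i^n$ against the type dichotomy $\{1,i\}$ versus $\{-1,-i\}$ across the four residues of $n$, and confirm that flipping $\eta(\Delta)$ always crosses between the two sets. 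Since all quadratic forms are bent (nondegenerate diagonal forms over $\F_p$ are bent, which is the standing hypothesis of Proposition \ref{w+-}), no further nondegeneracy verification is needed, and the corollary follows.
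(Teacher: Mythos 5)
Your proposal is correct and is essentially the paper's own argument: the paper's proof of Corollary \ref{qtype} consists of the single sentence that the conclusion ``trivially follows from Proposition \ref{w+-}'', and what you have done is fill in exactly the details being waved at — choosing $\Delta$ a square or a nonsquare via the $d_i$, and checking that flipping $\eta(\Delta)$ always moves the factor $\eta(\Delta)$ (resp.\ $\eta(\Delta)i^n$) between $\{1,i\}$ and $\{-1,-i\}$. Your bookkeeping for $p\equiv 3 \pmod 4$ is sound, since multiplication by $-1$ interchanges the two sets $\{1,i\}$ and $\{-1,-i\}$ regardless of the value of $i^n$, so the argument goes through in every residue class of $n$ modulo $4$.
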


\begin{proof}
The desired conclusion trivially follows from Proposition \ref{w+-}.
\end{proof}

\begin{rem}\label{rem2}
By Observation \ref{nwrb} and Corollary \ref{qtype} one can easily obtain infinitely many weakly regular bent functions of any type and varying degrees as follows. Let  $F:\mathbb{F}_{p}^{r}\times\mathbb{F}_{p}^{s}\times\mathbb{F}_{p}^{s}\rightarrow\mathbb{F}_{p}$ be defined by Equation \eqref{gmmf} such that $f^{(z)}$ is quadratic bent function of  type \;$(+)$ (resp. type  \;$(-)$ ) for all $z\in \mathbb{F}_p^s$. Then $F$ is weakly regular bent of  type \;$(+)$ (resp. type  \;$(-)$ ) by Equation \eqref{typ}. Moreover, if  $\sum_{z\in \mathbb{F}_p^s}f^{(z)}$ is quadratic (affine) then by \cite[Proposition 1]{Cesmelioglu2013c} we have $deg(F)=(p-1)s+2 (deg(F)=(p-1)s+1)$. Observe that the degree of $F$ varies as $s$ changes. Since the dimension $m$ equals $r+2s$,  for any $m\ge3$ and $p$ we can obtain non-quadratic weakly regular bent functions of arbitrary type. Observe that if $f^{(z)}$ is homogenous quadratic bent function for all $z\in \mathbb{F}_p^s$ with  $f^{(z)}=f^{(-z)}$, then by Equation \eqref{gmmf} we have $F(x,y,z)=F(-x,-y,-z)$.  Thanks to the \textit{GMMF} class (\cite{Cesmelioglu2013a}), for any $j_0 \in \F_3$  by recursively applying the procedure we describe above we can obtain  infinitely many non-weakly regular dual-bent functions such that $F(0)=j_0$, $F(x)=F(-x),$ and $B_{\pm}(F)$ is a non-degenerate vector subspace. One can construct such a function as follows. Let $F:\mathbb{F}_{3}^{m}\times\mathbb{F}_{3}^{s}\times\mathbb{F}_{3}^{s}\rightarrow\mathbb{F}_{3}$ be a bent function defined by Equation \eqref{gmmf} such that $f^{(z)}$ is weakly regular (not necessarily quadratic) for all $z \in \F_{3^s}$ and there exist $z_1,z_2 \in \F_{3^s}$  with $f^{(z_1)}$ and $f^{(z_2)}$ are of different types. Then, $F$ is non-weakly regular bent and $F^*$ is also bent (see Remark \ref{rem1}). Then Equation \eqref{gmmfdual} implies that $B_{\pm}(f)$ is a vector space if and only if $W^{\pm}(F)$ is a vector subspace of $\mathbb{F}_{3}^{s}$ respectively. By recursively applying the procedure we describe above, for any $s\ge 1$ and $z \in \mathbb{F}_{3}^{s}$ we can find a weakly regular bent function $f^{(z)}:\mathbb{F}_{3}^{m}\rightarrow\mathbb{F}_{3}$ of arbitrary type and varying degrees.  For any $0\le s_1 < s$ let us choose a $s_1$-dimensional vector subspace of $U$ of $\mathbb{F}_{3}^{s}$. If, for $z \in U$ we set  $f^{(z)}$ is of  type \;$(+)$ (resp. type  \;$(-)$ ) and  for $z \notin U$ we set  $f^{(z)}$ is of  type \;$(-)$ (resp. type  \;$(+)$ ), then  $W^{+}(F)$ (resp. $W^{-}(F)$) becomes a vector subspace of dimension $s_1$. Moreover, by Equation \eqref{gmmf}, we have $F(0)=f^{(0)}(0)$ and $F(-x,-y,-z)=f^{(-z)}(-x)+y\cdot z$. If we set $f^{(z)}=f^{(-z)}$ for all $z\in \mathbb{F}_{3}^{s}$ and $f^{(z)}(x)=f^{(z)}(-x)$ for all $x\in \mathbb{F}_{3}^{m},\;z\in \mathbb{F}_{3}^{s}$ (for any $m\ge 1$ and $p$, such functions with varying degrees exists by the arguments given above), then we have $F(x,y,z)=F(-x,-y,-z)$. Furthermore, for any $j_0 \in \F_3$ one can easily find a weakly regular bent function $g:\mathbb{F}_{3}^{m}\rightarrow\mathbb{F}_{3}$ such that $g(0)=j_0$. Hence, if we set $f^{(0)}=g,$ then we have $F(0)=j_0$. It is easy to see that if  $W^{\pm}(F)$ is a non-degenerate subspace then $B_{\pm}(f)$ is non-degenerate as well. By the arguments above we can set $W^{\pm}(F)$ as an arbitrary proper subspace of $\mathbb{F}_{3}^{s}$. Hence, we can eleminate the degeneracy case easily.
\end{rem}

\par For $f(0)=j_0$, let us take $S=C_{j_0}(f)$. Observe that the linear code $\sC_{C_{j_0}(f)}$ of length $|C_{j_0}(f)|-1$ over $\F_3$ defined by $(\ref{defCode})$ is at most $r$-dimensional.

\begin{prp} \label{kernel}
	The linear code $\sC_{C_{j_0}(f)}$ of length $3^{r-1}-3^{\frac{n}{2}-1}+3^{\frac{n}{2}}-1$ over $\F_3$ defined  by $(\ref{defCode})$ is $r$-dimensional.
\end{prp}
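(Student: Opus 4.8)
The plan is to identify the dimension of $\sC_{C_{j_0}(f)}$ with the rank of the $\F_3$-linear evaluation map $u \mapsto c_u$, whose kernel is $K = \{u \in \F_3^n : u \cdot x = 0 \text{ for all } x \in C_{j_0}(f)\}$; thus $\Dim \sC_{C_{j_0}(f)} = n - \Dim K$. Since $C_{j_0}(f) \subseteq B_{+}(f)$, every $u \in B_{+}(f)^{\perp}$ annihilates all of $C_{j_0}(f)$, so $B_{+}(f)^{\perp} \subseteq K$, which already yields $\Dim \sC_{C_{j_0}(f)} \le r$ as noted before the statement. The whole task therefore reduces to proving the reverse inclusion $K \subseteq B_{+}(f)^{\perp}$, i.e. that $C_{j_0}(f)$ spans the full $r$-dimensional space $B_{+}(f)$, equivalently that no $u \notin B_{+}(f)^{\perp}$ kills $C_{j_0}(f)$.

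To control $\chi_u(C_{j_0}(f))$ for an arbitrary $u \notin B_{+}(f)^{\perp}$, I would recover the preimage sum from the three character sums $T_c(u) := \sum_{\alpha \in B_{+}(f)} \epsilon_{3}^{c f^*(\alpha) + u \cdot \alpha}$, $c \in \F_3$, via the Fourier inversion $\chi_u(C_{j_0}(f)) = \tfrac{1}{3}\sum_{c \in \F_3} \epsilon_{3}^{-c j_0} T_c(u)$ (using $\sum_c \epsilon_{3}^{c(f^*(\alpha)-j_0)} = 3\cdot[f^*(\alpha)=j_0]$). Here $T_0(u) = \chi_u(B_{+}(f))$ vanishes, because $x \mapsto u \cdot x$ is a nonzero linear functional on the subspace $B_{+}(f)$ and hence balanced by Fact \ref{bal}; $T_1(u) = S_{0}(f,u)$; and, since $f^*(x) = f^*(-x)$ by Proposition \ref{f*0} and $B_{+}(f) = -B_{+}(f)$ by Observation \ref{obs}, the change of variable $\alpha \mapsto -\alpha$ gives $T_2(u) = \overline{S_{0}(f,u)}$. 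As $n$ is even we are in the case $p^n \equiv 1 \bmod 4$, so Proposition \ref{dual} forces $S_{0}(f,u) \in \{0,\ 3^{n/2}\epsilon_{3}^{f(u)}\}$, whence $\chi_u(C_{j_0}(f)) = \tfrac{2}{3}\,\mathrm{Re}\big(\epsilon_{3}^{-j_0} S_{0}(f,u)\big)$ takes only the values $0$, $2\cdot 3^{n/2-1}$ (when $S_0 \ne 0$ and $f(u) = j_0$), and $-3^{n/2-1}$ (when $S_0 \ne 0$ and $f(u) \ne j_0$).

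The conclusion is then immediate: by $(\ref{car+})$ we have $|C_{j_0}(f)| = 3^{r-1} + 2\cdot 3^{n/2-1}$, which strictly exceeds each of the three possible values of $\chi_u(C_{j_0}(f))$ found above. If some $u \notin B_{+}(f)^{\perp}$ lay in $K$, then $u \cdot x = 0$ for every $x \in C_{j_0}(f)$ would force $\chi_u(C_{j_0}(f)) = |C_{j_0}(f)|$, contradicting this strict inequality. Hence $K = B_{+}(f)^{\perp}$, so $\Dim K = n - r$ and $\Dim \sC_{C_{j_0}(f)} = r$. I expect the main obstacle to be the middle step: correctly assembling $\chi_u(C_{j_0}(f))$ out of $S_{0}(f,u)$ and its conjugate and invoking Proposition \ref{dual} to pin down its finitely many values; once those values are in hand, the comparison against $|C_{j_0}(f)|$ is routine.
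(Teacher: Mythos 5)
Your proof is correct, but the decisive step is genuinely different from the paper's. Both arguments reduce the claim, via rank--nullity, to showing that no $u \notin B_{+}(f)^{\perp}$ can annihilate all of $C_{j_0}(f)$, and both invoke Fact \ref{bal} and Equation (\ref{car+}). The paper then finishes with a pure counting (pigeonhole) argument: for $u \notin B_{+}(f)^{\perp}$ the functional $x \mapsto u \cdot x$ is balanced on $B_{+}(f)$, so its zero set there has exactly $3^{r-1}$ elements, while $C_{j_0}(f)$, which would have to sit inside that zero set, has $3^{r-1}+2\cdot 3^{\frac{n}{2}-1}$ elements --- contradiction, with no character-sum evaluation needed. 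You instead compute $\chi_u(C_{j_0}(f))$ exactly: Fourier inversion over $\F_3$, the identities $T_1(u)=S_0(f,u)$ and $T_2(u)=\overline{S_0(f,u)}$ (using $f^*(x)=f^*(-x)$ and $B_+(f)=-B_+(f)$), and Proposition \ref{dual} pin its value to $\{0,\;2\cdot 3^{\frac{n}{2}-1},\;-3^{\frac{n}{2}-1}\}$, each strictly smaller than $|C_{j_0}(f)|$, whereas $u$ in the kernel would force $\chi_u(C_{j_0}(f))=|C_{j_0}(f)|$. Your route is heavier --- it requires the dual-bent hypothesis through Proposition \ref{dual}, which the paper's dimension proof does not actually use --- but it buys more: the three values you obtain for $\chi_u(C_{j_0}(f))$ are precisely the character sums the paper recomputes later (via the integral-basis argument) in the proof of Theorem \ref{even} to determine the weights, so your proof effectively establishes the dimension and the weight values in one pass.
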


\begin{proof}
	Let $\theta: \mathbb{F}_{3}^{n}\rightarrow\mathbb{F}_{3}^{3^{r-1}-3^{\frac{n}{2}-1}+3^{\frac{n}{2}}-1}$ be the map defined by $u \rightarrow c_u.$ Since, by definition we have $C_{j_0}(f) \subset B_{+}(f)$, then it is clear that $\big{(}B_{+}(f)\big{)}^{\perp} \subseteq \Kernel(\theta).$ Now, we will show that  $\Kernel(\theta)=\big{(}B_{+}(f)\big{)}^{\perp}$ by proving $\Kernel(\theta)\subseteq \big{(}B_{+}(f)\big{)}^{\perp}$. Let $u \in  \big{(}\big{(}B_{+}(f)\big{)}^{\perp}\big{)}^C \bigcap \Kernel(\theta) $. Clearly $u \ne 0$. If $u \notin \big{(}B_{+}(f)\big{)}^{\perp},$ then from Fact \ref{bal} we have $|\{x: x \in B_{+}(f)|u \cdot x=0 \}|=3^{r-1}$. On the other hand,  $ u \in \Kernel(\theta)$ implies $u\cdot x=0$ for all $x \in C_{j_0}(f)$. For $n\ge 2$ it is clear that $|C_{j_0}(f)|=3^{r-1}-3^{\frac{n}{2}-1}+3^{\frac{n}{2}}-1>3^{r-1}$.  But, $C_{j_0}(f) \subset B_{+}(f)$ implies $|\{x: x \in B_{+}(f)|u \cdot x=0 \}|>3^{r-1}$ which gives a contradiction. Hence, we prove that $\big{(}\big{(}B_{+}(f)\big{)}^{\perp}\big{)}^C \bigcap \Kernel(\theta)=\emptyset$ which shows that $\Kernel(\theta)\subseteq\big{(}B_{+}(f)\big{)}^{\perp}$. Since, $|B_{+}(f)|=3^{r}$, then $|\big{(}B_{+}(f)\big{)}^{\perp}|=3^{n-r}.$ Therefore, by the isomorphism  $\bar{\theta}: \mathbb{F}_{3}^{n}/\Kernel(\theta) \rightarrow \Image(\theta)$, we have $| \Image(\theta)|=3^{r}$.
\end{proof}

\begin{rem}\label{rem3}
Let $f:\mathbb{F}_{3}^{n}\rightarrow\mathbb{F}_{3}$  be a non-weakly regular dual-bent function with $f(0)=j_0$ and $B_{+}(f)$ be a vector space of dimension $r$ over $\F_3$. For some $i\in F_3$ let us consider the linear code $\sC_{C_{i}(f)}$ of length over $\F_3$ defined by $(\ref{defCode})$. As it can seen from the proof of Proposition \ref{kernel}, to guarantee that $\Dim(\sC_{C_{i}(f)})=r$, we need to have $|C_{i}(f)^{\star}|>3^{r-1}$, and from Equation \eqref{car+} it is possible if and only if $i=j_0$.
	\end{rem}

\begin{teo} \label{even}
	
	Let $n$ be an even integer and $f:\mathbb{F}_{3}^{n}\rightarrow\mathbb{F}_{3}$  be a non-weakly regular dual-bent function such that $f(0)=j_0$, and $f(x)=f(-x)$. Let $B_{+}(f)$  be an $r$-dimensional non-degenerate $\mathbb {F}_{3}$-vector space with $r\ge \frac{n}{2}+1.$  Then, the code $\mathcal {C}_{C_{j_0}(f)}$ whose codewords $c_{u}$ are defined by Equation (\ref{defCode}) is a three weight ternary linear code with parameters $[3^{r-1}-3^{\frac{n}{2}-1}+3^{\frac{n}{2}}-1,r,23^{r-2}]_{3}$. The codeword $c_{u}$ has zero-weight if $u \in  \big{(}B_{+}(f)\big{)}^{\perp}$ i.e., $u \in \Kernel(\theta)$. For $u \notin  \big{(}B_{+}(f)\big{)}^{\perp}$ the non-zero weight codewords are as follows.
	
	$$ wt (c_{u})=\left\{ \begin{array}{lll}
	23^{r-2}& \mbox{if $u \in B_{+}(f^*)$ and $f(u)=j_0$} ;\\
	2(3^{r-2}+3^{\frac{n}{2}-1}) & \mbox{if $u \in B_{+}(f^*)$ and $f(u) \ne j_0$};\\
	2(3^{r-2}-3^{\frac{n}{2}-2}+3^{\frac{n}{2}-1}) & \mbox{if $u \in B_{-}(f^*)$ }.
	\end{array}
	\right. $$
	Moreover, the weight distribution of $\mathcal {C}_{C_{j_0}(f)}$ is as in Table \ref{tabloo1}.
\end{teo}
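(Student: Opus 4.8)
The plan is to convert the weight computation into the evaluation of the integer character sums $\chi_u(C_{j_0}(f))$, read off the three weights, and then count how often each occurs among the $3^r$ codewords. By Proposition \ref{kernel} the code has dimension $r$, and $c_u=0$ exactly when $u\in(B_{+}(f))^{\perp}$, which accounts for the claimed zero-weight codewords. For $u\notin(B_{+}(f))^{\perp}$ I would specialize Equation \eqref{wts} to $i=j_0$. Since $f$ is of type $(+)$ we have $0\in B_{+}(f)$, and $f^{*}(0)=j_0$ by Proposition \ref{f*0}, so $0\in C_{j_0}(f)$; removing the zero element from both $|C_{j_0}(f)^{\star}|$ and $\chi_u(C_{j_0}(f)^{\star})$ collapses \eqref{wts} to
$$wt(c_u)=\frac{2}{3}\big(|C_{j_0}(f)|-\chi_u(C_{j_0}(f))\big),$$
with $|C_{j_0}(f)|=3^{r-1}-3^{\frac{n}{2}-1}+3^{\frac{n}{2}}$ from Equation \eqref{car+}. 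Everything now reduces to determining $\chi_u(C_{j_0}(f))$.

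\textbf{The three weights.} Writing out $S_{0}(f,u)$ and splitting $B_{+}(f)$ along \eqref{preim} gives $S_{0}(f,u)=\sum_{i=0}^{2}\chi_u(C_i(f))\,\epsilon_{3}^{i}$. As $n$ is even, $3^n\equiv1\bmod 4$, so Proposition \ref{dual} with Remark \ref{type} yields $S_{0}(f,u)=3^{\frac{n}{2}}\epsilon_{3}^{f(u)}$ for $u\in B_{+}(f^{*})$ and $S_{0}(f,u)=0$ for $u\in B_{-}(f^{*})$. Because $\{\epsilon_{3},\epsilon_{3}^{2}\}$ is an integral basis of $\Z[\epsilon_{3}]$, two integer combinations $\sum_i a_i\epsilon_{3}^{i}$ and $\sum_i b_i\epsilon_{3}^{i}$ coincide iff $a_i-b_i$ is independent of $i$; combining this with $\sum_{i}\chi_u(C_i(f))=\chi_u(B_{+}(f))=0$ (Fact \ref{bal}, valid since $u\notin(B_{+}(f))^{\perp}$) pins the sums down uniquely, namely $\chi_u(C_i(f))=3^{\frac{n}{2}}\delta_{i,f(u)}-3^{\frac{n}{2}-1}$ when $u\in B_{+}(f^{*})$ and $\chi_u(C_i(f))=0$ when $u\in B_{-}(f^{*})$. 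Setting $i=j_0$ and inserting these into the weight formula reproduces the three displayed weights; comparing them shows $2\cdot3^{r-2}$ is the smallest, hence the minimum distance $d$.

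\textbf{The weight distribution.} To fill in Table \ref{tabloo1} I would count cosets: the $3^r$ codewords are indexed by the cosets of $(B_{+}(f))^{\perp}$, and by Lemma \ref{sonn} both $B_{+}(f^{*})$ and $B_{-}(f^{*})$ are unions of such cosets, on each of which $f$ is constant. Using $f^{**}=f$, a coset inside $B_{+}(f^{*})$ carrying the $f$-value $i$ lies inside $C_i(f^{*})=\{\alpha\in B_{+}(f^{*}):f(\alpha)=i\}$, so the number of such cosets equals $|C_i(f^{*})|/3^{n-r}$. Since $f^{*}$ is itself non-weakly regular dual-bent, even, of type $(+)$ with $(f^{*})(0)=j_0$ and $(f^{*})^{*}(0)=j_0$, and $|B_{+}(f^{*})|=3^r$ by Proposition \ref{dualsize}, the derivation of \eqref{car+} applied to $f^{*}$ gives $|C_{j_0}(f^{*})|=3^{r-1}-3^{\frac{n}{2}-1}+3^{\frac{n}{2}}$ and $|C_{j_0\pm1}(f^{*})|=3^{r-1}-3^{\frac{n}{2}-1}$. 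Together with $|I_{0}^{+}(f)|=3^{2r-n}$ (the Parseval count inside Proposition \ref{dualsize}) and $|I_{0}^{-}(f)|=3^{r}-3^{2r-n}$, this produces the multiplicities of the three nonzero weights, after subtracting the single trivial coset (which has $f$-value $j_0$ yet contributes the zero codeword) from the $2\cdot3^{r-2}$ class. A final check that the multiplicities sum to $3^r$ confirms the table.

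\textbf{Main obstacle.} The three weight values fall out cleanly once the character sums are solved, so the genuinely delicate part is the frequency count. The crucial ideas there are the coset decomposition of $B_{\pm}(f^{*})$ from Lemma \ref{sonn} with $f$ constant on each coset, and the self-referential step of applying \eqref{car+} to the dual $f^{*}$ (legitimate thanks to $f^{**}=f$) to count the $f$-values over $B_{+}(f^{*})$. The easy-to-overlook bookkeeping point is that the trivial coset sits in the $f=j_0$ class while producing the zero codeword, so it must be removed from the minimum-weight count.
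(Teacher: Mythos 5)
Your proposal is correct and follows essentially the same route as the paper's own proof: the same weight formula \eqref{wts}, the same determination of $\chi_u(C_i(f))$ via Proposition \ref{dual} together with the integral-basis and balancedness (Fact \ref{bal}) argument, and the same coset count through Lemma \ref{sonn}, Proposition \ref{dualsize} and $f^{**}=f$ --- your step of ``applying the derivation of \eqref{car+} to $f^*$'' is exactly the paper's computation of $|C_i(f^*)|$ by orthogonality. One remark: your multiplicity for the weight $2\cdot 3^{r-2}$, namely $3^{2r-n-1}+2\cdot 3^{r-\frac{n}{2}-1}-1$, agrees with the paper's proof body (and with the check that the frequencies sum to $3^r$), whereas the entry printed in Table \ref{tabloo1} appears to contain a typo.
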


\begin{proof}

Let us now evaluate the non-zero weights of codewords in $\sC_{C_{j_0}(f)}$. For $c_{u}\in \mathcal {C}_{C_{j_0}(f)}$ we have the following.

	\begin{itemize}
		
		\item $u \in B_{+}(f)^{\perp}$
		\\
		By Equation \eqref{preim}, we have $ C_{j_0}(f) \subset B_{+}(f)$ which implies $wt(c_{u})=0.$
		\item $u \notin B_{+}(f)^{\perp} $ and $u \in B_{+}(f^*).$
		\\
		From Proposition \ref{dual} we have $3^{\frac{n}{2}}\epsilon_{3}^{f(u)}=\chi_u(C_0(f))+\chi_u(C_1(f))\epsilon_{3}+\chi_u(C_2(f))\epsilon_{3}^2$. Let $f(u)=j.$ Using the fact that $\{\epsilon_{3}^{j+1},\epsilon_{3}^{j+2}\}$ being an integral basis for $\Q(\epsilon_{3})$ we have $\chi_u(C_j(f))-3^{\frac{n}{2}}=\chi_u(C_{j+1}(f))=\chi_u(C_{j+2}(f)),$ where the summations $j+1$ and $j+2$ are modulo $3$. By Fact \ref{bal}, we also have $\chi_u(C_0(f))+\chi_u(C_1(f))+\chi_u(C_2(f))=0.$ Hence, we get $\chi_u(C_j(f))=3^{\frac{n}{2}}-3^{\frac{n}{2}-1}$ and $\chi_u(C_{j+1}(f))=\chi_u(C_{j+2}(f))=-3^{\frac{n}{2}-1}$. From Equation (\ref{wts}) we have $$wt(c_{u}) \in \{23^{r-2},2(3^{r-2}+3^{\frac{n}{2}-1})\}.$$
		
		\item  $u \notin B_{+}(f)^{\perp} $ and $u \in B_{-}(f^*).$
		\\
		From Proposition \ref{dual} we have $0=\chi_u(C_0(f))+\chi_u(C_1(f))\epsilon_{3}+\chi_u(C_2(f))\epsilon_{3}^2$. By similar arguments above, we have $\chi_u(C_0(f))=\chi_u(C_1(f))=\chi_u(C_2(f)).$ From Fact \ref{bal} we also have $\chi_u(C_0(f))+\chi_u(C_1(f))+\chi_u(C_2(f))=0.$ Hence, we get $\chi_u(C_0(f))=\chi_u(C_1(f))=\chi_u(C_2(f))=0$. From Equation (\ref{wts}) we have $$wt (c_{u})=2\left(3^{r-2}-3^{\frac{n}{2}-2}+3^{\frac{n}{2}-1}\right).$$

	\end{itemize}

For $u \notin  \big{(}B_{+}(f)\big{)}^{\perp}$, let us now evaluate the weight distribution  of $\mathcal {C}_{C_{j_0}(f)}$.
\begin{itemize}

\item $u \in B_{+}(f^*)$ and $f(u)=j_0$ \\
 By Proposition \ref{dualsize},  $ |B_{+}(f^*)|=3^r$.  Since $f$ is of type\;$(+)$, from Propositions \ref{f*0} and \ref{dual}, we have   $$3^{\frac{n}{2}}\epsilon_{3}^{j_0}=\sum_{\alpha \in B_{+}(f^*)} \epsilon_{3}^{f(x)}.$$ By orthogonality relations of character sums we have $|C_{j_0}(f^*)|=3^{r-1}-3^{\frac{n}{2}-1}+3^{\frac{n}{2}}$ and $|C_{j}(f^*)|=3^{r-1}-3^{\frac{n}{2}-1}$ for $j\neq j_0 \in \F_3$ . From Lemma \ref{sonn}, the restriction of $f$ into the subset $ \big{(}B_{+}(f)\big{)}^{\perp}$ is the constant $j_0$. Since the size of $\Kernel(\theta)$ is $3^{n-r}$ then dividing $3^{r-1}-3^{\frac{n}{2}-1}+3^{\frac{n}{2}}-3^{n-r}$ by $3^{n-r}$ we get  $E_{w_1}=3^{2r-n-1}-3^{r-\frac{n}{2}-1}+3^{r-\frac{n}{2}}-1$, where $w_1=23^{r-2}$.
\item $u \in B_{+}(f^*)$ and $f(u)\neq j_0$ \\
From the previous part we have $E_{w_2}=23^{2r-n-1}-23^{r-\frac{n}{2}-1}$, where $w_2=23^{r-2}+23^{\frac{n}{2}-1}$.
\item $u \in B_{-}(f^*)$  \\
Dividing $|B_{-}(f^*)|$ by $3^{n-r}$ we get $E_{w_3}=3^r-3^{2r-n}$, where $_3=23^{r-2}-23^{\frac{n}{2}-2}+23^{\frac{n}{2}-1}$.
\end{itemize}
\end{proof}

	\begin{table}[!ht] 
		\begin{center}
			\begin{tabular}{|c|c| }
				\hline
				Hamming weight $a$  &  Multiplicity $E_a $  \\ 
				\hline  \hline  
				0  & 1 \\
				\hline
				$23^{r-2}$  &  $3^{2r-n-1}+3^{r-\frac{n}{2}-1}$ \\
				\hline
				$	2(3^{r-2}+3^{\frac{n}{2}-1})$ &$ 23^{2r-n-1}-23^{r-\frac{n}{2}-1} $\\
				\hline
				$2(3^{r-2}-3^{\frac{n}{2}-2}+3^{\frac{n}{2}-1})$  &$3^r-3^{2r-n}$\\
				\hline
			\end{tabular}
		\end{center}	 \caption{ \label{tabloo1}   The weight distribution of  $\mathcal {C}_{C_{j_0}(f)}$ when $n$ is even.}	
	\end{table}   

\par Next, we verify Theorem \ref{even} by MAGMA program for the following non-weakly regular bent functions. 

\begin{example} 
	
 Let $F:\mathbb{F}_{3}^{6}\simeq\mathbb{F}_{3}^{4}\times\mathbb{F}_{3}\times\mathbb{F}_{3}\rightarrow\mathbb{F}_{3}$ be a non-weakly regular bent function defined by Equation (\ref{gmmf}) where $F(x_1,x_2,x_3,x_4,x_5,x_6)=f^{(x_6)}(x_1,x_2,x_3,x_4)+x_5x_6 $. Let $f^{(x_6)}:\mathbb{F}_{3}^{4}\rightarrow\mathbb{F}_{3}$ be weakly regular bent of type ($+$) for $x_6=0$ and of type ($-$) for $x_6=1,$ and $x_6=2$, where $f^{(0)}(x_1,x_2,x_3,x_4)=2x_1^2+2x_2^2 + x_3^2 + x_4^2$, $f^{(1)}(x_1,x_2,x_3,x_4)=f^{(2)}(x_1,x_2,x_3,x_4)=x_1^2+x_2^2 + 2x_3^2 + x_4^2$. Then, by employing the Langrange interpolation technique given in \cite[Theorem 2]{Cesmelioglu2013b}, we have $F(x_1,x_2,x_3,x_4,x_5,x_6)=(-1)\big{(}f^{(0)}(x_1,x_2,x_3,x_4)(x_6-1)(x_6-2)+(f^{(1)}(x_1,x_2,x_3,x_4)
 +x_5)(x_6)(x_6-2)+(f^{(2)}(x_1,x_2,x_3,x_4)+2x_5)(x_6)(x_6-1) \big{)}=2x_1^2x_6^2 + 2x_1^2 + 2x_2^2x_6^2 + 2x_2^2 + x_3^2x_6^2 + x_3^2 + x_4^2 + x_5x_6$ is non-weakly regular of type ($+$), where $W^{+}(F)=\{0\}$ and $B_{+}(F)=\mathbb{F}_{3}^{4}\times\{0\}\times\mathbb{F}_{3}$. By \cite[Proposition 2]{Cesmelioglu2013}, we have ${f^{(0)}}^*(x_1,x_2,x_3,x_4)=x_1^2+x_2^2 + 2x_3^2 + 2x_4^2$,\; ${f^{(1)}}^*(x_1,x_2,x_3,x_4)={f^{(2)}}^*(x_1,x_2,x_3,x_4)=2x_1^2+2x_2^2 + x_3^2 + 2x_4^2$. Moreover, from Equation \eqref{gmmfdual} we know that $F^{*}$ belongs to GMMF class. Then, by employing the Langrange interpolation to the functions ${f^{(0)}}^*$, ${f^{(1)}}^*$, and ${f^{(2)}}^*,$ we get $F^{*}(x)=x_1^2x_5^2 + x_1^2 + x_2^2x_5^2 + x_2^2+ 2x_3^2x_5^2 + 2x_3^2 + 2x_4^2 + 2x_5x_6$.
	\begin{itemize}
		\item $m=4,\;s=1$,\; $F(x)=F(-x)$ and  $j_0=0$ ;
		\item $r=m+s+\Dim(W^{+}(F))=4+1+0=5$;
		\item The set $\mathcal {C}_{C_0(F)}$ is a three-weight linear code with parameters $[98,5,54]_3$,  weight enumerator $1+ 32y^{54} + 162y^{66} + 48y^{72} $.
		
	\end{itemize}
\end{example}

\begin{example}

	Let $F:\mathbb{F}_{3}^{6}\simeq\mathbb{F}_{3}^{4}\times\mathbb{F}_{3}\times\mathbb{F}_{3}\rightarrow\mathbb{F}_{3}$ be a non-weakly regular bent function defined by Equation (\ref{gmmf}), where $F(x_1,x_2,x_3,x_4,x_5,x_6)=f^{(x_6)}(x_1,x_2,x_3,x_4)+x_5x_6 $. Let $f^{(x_6)}:\mathbb{F}_{3}^{4}\rightarrow\mathbb{F}_{3}$ be weakly regular bent of type ($+$) for $x_6=0$ and of type ($-$) for $x_6=1,$ and $x_6=2,$ where $f^{(0)}(x_1,x_2,x_3,x_4)=x_1^2+2x_2^2 + 2x_3^2 + x_4^2+1$, $f^{(1)}(x_1,x_2,x_3,x_4)=f^{(2)}(x_1,x_2,x_3,x_4)=2x_1^2+2x_2^2 + 2x_3^2 + x_4^2$. Then, by employing the Langrange interpolation, we have $F(x_1,x_2,x_3,x_4,x_5,x_6)=(-1)\big{(}f^{(0)}(x_1,x_2,x_3,x_4)(x_6-1)(x_6-2)+(f^{(1)}(x_1,x_2,x_3,x_4)
	+x_5)(x_6)(x_6-2)+(f^{(2)}(x_1,x_2,x_3,x_4)+2x_5)(x_6)(x_6-1) \big{)}=x_1^2x_6^2 + x_1^2  + 2x_2^2 + 2x_3^2 + x_4^2 + x_5x_6+ 2x_6^2+1 $ is non-weakly regular of type ($+$), where $W^{+}(F)=\{0\}$ and $B_{+}(F)=\mathbb{F}_{3}^{4}\times\{0\}\times\mathbb{F}_{3}$. By \cite[Proposition 2]{Cesmelioglu2013}, we have ${f^{(0)}}^*(x_1,x_2,x_3,x_4)=2x_1^2+x_2^2 + x_3^2 + 2x_4^2$,\; ${f^{(1)}}^*(x_1,x_2,x_3,x_4)={f^{(2)}}^*(x_1,x_2,x_3,x_4)=x_1^2+x_2^2 + x_3^2 + 2x_4^2$. Moreover, from Equation \eqref{gmmfdual} we know that $F^{*}$ belongs to GMMF class. Then, by employing the Langrange interpolation to the functions ${f^{(0)}}^*$, ${f^{(1)}}^*$, and ${f^{(2)}}^*,$ we get $F^{*}(x)=2x_1^2x_5^2 + 2x_1^2  + x_2^2+ 2x_5^2 + x_3^2 + 2x_4^2 + 2x_5x_6+1$.
	\begin{itemize}
		\item $m=4,\;s=1$,\; $F(x)=F(-x)$ and  $j_0=1$ ;
		\item $r=m+s+\Dim(W^{+}(F))=4+1+0=5$;
		\item The set $\mathcal {C}_{C_1(F)}$ is a three-weight linear code with parameters $[98,5,54]_3$,  weight enumerator $1+ 32y^{54} + 162y^{66} + 48y^{72} $.
		
	\end{itemize}
\end{example}

Let $n$ be odd. Then, by Proposition \ref{dualtype} $f^*$ is of type $(-)$. We next determine the $|C_i(f)|$ for each $i\in \F_3$. Put $f(0)=j_0$. For any $u \in \mathbb{F}_{3}^{n}$, using Equations (\ref{inversee}) and (\ref{gauss}) we have \begin{equation} \label{oddsize} \begin{array}{ll}
 -i3^{\frac{n}{2}}\epsilon_{3}^{f(u)}&= 3^{\frac{n-1}{2}}\left(\epsilon_{3}^{f(u)+2}-\epsilon_{3}^{f(u)+1}\right)\\
&= \sum_{\alpha \in B_{+}(f)} \epsilon_{3}^{f^*(\alpha)+\alpha \cdot u}-\sum_{\alpha \in B_{-}(f)} \epsilon_{3}^{f^*(\alpha)+\alpha \cdot u}. \end{array}  \end{equation}  If $u=0$, then by Proposition \ref{dual} and Equation (\ref{oddsize}), we have $0=|C_{j_0}(f)|\epsilon_{3}^{j_0}+\left(|C_{j_0+1}(f)|+3^{\frac{n-1}{2}}\right)\epsilon_{3}^{j_0+1}+\left(|C_{j_0+2}(f)|-3^{\frac{n-1}{2}}\right)\epsilon_{3}^{j_0+2}$.
Hence, we have $|C_{j_0}(f)|=|C_{j_0+1}(f)|+3^{\frac{n-1}{2}}=|C_{j_0+2}(f)|-3^{\frac{n-1}{2}},$ where the summations $j_0+1$ and $j_0+2$ are modulo $3$. On the other hand,  $|C_0(f)|+|C_1(f)|+|C_2(f)|=3^r$. Hence, we get \be \label{car++} |C_{j_0+i}(f)|=3^{r-1}-\left( \frac{i}{3} \right)3^{\frac{n-1}{2}} \ee for each $i \in \F_3.$

\par For $f(0)=j_0$ let us take $S=C_{j_0+2}(f)$. Observe that the linear code $\sC_{C_{j_0+2}(f)}$ of length $3^{r-1}+3^{\frac{n-1}{2}}$ over $\F_3$ defined by $(\ref{defCode})$ is at most $r$-dimensional.

\begin{prp} 
	The linear code $\sC_{C_{j_0+2}(f)}$ of length $3^{r-1}+3^{\frac{n-1}{2}}$ over $\F_3$ defined  by $(\ref{defCode})$ is  $r$-dimensional.
\end{prp}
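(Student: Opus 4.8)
The plan is to follow verbatim the strategy of the proof of Proposition \ref{kernel}, adapting only the single arithmetic inequality to the odd case. I would define the evaluation map $\theta:\F_{3}^{n}\rightarrow\F_{3}^{3^{r-1}+3^{\frac{n-1}{2}}}$ by $u\mapsto c_u$, so that $\sC_{C_{j_0+2}(f)}=\Image(\theta)$ and $\dim\sC_{C_{j_0+2}(f)}=\dim\Image(\theta)$. It then suffices to prove $\Kernel(\theta)=\big(B_{+}(f)\big)^{\perp}$: since $|B_{+}(f)|=3^{r}$ gives $\big|\big(B_{+}(f)\big)^{\perp}\big|=3^{n-r}$, the isomorphism $\F_{3}^{n}/\Kernel(\theta)\cong\Image(\theta)$ then yields $|\Image(\theta)|=3^{n}/3^{n-r}=3^{r}$, which is exactly the claim that the dimension is $r$.

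For the inclusion $\big(B_{+}(f)\big)^{\perp}\subseteq\Kernel(\theta)$ I would note that $C_{j_0+2}(f)\subset B_{+}(f)$ holds by Definition \ref{preimages}, so any $u\in\big(B_{+}(f)\big)^{\perp}$ satisfies $u\cdot x=0$ for every $x\in C_{j_0+2}(f)$, forcing $c_u=0$. For the reverse inclusion I would argue by contradiction, exactly as in Proposition \ref{kernel}: suppose $u\in\Kernel(\theta)$ with $u\notin\big(B_{+}(f)\big)^{\perp}$. Then $u\ne 0$, and since $B_{+}(f)$ is a subspace, Fact \ref{bal} shows the linear form $x\mapsto u\cdot x$ is balanced on $B_{+}(f)$, whence $|\{x\in B_{+}(f):u\cdot x=0\}|=3^{r-1}$.

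The one new ingredient is the counting inequality. Since $f^{*}(0)=j_0$ by Proposition \ref{f*0}, we have $0\in C_{j_0}(f)$, hence $0\notin C_{j_0+2}(f)$ and every element of $C_{j_0+2}(f)$ is non-zero; thus $C_{j_0+2}(f)^{\star}=C_{j_0+2}(f)$ and $u\in\Kernel(\theta)$ means $u\cdot x=0$ for all $x\in C_{j_0+2}(f)$. By Equation \eqref{car++} with $i=2$, using $\big(\tfrac{2}{3}\big)=-1$, we get $|C_{j_0+2}(f)|=3^{r-1}+3^{\frac{n-1}{2}}>3^{r-1}$, and because $C_{j_0+2}(f)\subset B_{+}(f)$ this forces $|\{x\in B_{+}(f):u\cdot x=0\}|\ge 3^{r-1}+3^{\frac{n-1}{2}}>3^{r-1}$, contradicting the balancedness count of the previous paragraph. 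Hence $\big(\big(B_{+}(f)\big)^{\perp}\big)^{C}\cap\Kernel(\theta)=\emptyset$, so $\Kernel(\theta)=\big(B_{+}(f)\big)^{\perp}$ and the dimension count concludes the proof. I expect no genuine obstacle here: the argument is structurally identical to the even case, and in fact cleaner, since the strict inequality $3^{r-1}+3^{\frac{n-1}{2}}>3^{r-1}$ holds for every admissible $n$ with no auxiliary condition analogous to the $n\ge 2$ requirement used in Proposition \ref{kernel}.
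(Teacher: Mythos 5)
Your proof is correct and is precisely the argument the paper intends: the paper omits this proof, stating only that it is ``similar to that of Proposition \ref{kernel},'' and your write-up is exactly that adaptation, including the two details that genuinely change in the odd case (the length is $|C_{j_0+2}(f)|$ itself since $0\notin C_{j_0+2}(f)$ by Proposition \ref{f*0}, and the inequality $|C_{j_0+2}(f)|=3^{r-1}+3^{\frac{n-1}{2}}>3^{r-1}$ from Equation \eqref{car++} needs no side condition).
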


\begin{pf}
	The proof is similar to that of Proposition \ref{kernel} and is ommited here.
\end{pf}
\begin{rem}\label{rem4}
	Let $f:\mathbb{F}_{3}^{n}\rightarrow\mathbb{F}_{3}$  be a non-weakly regular dual-bent function with $f(0)=j_0$, $n$ be odd, and $B_{+}(f)$ be a vector space of dimension $r$ over $\F_3$. For some $i\in F_3$, let us consider the linear code $\sC_{C_{i}(f)}$ of length over $\F_3$ defined by $(\ref{defCode})$. As in Remark \ref{rem3}, to guarantee that $\Dim(\sC_{C_{i}(f)})=r$, we need to have $|C_{i}(f)^{\star}|>3^{r-1}$, and from Equation \eqref{car++} it is possible if and only if $i=j_0+2$.
\end{rem}

\begin{teo} \label{odd}
	
	Let $n$ be an odd integer and $f:\mathbb{F}_{3}^{n}\rightarrow\mathbb{F}_{3}$  be a non-weakly regular dual-bent function such that $f(0)=j_0$, and $f(x)=f(-x)$. Let  $B_{+}(f)$  be an $r$-dimensional non-degenerate $\mathbb {F}_{3}$-vector space with $r \ge \frac{n+1}{2}.$ Then, the code $\mathcal {C}_{C_{j_0+2}(f)}$ whose codewords $c_{u}$ are defined by Equation (\ref{defCode}) is a three weight ternary linear code with parameters $[3^{r-1}+3^{\frac{n-1}{2}},r,23^{r-2}]_{3}$. The codeword $c_{u}$ has zero-weight if $u \in  \big{(}B_{+}(f)\big{)}^{\perp}$ i.e., $u \in \Kernel(\theta)$. For $u \notin  \big{(}B_{+}(f)\big{)}^{\perp}$ the non-zero weight codewords are as follows.
	
	$$ wt (c_{u})=\left\{ \begin{array}{lll}
	23^{r-2}& \mbox{if $u \in B_{-}(f^*)$ and $f(u)=j_0$};\\
	2\left(3^{r-2}+3^{\frac{n-3}{2}}\right) & \mbox{if $u \in B_{-}(f^*)$ and $f(u)=j_0+2$ or $u \in B_{+}(f^*)$ };\\
	2\left(3^{r-2}+23^{\frac{n-3}{2}}\right) & \mbox{if $u \in B_{-}(f^*)$ and $f(u)=j_0+1$ }.
	\end{array}
	\right. $$
	Moreover,  the weight distribution of $\mathcal {C}_{C_{j_0+2}(f)}$ is as in Table \ref{tabloo2}.
\end{teo}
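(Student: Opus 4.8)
The plan is to reduce everything to the evaluation of the single character sum $\chi_u(C_{j_0+2}(f))$ and then to count multiplicities coset by coset. First I would record the data already available: by \eqref{car++} with $\left(\frac{2}{3}\right)=-1$ the defining set has size $|C_{j_0+2}(f)|=3^{r-1}+3^{\frac{n-1}{2}}$, and since $f$ is of type $(+)$ we have $0\in C_{j_0}(f)$ (because $f^*(0)=j_0$ by Proposition \ref{f*0}), so $0\notin C_{j_0+2}(f)$ and the length is exactly $3^{r-1}+3^{\frac{n-1}{2}}$; the dimension is $r$ by the preceding proposition. Because $C_{j_0+2}(f)=-C_{j_0+2}(f)$, Equation \eqref{wts} gives $wt(c_u)=\tfrac{2}{3}\bigl(|C_{j_0+2}(f)|-\chi_u(C_{j_0+2}(f))\bigr)$, so the whole computation is driven by $\chi_u(C_{j_0+2}(f))$.

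Next I would compute the weights. For $u\in B_+(f)^\perp$ the codeword is $0$. For $u\notin B_+(f)^\perp$, Fact \ref{bal} gives the balance relation $\chi_u(C_0(f))+\chi_u(C_1(f))+\chi_u(C_2(f))=0$, and by Remark \ref{type} the type of $f^*(x)+x\cdot u$ is exactly the sign class of $u$. Writing $S_0(f,u)=\sum_{i}\chi_u(C_i(f))\epsilon_3^i$ and invoking Proposition \ref{dual} (the $n$ odd, $p\equiv3$ branch): if $u\in B_+(f^*)$ then $S_0(f,u)=0$, which together with balance and the integral basis forces $\chi_u(C_i(f))=0$ for all $i$; if $u\in B_-(f^*)$ then $S_0(f,u)=-i3^{\frac{n-1}{2}}\sqrt3\,\epsilon_3^{f(u)}=3^{\frac{n-1}{2}}(\epsilon_3^{f(u)+2}-\epsilon_3^{f(u)+1})$ by the Gauss sum \eqref{gauss}. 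Setting $j=f(u)$ and matching coefficients in the integral basis $\{\epsilon_3^{j+1},\epsilon_3^{j+2}\}$ together with the balance relation yields $\chi_u(C_j(f))=0$, $\chi_u(C_{j+1}(f))=-3^{\frac{n-1}{2}}$, $\chi_u(C_{j+2}(f))=3^{\frac{n-1}{2}}$. Feeding the value of $\chi_u(C_{j_0+2}(f))$ in each of the three possibilities $j=j_0,j_0+1,j_0+2$ (and the value $0$ for $u\in B_+(f^*)$) into the weight formula produces precisely the three stated weights; the smallest among them is $2\cdot3^{r-2}$, which is the minimum distance.

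For the weight distribution I would count preimages under $\theta\colon u\mapsto c_u$, whose kernel is $B_+(f)^\perp$ of size $3^{n-r}$, so each nonzero codeword is hit $3^{n-r}$ times. The sign class and the value $f(u)$ are constant along the relevant cosets: Lemma \ref{sonn} (the $p^n\equiv3$ case for $B_+(f)$) shows that $B_-(f^*)$ is a union of cosets $u_0+B_+(f)^\perp$ with $u_0\in I_0^-(f)$, on each of which $f\equiv f(u_0)$, and $0\in I_0^-(f)$, so the kernel itself is the coset on which $f\equiv j_0$. By Proposition \ref{dualsize} we have $|B_-(f^*)|=3^r$, hence $|B_+(f^*)|=3^n-3^r$. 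To split $B_-(f^*)$ by the value of $f$ I would apply Proposition \ref{dual} to the dual $f^*$ (using $f^{**}=f$ and $f^*(0)=j_0$) at $y=0$: this gives $\sum_{\alpha\in B_-(f^*)}\epsilon_3^{f(\alpha)}=i3^{\frac{n}{2}}\epsilon_3^{j_0}$, and the same integral-basis argument yields $|D_{j_0}(f^*)|=3^{r-1}$, $|D_{j_0+1}(f^*)|=3^{r-1}+3^{\frac{n-1}{2}}$, and $|D_{j_0+2}(f^*)|=3^{r-1}-3^{\frac{n-1}{2}}$. Dividing the appropriate counts by $3^{n-r}$ (and subtracting the kernel from the $f=j_0$ class) gives the entries of Table \ref{tabloo2}, and I would verify $1+E_{w_1}+E_{w_2}+E_{w_3}=3^r$ as a final sanity check.

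The main obstacle will be the bookkeeping around the shift by $j_0+2$: one must track how $\chi_u(C_{j_0+2}(f))$ sits relative to the indices $j,j+1,j+2$ determined by $f(u)$, and correspondingly how the three classes in $B_-(f^*)$ (according to the value of $f$) together with the whole of $B_+(f^*)$ merge into the single middle weight $2(3^{r-2}+3^{\frac{n-3}{2}})$. The other delicate point is ensuring that the passage from counts of $u$ to codeword multiplicities is legitimate, i.e.\ that $f$ and the sign class are genuinely constant on each coset of $B_+(f)^\perp$ inside $B_-(f^*)$ and that the kernel coincides with the $f=j_0$ coset; this is exactly what Lemma \ref{sonn} supplies, and without it the counting step would break down.
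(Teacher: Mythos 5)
Your proposal is correct and follows essentially the same route as the paper's proof: the weight formula \eqref{wts} driven by $\chi_u(C_{j_0+2}(f))$, the evaluation of the $\chi_u(C_i(f))$ via Proposition \ref{dual} combined with the Gauss sum, the integral basis $\{\epsilon_3^{j+1},\epsilon_3^{j+2}\}$ and the balance relation from Fact \ref{bal}, and the multiplicity count obtained by splitting $B_{-}(f^*)$ according to the value of $f$ (via Proposition \ref{dual} applied to $f^*$ at $0$, using $f^{**}=f$ and $f^*(0)=j_0$) and dividing by $|B_{+}(f)^{\perp}|=3^{n-r}$, with Lemma \ref{sonn} and Proposition \ref{dualsize} justifying the coset structure exactly as in the paper. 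Your added observations (that $0\notin C_{j_0+2}(f)$ so the length needs no correction, and the final check that the multiplicities sum to $3^r$) are harmless refinements rather than a different method.
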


\begin{proof}We now evaluate the non-zero weights of codewords in $\sC_{C_{j_0+2}(f)}$. For $c_{u}\in \sC_{C_{j_0+2}(f)}$, we have the following.

	\begin{itemize}
		\item $u \in B_{+}(f)^{\perp}$
		\\
		By Equation \eqref{preim}, we have $ C_{j_0+2}(f) \subset B_{+}(f)$ which implies $wt(c_{u})=0.$
		\item   $u \notin B_{+}(f)^{\perp} $ and $u \in B_{-}(f^*)$
		\\
		From Proposition \ref{dual} and Equation (\ref{oddsize}) we have  $$3^{\frac{n-1}{2}}\left(\epsilon_{3}^{f(u)+2}-\epsilon_{3}^{f(u)+1}\right)=\chi_u(C_0(f))+\chi_u(C_1(f))\epsilon_{3}+\chi_u(C_2(f))\epsilon_{3}^2.$$ Let $f(u)=j.$ Then, we have $\chi_u(C_j(f))=\chi_u(C_{j+1}(f))+3^{\frac{n-1}{2}}=\chi_u(C_{j+2}(f)) -3^{\frac{n-1}{2}},$ where the summations $j+1$ and $j+2$ are modulo $3$. By Fact \ref{bal}, we also have $\chi_u(C_0(f))+\chi_u(C_1(f))+\chi_u(C_2(f))=0.$ Hence, we get $\chi_u(C_j(f))=0$, $\chi_u(C_{j+1}(f))=-3^{\frac{n-1}{2}}$,\;$\chi_u(C_{j+2}(f))=3^{\frac{n-1}{2}}$. From Equation (\ref{wts}) we have $$wt(c_{u}) \in \{23^{r-2},2\left(3^{r-2}+3^{\frac{n-3}{2}}\right),2\left(3^{r-2}+23^{\frac{n-3}{2}}\right)\}.$$
		
		\item  $u \notin B_{+}(f)^\perp $ and $u \in B_{+}(f^*).$
		\\
		From Proposition \ref{dual} we have $0=\chi_u(C_0(f))+\chi_u(C_1(f))\epsilon_{3}+\chi_u(C_2(f))\epsilon_{3}^2$. Then, we have $\chi_u(C_0(f))=\chi_u(C_1(f))=\chi_u(C_2(f)).$ From Fact \ref{bal} we also have $\chi_u(C_0(f))+\chi_u(C_1(f))+\chi_u(C_2(f))=0.$ Hence, we get $\chi_u(C_0(f))=\chi_u(C_1(f))=\chi_u(C_2(f))=0$. From Equation (\ref{wts}) we have $$wt (c_{u})=2\left(3^{r-2}+3^{\frac{n-3}{2}}\right).$$
		
	\end{itemize}

For $u \notin  \big{(}B_{+}(f)\big{)}^{\perp}$, let us now evaluate the weight distribution of $\mathcal {C}_{C_{j_0+2}(f)}$.
\begin{itemize}

\item $u \in B_{-}(f^*)$ and $f(u)=j_0$ \\
 By Proposition \ref{dualsize},  $ |B_{-}(f^*)|=3^r$. 
Since $f$ is of type\;$(+)$, from Equation \ref{gauss} and Propositions \ref{f*0}, \ref{dual}, we have   $$i3^{\frac{n}{2}}\epsilon_{3}^{j_0}=3^{\frac{n-1}{2}}\sum_{j\in\mathbb{F}_3^{\star}}\left( \frac{j}{3} \right)\epsilon_3^{j_0+j}=\sum_{\alpha \in B_{-}(f^*)} \epsilon_{3}^{f(x)}.$$ By orthogonality relations of character sums we have $|D_{j_0}(f^*)|=3^{r-1}$ and $ |D_{j+j_0}(f^*)|=3^{r-1}+\left( \frac{j}{3} \right)3^{\frac{n-1}{2}})$ for all $j\neq 0 \in F_3$. From Lemma \ref{sonn},  the restriction of $f$ into the subset $ \big{(}B_{+}(f)\big{)}^{\perp}$ is the constant $j_0$. Dividing $3^{r-1}-3^{n-r}$ by $3^{n-r}$ we get  $E_{w_1}=3^{2r-n-1}-1$, where $w_1=23^{r-2}$.
\item $u \in B_{-}(f^*)$ and $f(u)=j_0+2$ or $u \in B_{+}(f^*)$ \\
From the arguments above we have $E_{w_2}=3^{r}-23^{2r-n-1}-3^{r-\frac{n+1}{2}}$, where $w_2=23^{r-2}+23^{\frac{n-3}{2}}$.
\item $u \in B_{-}(f^*)$  and $f(u)=j_0+1$ \\
From the arguments above we have $E_{w_3}=3^{2r-n-1}+3^{r-\frac{n+1}{2}}$, where $w_3=23^{r-2}+43^{\frac{n-3}{2}}$.

\end{itemize}
\end{proof}

	\begin{table}[!ht] 
		\begin{center}
			\begin{tabular}{|c|c| }
				\hline
				Hamming weight $a$  &  Multiplicity $E_a $  \\ 
				\hline  \hline  
				0  & 1 \\
				\hline
				$23^{r-2}$  &  $3^{2r-n-1}-1$ \\
				\hline
				$	2\left(3^{r-2}+3^{\frac{n-3}{2}}\right)$ &$ 3^{r}-23^{2r-n-1}-3^{r-\frac{n+1}{2}} $\\
				\hline
				$2\left(3^{r-2}+23^{\frac{n-3}{2}}\right)$  &$3^{2r-n-1}+3^{r-\frac{n+1}{2}}$\\
				\hline
			\end{tabular}
		\end{center}	 \caption{ \label{tabloo2}   The weight distribution of  $\mathcal {C}_{C_{j_0+2}(f)}$ when $n$ is odd.}	
	\end{table}

\par Next, we verify Theorem \ref{odd} by MAGMA program for the following non-weakly regular bent functions.

\begin{example} \label{ex12} 
	
	Let $F:\mathbb{F}_{3}^{7}\simeq\mathbb{F}_{3}^{5}\times\mathbb{F}_{3}\times\mathbb{F}_{3}\rightarrow\mathbb{F}_{3}$ be a non-weakly regular bent function defined by Equation (\ref{gmmf}), where $F(x_1,x_2,x_3,x_4,x_5,x_6,x_7)=f^{(x_7)}(x_1,x_2,x_3,x_4,x_5)+x_6x_7 $. Let $f^{(x_7)}:\mathbb{F}_{3}^{5}\rightarrow\mathbb{F}_{3}$ be weakly regular bent of type ($+$) for $x_7=0$ and of type ($-$) for $x_7=1,$ and $x_7=2,$ where $f^{(0)}(x_1,x_2,x_3,x_4)=2x_1^2+x_2^2 + 2x_3^2 + x_4^2+ x_5^2$, $f^{(1)}(x_1,x_2,x_3,x_4)=f^{(2)}(x_1,x_2,x_3,x_4)=x_1^2+x_2^2 + x_3^2 + x_4^2+ 2x_5^2$. Then, by employing the Langrange interpolation, we have $F(x_1,x_2,x_3,x_4,x_5,x_6,x_7)=(-1)\big{(}f^{(0)}(x_1,x_2,x_3,x_4,x_5)(x_7-1)(x_7-2)+(f^{(1)}(x_1,x_2,x_3,x_4,x_5)+x_6)(x_7)(x_7-2)\\
	+(f^{(2)}(x_1,x_2,x_3,x_4,x_5)+2x_6)(x_7)(x_7-1) \big{)}=2x_1^2x_7^2 + 2x_1^2 + x_2^2 + 2x_3^2x_7^2+ 2x_3^2 + x_4^2 + x_5^2x_7^2+ +x_5^2 + x_6x_7$ is non-weakly regular of type ($+$), where $W^{+}(F)=\{0\}$ and $B_{+}(F)=\mathbb{F}_{3}^{5}\times\{0\}\times\mathbb{F}_{3}$. By \cite[Proposition 2]{Cesmelioglu2013}, we have ${f^{(0)}}^*(x_1,x_2,x_3,x_4,x_5)=x_1^2+2x_2^2 + x_3^2 + 2x_4^2+ 2x_5^2$,\; ${f^{(1)}}^*(x_1,x_2,x_3,x_4,x_5)={f^{(2)}}^*(x_1,x_2,x_3,x_4,x_5)=2x_1^2+2x_2^2 + 2x_3^2 + 2x_4^2 + x_5^2$. Moreover, from Equation \eqref{gmmfdual} we know that $F^{*}$ belongs to GMMF class. Then, by employing the Langrange interpolation to the functions ${f^{(0)}}^*$, ${f^{(1)}}^*$, and ${f^{(2)}}^*,$ we get $F^{*}(x)=x_1^2x_6^2 + x_1^2  + 2x_2^2+ x_3^2x_6^2 + x_3^2 + 2x_4^2 + 2x_5^2x_6^2+2x_5^2+ 2x_6x_7$.

	\begin{itemize}
		\item $m=5,\;s=1$,\; $F(x)=F(-x)$ and  $j_0=0$ ;
		\item $r=m+s+\Dim(W^{+}(F))=5+1+0=6$;
		\item The set $\mathcal {C}_{C_2(F)}$ is a three-weight linear code with parameters $[270,6,162]_3$,  weight enumerator $1+ 80y^{162} + 558y^{180} + 90y^{198}$.
		
	\end{itemize}
\end{example}

\begin{example} \label{ex13} 
	
	Let $F:\mathbb{F}_{3}^{7}\simeq\mathbb{F}_{3}^{5}\times\mathbb{F}_{3}\times\mathbb{F}_{3}\rightarrow\mathbb{F}_{3}$ be a non-weakly regular bent function defined by Equation (\ref{gmmf}), where $F(x_1,x_2,x_3,x_4,x_5,x_6,x_7)=f^{(x_6,x_7)}(x_1,x_2,x_3)+x_4x_6+x_5x_7 $. Let $f^{(x_6,x_7)}:\mathbb{F}_{3}^{3}\rightarrow\mathbb{F}_{3}$ be weakly regular bent of type ($+$) for $(x_6,x_7) \in \{(0,0),(1,1),(2,2)\}$ and of type ($-$) for $(x_6,x_7) \in \{(0,1),(0,2),(1,0),(2,0),(1,2),\\
	(2,1)\},$ where $f^{(0,0)}(x_1,x_2,x_3)=2x_1^2+x_2^2 + x_3^2 + 2$,\; $f^{(1,1)}(x_1,x_2,x_3)=f^{(2,2)}(x_1,x_2,x_3)\\
	=x_1^2+2x_2^2 + x_3^2 $,\;$f^{(0,1)}(x_1,x_2,x_3)=f^{(0,2)}(x_1,x_2,x_3)=x_1^2+x_2^2 + x_3^2 $,\;$f^{(1,0)}(x_1,x_2,x_3)=f^{(2,0)}(x_1,x_2,x_3)=2x_1^2+2x_2^2 + x_3^2 $,\; $f^{(1,2)}(x_1,x_2,x_3)=f^{(2,1)}(x_1,x_2,x_3)=2x_1^2+x_2^2 + 2x_3^2 $. Then, by employing the Langrange interpolation, we have $F(x_1,x_2,x_3,x_4,x_5,x_6 \\
	,x_7)=\big{(}f^{(0,0)}(x_1,x_2,x_3)(x_6-1)(x_6-2)(x_7-1)(x_7-2)+(f^{(1,1)}(x_1,x_2,x_3)+x_4+x_5)(x_6)(x_6-2)(x_7)(x_7-2)+(f^{(2,2)}(x_1,x_2,x_3)+2x_4+2x_5)(x_6)(x_6-1)(x_7)(x_7-1)+(f^{(0,1)}(x_1,x_2,x_3)+x_5)(x_6-1)(x_6-2)(x_7)(x_7-2)+(f^{(0,2)}(x_1,x_2,x_3)+2x_5)(x_6-1)(x_6-2)(x_7)(x_7-1)  +(f^{(1,0)}(x_1,x_2,x_3)+x_4)(x_6)(x_6-2)(x_7-1)(x_7-2) +(f^{(2,0)}(x_1,x_2,x_3)+2x_4)(x_6)(x_6-1)(x_7-1)(x_7-2) +(f^{(1,2)}(x_1,x_2,x_3)+x_4+2x_5)(x_6)(x_6-2)(x_7)(x_7-1) +(f^{(2,1)}(x_1,x_2,x_3)+2x_4+x_5)(x_6)(x_6-1)(x_7)(x_7-2)\big{)}=2x_1^2x_6^2x_7^2 + x_1^2x_6x_7  + 2x_1^2x_7^2+2x_1^2+x_2^2x_6^2x_7^2+x_2^2x_6^2+2x_2^2x_6x_7+x_2^2+2x_3^2x_6^2x_7^2 + x_3^2x_6x_7+x_3^2+2x_6^2x_7^2+x_6^2+ x_7^2+ x_4x_6+ x_5x_7+2$ is non-weakly regular of type ($+$), where $W^{+}(F)=\{(0,0),(1,1),(2,2)\}$ and $B_{+}(F)=\mathbb{F}_{3}^{5}\times\{(0,0),(1,1),(2,2)\}\times\mathbb{F}_{3}$. By \cite[Proposition 2]{Cesmelioglu2013}, we have ${f^{(0,0)}}^*(x_1,x_2,x_3)=x_1^2+2x_2^2 + 2x_3^2+2$,\; ${f^{(1,1)}}^*(x_1,x_2,x_3)={f^{(2,2)}}^*(x_1,x_2,x_3)=2x_1^2+x_2^2 + 2x_3^2$, \; ${f^{(0,1)}}^*(x_1,x_2,x_3)={f^{(0,2)}}^*(x_1,x_2,x_3)=2x_1^2+2x_2^2 + 2x_3^2$, \; ${f^{(1,0)}}^*(x_1,x_2,x_3)={f^{(2,0)}}^*(x_1,x_2,x_3)=x_1^2+x_2^2 + 2x_3^2$, \; ${f^{(1,2)}}^*(x_1,x_2,x_3)={f^{(2,1)}}^*(x_1,x_2,x_3)=x_1^2+2x_2^2 + x_3^2$. Moreover, from Equation \eqref{gmmfdual} we know that $F^{*}$ belongs to GMMF class. Then, by employing the Langrange interpolation to the functions ${f^{(i,j)}}^*$, where $(i,j) \in \F_3^2$, we get 
	$F^{*}(x)=x_1^2x_4^2x_5^2 + 2x_1^2x_4x_5+x_1^2x_5^2+x_1^2 +2x_2^2x_4^2x_5^2+2x_2^2x_4^2+2x_2^2x_4x_5 + 2x_2^2+ x_3^2x_4^2x_5^2 + 2x_3^2x_4x_5+2x_4^2x_5^2 + 2x_3^2 + x_4^2 + 2x_4x_6+x_5^2+ 2x_5x_7+2$.

	\begin{itemize}
		\item $m=3,\;s=2$,\; $F(x)=F(-x)$ and  $j_0=2$ ;
		\item $r=m+s+\Dim(W^{+}(F))=3+2+1=6$;
		\item The set $\mathcal {C}_{C_1(f)}$ is a three-weight linear code with parameters $[270,6,162]_3$,  weight enumerator $1+ 80y^{162} + 558y^{180} + 90y^{198}$.
		
	\end{itemize}
\end{example}

\section{Three-weight linear codes on $\mathbf{B_{-}(f)}$}\label{SectionConstruction-}

\par In Section \ref{SectionConstruction-}, we assume $B_{-}(f)$ is an $\mathbb {F}_{3}$-vector space with $\dim(B_{-}(f))\ge \lfloor \frac{n}{2}\rfloor+1.$ Put $\dim(B_{-}(f))=r$. Observe that $0 \in B_{-}(f)$ implies $f$ is of type $(-)$. Let $n$ be even. Then, by Proposition \ref{dualtype} $f^*$ is also of type $(-)$. We next determine the $|D_i(f)|$ for each $i\in \F_3$. Put $f(0)=j_0$. Then, by Proposition \ref{dual} we have $-3^{\frac{n}{2}}\epsilon_{3}^{j_0}=|D_0(f)|+|D_1(f)|\epsilon_{3}+|D_2(f)|\epsilon_{3}^2$. Then, by similar arguments as in Section \ref{SectionConstruction}, we have $|D_{j_0}(f)|+3^{\frac{n}{2}}=|D_{j_0+1}(f)|=|D_{j_0+2}(f)|.$ On the other hand, we have $|D_0(f)|+|D_1(f)|+|D_2(f)|=3^r$. Hence, we get \be \label{car-}|D_{j_0}(f)|=3^{r-1}+3^{\frac{n}{2}-1}-3^{\frac{n}{2}},\;\; |D_{j_0+1}(f)|=|D_{j_0+2}(f)|=3^{r-1}+3^{\frac{n}{2}-1}.\ee

\par Let us take $S=D_{j_0+2}(f)$. Observe that the linear code $\sC_{D_{j_0+2}(f)}$ of length $3^{r-1}+3^{\frac{n}{2}-1}$ over $\F_3$ defined by $(\ref{defCode})$ is at most $r$-dimensional.

\begin{prp}\label{ker1}
	The linear code $\sC_{D_{j_0+2}(f)}$ of length $3^{r-1}+3^{\frac{n}{2}-1}$ over $\F_3$ defined by (\ref{defCode}) is a $r$-dimensional subspace of $\mathbb{F}_{3}^{3^{r-1}+3^{\frac{n}{2}-1}}$.
\end{prp}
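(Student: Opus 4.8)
The plan is to mirror the argument of Proposition \ref{kernel}, since the only structural change is that the defining set now lies inside $B_{-}(f)$ rather than $B_{+}(f)$. First I would introduce the $\F_3$-linear evaluation map $\theta:\mathbb{F}_{3}^{n}\rightarrow\mathbb{F}_{3}^{3^{r-1}+3^{\frac{n}{2}-1}}$ sending $u\mapsto c_{u}$, so that $\Image(\theta)=\sC_{D_{j_0+2}(f)}$ and the dimension of the code equals $n-\dim\Kernel(\theta)$. The goal is therefore to identify $\Kernel(\theta)=\big(B_{-}(f)\big)^{\perp}$; since $B_{-}(f)$ is an $r$-dimensional $\F_3$-subspace we have $|B_{-}(f)|=3^{r}$, hence $|\big(B_{-}(f)\big)^{\perp}|=3^{n-r}$, and the first isomorphism theorem via $\bar\theta:\mathbb{F}_{3}^{n}/\Kernel(\theta)\rightarrow\Image(\theta)$ will give $|\Image(\theta)|=3^{r}$.

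One inclusion is immediate: since $D_{j_0+2}(f)\subset B_{-}(f)$ by Definition \ref{preimages}, any $u\in\big(B_{-}(f)\big)^{\perp}$ annihilates every coordinate $x\in D_{j_0+2}(f)$, so $\big(B_{-}(f)\big)^{\perp}\subseteq\Kernel(\theta)$. For the reverse inclusion I would argue by contradiction, taking a nonzero $u\in\Kernel(\theta)$ with $u\notin\big(B_{-}(f)\big)^{\perp}$. Applying Fact \ref{bal} to the subspace $B_{-}(f)$, the linear functional $x\mapsto u\cdot x$ is balanced on $B_{-}(f)$, so $|\{x\in B_{-}(f)\mid u\cdot x=0\}|=3^{r-1}$.

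The crux is the cardinality count that forces the contradiction. Because $f$ is of type $(-)$ we have $0\in B_{-}(f)$, and Proposition \ref{f*0} gives $f^{*}(0)=j_0$, so $0\in D_{j_0}(f)$; consequently $D_{j_0+2}(f)$ contains no zero element and its size is exactly $|D_{j_0+2}(f)|=3^{r-1}+3^{\frac{n}{2}-1}$ by Equation \eqref{car-}. For $n\ge 2$ this strictly exceeds $3^{r-1}$. But $u\in\Kernel(\theta)$ means $u\cdot x=0$ for every $x\in D_{j_0+2}(f)\subset B_{-}(f)$, whence $|\{x\in B_{-}(f)\mid u\cdot x=0\}|\ge|D_{j_0+2}(f)|>3^{r-1}$, contradicting the balanced count above. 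Thus no such $u$ exists and $\Kernel(\theta)=\big(B_{-}(f)\big)^{\perp}$.

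I do not anticipate a genuine obstacle here: the entire argument is powered by the trivial inequality $3^{r-1}+3^{\frac{n}{2}-1}>3^{r-1}$, which holds throughout the hypothesized range, together with the observation that $0\notin D_{j_0+2}(f)$ so that the length equals $|D_{j_0+2}(f)|$ with no subtraction of a coordinate. Once $\Kernel(\theta)=\big(B_{-}(f)\big)^{\perp}$ is established, the dimension count $\dim\sC_{D_{j_0+2}(f)}=r$ follows at once. Since the reasoning is verbatim parallel to Proposition \ref{kernel}, I would in fact abbreviate it, noting only the two points where the $B_{-}(f)$ setting and the value $j_0+2$ enter.
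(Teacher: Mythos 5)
Your proposal is correct and takes essentially the same route as the paper, which omits the proof of Proposition \ref{ker1} entirely by deferring to the argument of Proposition \ref{kernel}; your writeup is precisely that adaptation, with the kernel identification $\Kernel(\theta)=\big(B_{-}(f)\big)^{\perp}$ forced by Fact \ref{bal} and the inequality $|D_{j_0+2}(f)|=3^{r-1}+3^{\frac{n}{2}-1}>3^{r-1}$. Your additional observation that $0\in D_{j_0}(f)$ (via Proposition \ref{f*0}), so that $0\notin D_{j_0+2}(f)$ and the code length is exactly $|D_{j_0+2}(f)|$ with no coordinate removed, is a correct and worthwhile detail that the paper leaves implicit.
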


\begin{proof}
	The proof is similar to that of  Proposition \ref{kernel} and is ommited here.
\end{proof}

	\begin{rem}\label{rem5}
		Let $f:\mathbb{F}_{3}^{n}\rightarrow\mathbb{F}_{3}$  be a non-weakly regular dual-bent function with $f(0)=j_0$, $n$ be even, and $B_{-}(f)$ be a vector space of dimension $r$ over $\F_3$. For some $i\in F_3$, let us consider the linear code $\sC_{D_{i}(f)}$ over $\F_3$ defined by $(\ref{defCode})$. As in Remark \ref{rem3}, to guarantee that $\Dim(\sC_{D_{i}(f)})=r$ we need to have $|D_{i}(f)^{\star}|>3^{r-1}$ and from Equation \eqref{car-} it is possible if and only if $i\ne j_0$.
	\end{rem}

Note that we can also set $S=D_{j_0+1}(f)$ so that Proposition \ref{ker1} holds. Since, it requires similar calculations we only consider one of them to avoid repetition.
Since $B_{-}(f)$ is a vector space and $f^*(x)=f^*(-x)$, then we have $D_i(f)=-D_i(f)$ for all $i \in \F_3$ which implies $\chi_u(D_i(f)) \in \Z$ for all $i\in \F_3$, $u\in \F_3^n$. Then, by Equation (\ref{wt}), for all $i\in \F_3$ we have \be \label{wts-} wt(c_{u})=\frac{2}{3}\left(|D_i(f)^{\star}|-\chi_u(D_i(f)^{\star})\right).\ee

\begin{teo} \label{even-}
	
	Let $n$ be an even integer, $f:\mathbb{F}_{3}^{n}\rightarrow\mathbb{F}_{3}$  be a non-weakly regular dual-bent function such that $f(0)=j_0$ and $f(x)=f(-x)$. Let  $B_{-}(f)$  be an $r$-dimensional non-degenerate $\mathbb {F}_{3}$-vector space with $r\ge \frac{n}{2}+1.$ Then, the code $\mathcal {C}_{D_{j_0+2}(f)}$ whose codewords $c_{u}$ are defined by Equation (\ref{defCode}) is a three weight ternary linear code with parameters $[3^{r-1}+3^{\frac{n}{2}-1},r,23^{r-2}]_{3}$. The codeword $c_{u}$ has zero-weight if $u \in  \big{(}B_{-}(f)\big{)}^{\perp}$ i.e., $u \in \Kernel(\theta)$. For $u \notin  \big{(}B_{-}(f)\big{)}^{\perp}$ the non-zero weight codewords are as follows.
	
	$$ wt (c_{u})=\left\{ \begin{array}{lll}
	23^{r-2}& \mbox{if $u \in B_{-}(f^*)$ and $f(u) \ne j_0+2$};\\
	2\left(3^{r-2}+3^{\frac{n}{2}-1}\right) & \mbox{if $u \in B_{-}(f^*)$ and $f(u)=j_0+2$ };\\
	2\left(3^{r-2}+3^{\frac{n}{2}-2}\right)& \mbox{if $u \in B_{+}(f^*)$}.
	\end{array}
	\right. $$
	Moreover,  the weight distribution of $\mathcal {C}_{D_{j_0+2}(f)}$ is as in Table \ref{tabloo3}.
\end{teo}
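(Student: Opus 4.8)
The plan is to mirror the proof of Theorem \ref{even} line by line, interchanging the roles of $B_{+}(f)$, $S_0$, type $(+)$ with those of $B_{-}(f)$, $S_1$, type $(-)$. Since $B_{-}(f)$ is a vector space it contains $0$, so $f$ is of type $(-)$; as $n$ is even we have $p^n\equiv 1\pmod{4}$, whence Proposition \ref{dualtype} forces $f^*$ to be of type $(-)$ as well, and $f^*(0)=j_0$ by Proposition \ref{f*0}. The defining set $D_{j_0+2}(f)$ avoids $0$ (because $0\in D_{j_0}(f)$), so $D_{j_0+2}(f)^{\star}=D_{j_0+2}(f)$ and the length is $|D_{j_0+2}(f)|=3^{r-1}+3^{\frac{n}{2}-1}$ by Equation \eqref{car-}.

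First I would compute the three nonzero weights through Equation \eqref{wts-} with $i=j_0+2$. The key identity is $\chi_u(D_0(f))+\chi_u(D_1(f))\epsilon_3+\chi_u(D_2(f))\epsilon_3^2=S_1(f,u)$; by Proposition \ref{dual} (with $p^n\equiv 1$) and Remark \ref{type}, the right side equals $-3^{\frac{n}{2}}\epsilon_3^{f(u)}$ when $u\in B_{-}(f^*)$ and vanishes when $u\in B_{+}(f^*)$. Expanding in the integral basis $\{\epsilon_3^{f(u)+1},\epsilon_3^{f(u)+2}\}$ and imposing $\chi_u(D_0(f))+\chi_u(D_1(f))+\chi_u(D_2(f))=0$ (Fact \ref{bal}, valid since $u\notin B_{-}(f)^{\perp}$) determines all three character sums. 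Reading off $\chi_u(D_{j_0+2}(f))$ in the subcases $f(u)=j_0+2$, $f(u)\in\{j_0,j_0+1\}$, and $u\in B_{+}(f^*)$, then substituting into \eqref{wts-}, gives the three displayed weights; the case $u\in B_{-}(f)^{\perp}$ yields weight $0$ because $D_{j_0+2}(f)\subset B_{-}(f)$.

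For the weight distribution I would invoke Proposition \ref{dualsize} to get $|B_{-}(f^*)|=3^r$, hence $|B_{+}(f^*)|=3^n-3^r$. To split $B_{-}(f^*)$ by the value of $f$, I would evaluate $\sum_{\alpha\in B_{-}(f^*)}\epsilon_3^{f(\alpha)}=S_1(f^*,0)$: applying Proposition \ref{dual} to the dual-bent function $f^*$ and using that $f^{**}=f$ is of type $(-)$ gives $S_1(f^*,0)=-3^{\frac{n}{2}}\epsilon_3^{j_0}$, and the same integral-basis argument produces $|D_{j_0}(f^*)|=3^{r-1}-2\cdot3^{\frac{n}{2}-1}$ and $|D_{j_0+1}(f^*)|=|D_{j_0+2}(f^*)|=3^{r-1}+3^{\frac{n}{2}-1}$, where $D_i(f^*)=\{\alpha\in B_{-}(f^*): f(\alpha)=i\}$. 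Each nonzero codeword has exactly $|\Kernel(\theta)|=|B_{-}(f)^{\perp}|=3^{n-r}$ preimages, so the multiplicities come from dividing these cardinalities by $3^{n-r}$, after removing from the weight-$2\cdot3^{r-2}$ class the $3^{n-r}$ kernel vectors, which lie in $B_{-}(f^*)$ with $f\equiv f(0)=j_0$ by Lemma \ref{sonn}. Checking that the four multiplicities sum to $3^r$ confirms Table \ref{tabloo3}, and positivity of the weight-$2\cdot3^{r-2}$ count (using $r\ge \frac{n}{2}+1$) shows the minimum distance is $2\cdot3^{r-2}$.

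The main obstacle is bookkeeping rather than ideas: correctly applying Proposition \ref{dual} to $f^*$ instead of $f$ while tracking the type flips through $f^{**}=f$, and attributing the kernel vectors to the correct weight class (here zero-weight, not minimum-weight) so that no double counting occurs when passing from counts of $u$ to codeword multiplicities.
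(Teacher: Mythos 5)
Your proposal is correct and follows essentially the same route as the paper's own proof: the weights are obtained from Proposition \ref{dual}, Fact \ref{bal} and the integral-basis argument applied to $\chi_u(D_i(f))$ via Equation \eqref{wts-}, and the multiplicities from Proposition \ref{dualsize}, the expansion of $\sum_{\alpha\in B_{-}(f^*)}\epsilon_3^{f(\alpha)}=-3^{n/2}\epsilon_3^{j_0}$, and Lemma \ref{sonn} to place the $3^{n-r}$ kernel vectors in the class $f(u)=j_0$ before dividing by $3^{n-r}$. Your extra checks (that $0\notin D_{j_0+2}(f)$, that the multiplicities sum to $3^r$, and that $E_{2\cdot 3^{r-2}}>0$ so the minimum distance is attained) are consistent with, and slightly more careful than, the paper's write-up.
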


\begin{proof}
We now evaluate the non-zero weights of codewords in $\sC_{D_{j_0+2}(f)}$. For $c_{u}\in \sC_{D_{j_0+2}(f)}$, we have the following.

	\begin{itemize}
		\item $u \in B_{-}(f)^{\perp}$
		\\
		By Equation \eqref{preim}, we have $ D_{j_0+2}(f) \subset B_{-}(f)$ which implies $wt(c_{u})=0.$
		
		\item $u \notin B_{-}(f)^{\perp} $ and $u \in B_{-}(f^*).$
		\\
		From Proposition \ref{dual} we have $-3^{\frac{n}{2}}\epsilon_{3}^{f(u)}=\chi_u(D_0(f))+\chi_u(D_1(f))\epsilon_{3}+\chi_u(D_2(f))\epsilon_{3}^2$. Let $f(u)=j.$ Then we have $\chi_u(D_j(f))+3^{\frac{n}{2}}=\chi_u(D_{j+1}(f))=\chi_u(D_{j+2}(f)),$ where the summations $j+1$ and $j+2$ are modulo $3$. From Fact \ref{bal} we also have $\chi_u(D_0(f))+\chi_u(D_1(f))+\chi_u(D_2(f))=0.$ Hence, we have $\chi_u(D_j(f))=-23^{\frac{n}{2}-1}$ and $\chi_u(D_{j+1}(f))=\chi_u(D_{j+2}(f))=3^{\frac{n}{2}-1}$. From Equation (\ref{wts-}) we have $$wt(c_{u}) \in \{23^{r-2},2(3^{r-2}+3^{\frac{n}{2}-1})\}.$$
		
		\item  $u \notin B_{-}(f)^{\perp} $ and $u \in B_{+}(f^*).$
		\\
		From Proposition \ref{dual} we have $0=\chi_u(D_0(f))+\chi_u(D_1(f))\epsilon_{3}+\chi_u(D_2(f))\epsilon_{3}^2$. Then we have $\chi_u(D_0(f))=\chi_u(D_1(f))=\chi_u(D_2(f)).$ From Fact \ref{bal} we also have $\chi_u(D_0(f))+\chi_u(D_1(f))+\chi_u(D_2(f))=0.$ Hence, we get $\chi_u(D_0(f))=\chi_u(D_1(f))=\chi_u(D_2(f))=0$. From Equation (\ref{wts-}) we have $$wt (c_{u})=2\left(3^{r-2}+3^{\frac{n}{2}-2}\right).$$

	\end{itemize}

For $u \notin  \big{(}B_{-}(f)\big{)}^{\perp}$, let us now evaluate the weight distribution of $\mathcal {C}_{D_{j_0+2}(f)}$.
\begin{itemize}

\item  $u \in B_{-}(f^*)$ and $f(u) \ne j_0+2$ \\
 By Proposition \ref{dualsize}, $ |B_{-}(f^*)|=3^r$.  Since $f$ is of type\;$(-)$, from Propositions \ref{f*0} and \ref{dual}, we have   $$-3^{\frac{n}{2}}\epsilon_{3}^{j_0}=\sum_{\alpha \in B_{-}(f^*)} \epsilon_{3}^{f(x)}.$$ By orthogonality relations of character sums we have $|D_{j_0}(f^*)|=3^{r-1}+3^{\frac{n}{2}-1}-3^{\frac{n}{2}}$ and $|D_{j}(f^*)|=3^{r-1}+3^{\frac{n}{2}-1}$ for $j\neq j_0 \in \F_3$. Then, $|D_{j_0}(f^*)|+|D_{j_0+1}(f^*)|=23^{r-1}+23^{\frac{n}{2}-1}-3^{\frac{n}{2}}$ From Lemma \ref{sonn},  the restriction of $f$ into the subset $ \big{(}B_{-}(f)\big{)}^{\perp}$ is the constant $j_0$. Dividing $23^{r-1}-3^{\frac{n}{2}-1}-3^{n-r}$ by $3^{n-r}$ we get  $E_{w_1}=23^{2r-n-1}-3^{r-\frac{n}{2}-1}-1$, where $w_1=23^{r-2}$.
\item $u \in B_{-}(f^*)$ and $f(u)=j_0+2$ \\
Since, $|D_{j_0+2}(f^*)|=3^{r-1}+3^{\frac{n}{2}-1}$ then we have $E_{w_2}=3^{2r-n-1}+3^{r-\frac{n}{2}-1}$, where $w_2=23^{r-2}+23^{\frac{n}{2}-1}$.
\item $u \in B_{+}(f^*)$  \\
Dividing $|B_{+}(f^*)|$ by $3^{n-r}$ we get $E_{w_3}=3^r-3^{2r-n}$, where $w_3=23^{r-2}+23^{\frac{n}{2}-2}$.

\end{itemize}
\end{proof}

	\begin{table}[!ht] 
		\begin{center}
			\begin{tabular}{|c|c| }
				\hline
				Hamming weight $a$  &  Multiplicity $E_a $  \\ 
				\hline  \hline  
				0  & 1 \\
				\hline
				$23^{r-2}$  &  $23^{2r-n-1}-3^{r-\frac{n}{2}-1}-1$ \\
				\hline
				$2\left(3^{r-2}+3^{\frac{n}{2}-1}\right)$ & $ 3^{2r-n-1}+3^{r-\frac{n}{2}-1} $\\
				\hline
				$ 2\left(3^{r-2}+3^{\frac{n}{2}-2}\right)$  & $3^r-3^{2r-n}$\\
				\hline
			\end{tabular}
		\end{center}	 \caption{ \label{tabloo3}   The weight distribution of  $\mathcal {C}_{D_{j_0+2}(f)}$  when $n$ is even.}	
	\end{table}

\par Next, we verify Theorem \ref{even-} by MAGMA program for the following ternary non-weakly regular bent function.
\begin{example}	
	Let $F:\mathbb{F}_{3}^{8}\simeq\mathbb{F}_{3}^{6}\times\mathbb{F}_{3}\times\mathbb{F}_{3}\rightarrow\mathbb{F}_{3}$ be a non-weakly regular bent function defined by Equation (\ref{gmmf}), where $F(x_1,x_2,x_3,x_4,x_5,x_6,x_7,x_8)=f^{(x_8)}(x_1,x_2,x_3,x_4,x_5,\\
	x_6)+x_7x_8 $. Let $f^{(x_8)}:\mathbb{F}_{3}^{6}\rightarrow\mathbb{F}_{3}$ be weakly regular bent of type ($-$) for $x_8=0$ and of type ($+$) for $x_8=1,$ and $x_8=2$, where $f^{(0)}(x_1,x_2,x_3,x_4,x_5,x_6)=2x_1^2+2x_2^2 + x_3^2 + x_4^2+x_5^2+x_6^2$, $f^{(1)}(x_1,x_2,x_3,x_4,x_5,x_6)=f^{(2)}(x_1,x_2,x_3,x_4,x_5,x_6)=x_1^2+2x_2^2 + 2x_3^2 + 2x_4^2+x_5^2+x_6^2$. Then, by employing Langrange interpolation, we have $F(x_1,x_2,x_3,x_4,x_5,x_6,x_7,x_8)=(-1)\big{(}f^{(0)}(x_1,x_2,x_3,x_4,x_5,x_6)(x_8-1)(x_8-2)+(f^{(1)}(x_1,x_2,x_3,x_4,x_5,x_6)
	+x_7)(x_8)(x_8-2)+(f^{(2)}(x_1,x_2,x_3,x_4,x_5,x_6)+2x_7)(x_8)(x_8-1) \big{)}=2x_1^2x_8^2 + 2x_1^2 + x_4^2x_8^2 + 2x_2^2 + x_3^2x_8^2 + x_3^2 + x_4^2+ x_5^2 + x_6^2 + x_7x_8$ is non-weakly regular of type ($-$), where $W^{-}(F)=\{0\}$ and $B_{-}(F)=\mathbb{F}_{3}^{6}\times\{0\}\times\mathbb{F}_{3}$. By \cite[Proposition 2]{Cesmelioglu2013}, we have ${f^{(0)}}^*(x_1,x_2,x_3,x_4,x_5,x_6)=x_1^2+x_2^2 + 2x_3^2 + 2x_4^2+2x_5^2+2x_6^2$,\; ${f^{(1)}}^*(x_1,x_2,x_3,x_4,x_5,x_6)={f^{(2)}}^*(x_1,x_2,x_3,x_4)=2x_1^2+x_2^2 + x_3^2 + x_4^2+2x_5^2+2x_6^2$. Moreover, from Equation \eqref{gmmfdual} we know that $F^{*}$ belongs to GMMF class. Then, by employing the Langrange interpolation to the functions ${f^{(0)}}^*$, ${f^{(1)}}^*$, and ${f^{(2)}}^*$, we get $F^{*}(x)=x_1^2x_7^2 + x_1^2 + 2x_4^2x_7^2 + x_2^2+ 2x_3^2x_7^2 + 2x_3^2 + 2x_4^2+ 2x_5^2 + 2x_6^2 + 2x_7x_8$.

	\begin{itemize}
		\item $m=6,\;s=1$,\; $F(x)=F(-x)$ and  $j_0=0$ ;
		\item $r=m+s+\Dim(W^{+}(F))=6+1+0=7$;
		\item The set $\mathcal {C}_{D_2(F)}$ is a three-weight linear code with parameters $[756,7,486]_3$,  weight enumerator $1+ 476y^{486} + 1458y^{504} + 252y^{540} $.
		
	\end{itemize}
\end{example}

Let $n$ be odd. Then, by Proposition \ref{dualtype} $f^*$ is of type $(+)$.  We next determine the $|D_i(f)|$ for each $i\in \F_3$. Put $f(0)=j_0$. For any $u \in \mathbb{F}_{3}^{n}$ using Equations (\ref{inversee}) and (\ref{gauss}) we have \begin{equation} \label{oddsize-} \begin{array}{ll}

 -i3^{\frac{n}{2}}\epsilon_{3}^{f(u)}&= 3^{\frac{n-1}{2}}\left(\epsilon_{3}^{f(u)+2}-\epsilon_{3}^{f(u)+1}\right)\\
 &=\sum_{\alpha \in B_{+}(f)} \epsilon_{3}^{f^*(\alpha)+\alpha \cdot u}-\sum_{\alpha \in B_{-}(f)} \epsilon_{3}^{f^*(\alpha)+\alpha \cdot u}. \end{array} \end{equation}  If $u=0$, then by Proposition \ref{dual} and Equation (\ref{oddsize-}), we have $0=|D_{j_0}(f)|\epsilon_{3}^{j_0}+\left(|D_{j_0+1}(f)|-3^{\frac{n-1}{2}}\right)\epsilon_{3}^{j_0+1}+\left(|D_{j_0+2}(f)|+3^{\frac{n-1}{2}}\right)\epsilon_{3}^{j_0+2}$.
Then, we have $|D_{j_0}(f)|=|D_{j_0+1}|-3^{\frac{n-1}{2}}=|D_{j_0+2}|+3^{\frac{n-1}{2}}.$ On the other hand, we have $|D_0(f)|+|D_1(f)|+|D_2(f)|=3^r$. Hence, we get \be \label{car--} |D_{j_0+i}(f)|=3^{r-1}+\left( \frac{i}{3} \right)3^{\frac{n-1}{2}} \ee for each $i \in \F_3.$

\par Let us take $S=D_{j_0+1}(f)$. Observe that the linear code $\sC_{D_{j_0+1}(f)}$ of length $3^{r-1}+3^{\frac{n-1}{2}}$ over $\F_3$ defined by $(\ref{defCode})$ is at most $r$-dimensional.

\begin{prp} 
	The linear code $\sC_{D_{j_0+1}(f)}$ of length $3^{r-1}+3^{\frac{n-1}{2}}$ over $\F_3$ defined  by $(\ref{defCode})$ is  $r$-dimensional.
\end{prp}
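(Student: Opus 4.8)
The plan is to adapt verbatim the argument used for Proposition \ref{kernel}, replacing the set $C_{j_0}(f)\subset B_{+}(f)$ there by $D_{j_0+1}(f)\subset B_{-}(f)$ here. The one arithmetic fact that makes the choice $S=D_{j_0+1}(f)$ work is that its cardinality strictly exceeds $3^{r-1}$: by Proposition \ref{f*0} we have $f^{*}(0)=f(0)=j_0$, so the zero vector lies in $D_{j_0}(f)$ and not in $D_{j_0+1}(f)$; consequently $|D_{j_0+1}(f)^{\star}|=|D_{j_0+1}(f)|$, and Equation \eqref{car--} with $\left(\frac{1}{3}\right)=1$ gives $|D_{j_0+1}(f)|=3^{r-1}+3^{\frac{n-1}{2}}>3^{r-1}$. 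This also confirms the stated length.

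First I would introduce the evaluation map $\theta:\F_3^{n}\rightarrow\F_3^{3^{r-1}+3^{\frac{n-1}{2}}}$, $u\mapsto c_u$. Since $D_{j_0+1}(f)\subseteq B_{-}(f)$, every $u\in (B_{-}(f))^{\perp}$ annihilates all coordinates of $c_u$, so the containment $(B_{-}(f))^{\perp}\subseteq\Kernel(\theta)$ is immediate.

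For the reverse inclusion I would argue by contradiction: suppose $u\in\Kernel(\theta)$ with $u\notin (B_{-}(f))^{\perp}$. Then $u\ne 0$, and since $B_{-}(f)$ is an $r$-dimensional subspace, Fact \ref{bal} shows that $x\mapsto u\cdot x$ is balanced on $B_{-}(f)$, so exactly $3^{r-1}$ elements $x\in B_{-}(f)$ satisfy $u\cdot x=0$. But $u\in\Kernel(\theta)$ forces $u\cdot x=0$ for every $x\in D_{j_0+1}(f)$, and $D_{j_0+1}(f)\subset B_{-}(f)$ has $3^{r-1}+3^{\frac{n-1}{2}}>3^{r-1}$ elements, a contradiction. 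Hence $\Kernel(\theta)=(B_{-}(f))^{\perp}$.

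Finally, since $|B_{-}(f)|=3^{r}$ forces $|(B_{-}(f))^{\perp}|=3^{n-r}$, the isomorphism $\F_3^{n}/\Kernel(\theta)\cong\Image(\theta)$ yields $|\Image(\theta)|=3^{r}$, that is, $\Dim(\sC_{D_{j_0+1}(f)})=r$. There is no real obstacle here beyond the bookkeeping; the only point to watch is precisely the strict inequality $|D_{j_0+1}(f)^{\star}|>3^{r-1}$, which is the entire reason $D_{j_0+1}$ (and not $D_{j_0}$ or $D_{j_0+2}$) is the correct defining set, in parallel with the selection dictated by Equation \eqref{car++} in the odd case on $B_{+}(f)$.
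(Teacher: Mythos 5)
Your proof is correct and follows exactly the route the paper intends: the paper omits this proof, stating only that it is ``similar to that of Proposition \ref{kernel}'', and your argument is precisely that adaptation --- the same kernel computation $\Kernel(\theta)=\big{(}B_{-}(f)\big{)}^{\perp}$ via Fact \ref{bal}, driven by the strict inequality $|D_{j_0+1}(f)|=3^{r-1}+3^{\frac{n-1}{2}}>3^{r-1}$ from Equation \eqref{car--}, followed by the quotient isomorphism. Your added observation that $f^*(0)=j_0$ places $0$ in $D_{j_0}(f)$, so that $D_{j_0+1}(f)^{\star}=D_{j_0+1}(f)$ and the stated length needs no ``$-1$'', is a bookkeeping detail the paper leaves implicit and is consistent with Remark \ref{rem6}.
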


\begin{pf}
	The proof is similar to that of  Proposition \ref{kernel} and is ommited here.
\end{pf}

	\begin{rem}\label{rem6}
	Let $f:\mathbb{F}_{3}^{n}\rightarrow\mathbb{F}_{3}$  be a non-weakly regular dual-bent function with $f(0)=j_0$, $n$ be odd, and $B_{-}(f)$ be a vector space of dimension $r$ over $\F_3$. For some $i\in F_3$, let us consider the linear code $\sC_{D_{i}(f)}$ of length over $\F_3$ defined by $(\ref{defCode})$. By similar arguments as in that of Remark \ref{rem3}, to guarantee that $\Dim(\sC_{D_{i}(f)})=r$, we need to have $|D_{i}(f)|>3^{r-1}$, and from Equation \eqref{car--} it is possible if and only if $i = j_0+1$.
\end{rem}

\begin{con}\label{odd-}

	Let $n$ be an odd integer and $f:\mathbb{F}_{3}^{n}\rightarrow\mathbb{F}_{3}$ be a non-weakly regular dual-bent function such that $f(0)=0$, and $f(x)=f(-x)$. Let  $B_{-}(f)$  be an $r$-dimensional non-degenerate $\mathbb {F}_{3}$-vector space with $r \ge \frac{n+1}{2}.$ 
\end{con}
Let $f$ satisfies the Condition \ref{odd-}. We now evaluate the non-zero weights of codewords in $\sC_{D_{j_0+1}(f)}$. For $c_{u}\in \sC_{D_{j_0+1}(f)}$ we have the following.

	\begin{itemize}
		\item $u \in B_{-}(f)^{\perp}$\\
		By Equation \eqref{preim}, we have $ D_{j_0+1}(f) \subset B_{-}(f)$ which implies $wt(c_{u})=0.$
		\item   $u \notin {B_{-}(f)}^{\perp} $ and $u \in B_{+}(f^*).$\\
From Proposition \ref{dual} and Equation (\ref{oddsize-}) we have $3^{\frac{n-1}{2}}\left(\epsilon_{3}^{f(u)+1}-\epsilon_{3}^{f(u)+2}\right)=\chi_u(D_0(f))+\chi_u(D_1(f))\epsilon_{3} +\chi_u(D_2(f))\epsilon_{3}^2$. Let $f(u)=j.$ Then, we have $\chi_u(D_j(f))=\chi_u(D_{j+1}(f))-3^{\frac{n-1}{2}}=\chi_u(D_{j+2}(f))+3^{\frac{n-1}{2}},$ where the summations $j+1$ and $j+2$ are modulo $3$. By Fact \ref{bal}, we also have $\chi_u(D_0(f))+\chi_u(D_1(f))+\chi_u(D_2(f))=0.$ Hence, we have $\chi_u(D_j(f))=0$,\; $\chi_u(D_{j+1}(f))=3^{\frac{n-1}{2}}$,\; and $\chi_u(D_{j+2}(f))=-3^{\frac{n-1}{2}}$. By Equation (\ref{wts}), we have $$wt(c_{u}) \in \{23^{r-2},2\left(3^{r-2}+3^{\frac{n-3}{2}}\right),2\left(3^{r-2}+23^{\frac{n-3}{2}}\right)\}.$$
		\item  $u \notin {B_{-}(f)}^{\perp} $ and $u \in B_{-}(f^*).$\\
From Proposition \ref{dual} we have $0=\chi_u(D_0(f))+\chi_u(D_1(f))\epsilon_{3}+\chi_u(D_2(f))\epsilon_{3}^2$. Then, we have $\chi_u(D_0(f))=\chi_u(D_1(f))=\chi_u(D_2(f)).$ From Fact \ref{bal} we also have $\chi_u(D_0(f))+\chi_u(D_1(f))+\chi_u(D_2(f))=0.$ Hence, we get $\chi_u(D_0(f))=\chi_u(D_1(f))=\chi_u(D_2(f))=0$. By Equation (\ref{wts-}), we have $$wt (c_{u})=2\left(3^{r-2}+3^{\frac{n-3}{2}}\right).$$
	\end{itemize}

Then, the code $\mathcal {C}_{D_{j_0+1}(f)}$ whose codewords $c_{u}$ are defined by Equation (\ref{defCode}) is a three weight ternary linear code with parameters $[3^{r-1}+3^{\frac{n-1}{2}},r,23^{r-2}]_{3}$. The codeword $c_{u}$ has zero-weight if $u \in  \big{(}B_{-}(f)\big{)}^{\perp}$ i.e., $u \in \Kernel(\theta)$. For $u \notin  \big{(}B_{-}(f)\big{)}^{\perp}$ the non-zero weight codewords are as follows.
	
	$$ wt (c_{u})=\left\{ \begin{array}{lll}
	23^{r-2}& \mbox{if $u \in B_{+}(f^*)$ and $f(u)=j_0$};\\
	2\left(3^{r-2}+3^{\frac{n-3}{2}}\right) & \mbox{if $u \in B_{+}(f^*)$ and $f(u)=j_0+1$ or  $u \in B_{-}(f^*)$ };\\
	2\left(3^{r-2}+23^{\frac{n-3}{2}}\right) & \mbox{if $u \in B_{+}(f^*)$ and $f(u)=j_0+2$ }.
	\end{array}
	\right. $$

For $u \notin  \big{(}B_{-}(f)\big{)}^{\perp}$, let us now evaluate the weight distribution of  $\sC_{D_{j_0+1}(f)}$.
\begin{itemize}
\item $u \in B_{+}(f^*)$ and $f(u)=j_0$ \\
 By Proposition \ref{dualsize},  $ |B_{+}(f^*)|=3^r$. 
Since $f$ is of type\;$(-)$, from Equation \eqref{gauss} and Propositions \ref{f*0}, \ref{dual}, we have   $$-i3^{\frac{n}{2}}\epsilon_{3}^{j_0}=-3^{\frac{n-1}{2}}\sum_{j\in\mathbb{F}_3^{\star}}\left( \frac{j}{3} \right)\epsilon_3^{j_0+j}=\sum_{\alpha \in B_{+}(f^*)} \epsilon_{3}^{f(x)}.$$ By orthogonality relations of character sums we have $|C_{j_0}(f^*)|=3^{r-1}$ and $ |C_{j+j_0}(f^*)|=3^{r-1}-\left( \frac{j}{3} \right)3^{\frac{n-1}{2}})$ for all $j\neq 0 \in F_3$. From Lemma \ref{sonn},  the restriction of $f$ into the subset $ \big{(}B_{-}(f)\big{)}^{\perp}$ is the constant $j_0$. Dividing $3^{r-1}-3^{n-r}$ by $3^{n-r}$ we get  $E_{w_1}=3^{2r-n-1}-1$, where $w_1=23^{r-2}$.
\item $u \in B_{+}(f^*)$ and $f(u)=j_0+1$ or  $u \in B_{-}(f^*)$\\
From the arguments above we have $E_{w_2}=3^{r}-23^{2r-n-1}-3^{r-\frac{n+1}{2}}$, where $w_2=23^{r-2}+23^{\frac{n-3}{2}}$.
\item $u \in B_{+}(f^*)$ and $f(u)=j_0+2$ \\
From the arguments above we have $E_{w_3}=3^{2r-n-1}+3^{r-\frac{n+1}{2}}$, where $w_3=23^{r-2}+43^{\frac{n-3}{2}}$.
\end{itemize}
Hence, the weight distribution of $\mathcal {C}_{D_{j_0+1}(f)}$ is as in Table \ref{tabloo4}.

	\begin{table}[!ht] 
		\begin{center}
			\begin{tabular}{|c|c| }
				\hline
				Hamming weight $a$  &  Multiplicity $E_a $  \\ 
				\hline  \hline  
				0  & 1 \\
				\hline
				$23^{r-2}$  &  $3^{2r-n-1}-1$ \\
				\hline
				$	2\left(3^{r-2}+3^{\frac{n-3}{2}}\right)$ &$ 3^{r}-23^{2r-n-1}-3^{r-\frac{n+1}{2}} $\\
				\hline
				$2\left(3^{r-2}+23^{\frac{n-3}{2}}\right)$  &$3^{2r-n-1}+3^{r-\frac{n+1}{2}}$\\
				\hline
			\end{tabular}
		\end{center}	 \caption{ \label{tabloo4}   The weight distribution of  $\mathcal {C}_{D_{j_0+1}(f)}$ when $n$ is odd.}	
	\end{table}

\begin{rem}
We note that for $f$ satisfying the Condition \ref{odd-}, the linear code $\mathcal {C}_{D_{j_0+1}(f)}$ and the one constructed by Theorem \ref{odd} are equivalent. One can see this by the following arguments: Let $n$ be odd and $f:\mathbb{F}_{3}^{n}\rightarrow\mathbb{F}_{3}$ be a non-weakly regular dual-bent function with satisfying the conditions of Theorem \ref{odd} . Let $g(x)=-f(x)$. Then, $\hat{g}(-\alpha)=\sum_{x \in \mathbb{F}_{3}^n} \epsilon_{3}^{-f(x)+\alpha \cdot x}=\sigma_{-1}(\hat{f}(\alpha)).$ For $\alpha \in B_{+}(f)$ we have $\hat{f}(\alpha)= i3^{\frac{n}{2}} \epsilon_{3}^{\alpha}.$  Since $\sigma_{-1}$ is the conjugation automorphism, then we have $\sigma_{-1}(\hat{f}(\alpha))=- i3^{\frac{n}{2}} \epsilon_{3}^{-f^*(\alpha)}$ which implies $-\alpha \in B_{-}(g).$ Moreover, $f(0)=j_0$ implies $g(0)=-j_0.$ Since $g^*(x)=-f^*(-x)$ we have $C_{j_0+2}(f)=D_{-j_0+1}(g)$. Therefore, $g$ satisfies the Condition \ref{odd-}. If $f$ satisfies the Condition \ref{odd-} then by similar arguments  $g$ satisfies the conditions of Theorem \ref{odd}. Hence, there is a one-to-one correspondence between the set of functions satisfying the conditions of Theorem \ref{odd} and Condition \ref{odd-}. 
\end{rem}

\par Let us evaluate the parameters of $\mathcal {C}_{D_{j_0+1}(F)}$ by MAGMA program for the following non-weakly regular bent functions which satisfy the Condition \ref{odd-}.
\begin{example} \label{ex5}
	Let $F:\mathbb{F}_{3}^{5}\simeq\mathbb{F}_{3}^{3}\times\mathbb{F}_{3}\times\mathbb{F}_{3}\rightarrow\mathbb{F}_{3}$ be a non-weakly regular bent function defined by Equation (\ref{gmmf}), where $F(x_1,x_2,x_3,x_4,x_5)=f^{(x_5)}(x_1,x_2,x_3)+x_4x_5 $. Let $f^{(x_5)}:\mathbb{F}_{3}^{3}\rightarrow\mathbb{F}_{3}$ be weakly regular bent of type ($-$) for $x_5=0$ and of type ($+$) for $x_5=1,$ and $x_5=2,$ where $f^{(0)}(x_1,x_2,x_3)=x_1^2+x_2^2 + x_3^2$, $f^{(1)}(x_1,x_2,x_3)=f^{(2)}(x_1,x_2,x_3)=x_1^2+2x_2^2 +x_3^2$. Then, by employing the Langrange interpolation, we have $F(x_1,x_2,x_3,x_4,x_5)=(-1)\big{(}f^{(0)}(x_1,x_2,x_3)(x_5-1)(x_5-2)+(f^{(1)}(x_1,x_2,x_3)+x_4)(x_5)(x_5-2)
	+(f^{(2)}(x_1,x_2,x_3)+2x_4)(x_5)(x_5-1) \big{)}=x_2^2x_5^2+x_1^2+x_2^2+x_3^2+x_4x_5$ is non-weakly regular of type ($-$), where $W^{-}(F)=\{0\}$ and $B_{-}(F)=\mathbb{F}_{3}^{3}\times\{0\}\times\mathbb{F}_{3}$. By \cite[Proposition 2]{Cesmelioglu2013}, we have ${f^{(0)}}^*(x_1,x_2,x_3)=2x_1^2+2x_2^2 + 2x_3^2$,\; ${f^{(1)}}^*(x_1,x_2,x_3,x_4,x_5)={f^{(2)}}^*(x_1,x_2,x_3,x_4,x_5)=2x_1^2+x_2^2 + 2x_3^2$. Moreover, from Equation \eqref{gmmfdual} we know that $F^{*}$ belongs to GMMF class. Then, by employing the Langrange interpolation to the functions ${f^{(0)}}^*$, ${f^{(1)}}^*$, and ${f^{(2)}}^*$, we get $F^{*}(x)=2x_2^2x_4^2+2x_1^2+2x_2^2+2x_3^2+2x_4x_5$.
	
	\begin{itemize}

			\item $m=3,\;s=1$,\; $F(x)=F(-x)$ and  $j_0=0$ ;
		\item $r=m+s+\Dim(W^{-}(F))=3+1+0=4$;
		\item The set $\mathcal {C}_{D_1(F)}$ is a three-weight ternary linear code with parameters $[36,4,18]_3$,  weight enumerator $1+ 8y^{18} + 60y^{24}+ 12y^{30} $.
		
	\end{itemize}
\end{example}
\begin{example} \label{ex6} 
	
	Let $F:\mathbb{F}_{3}^{7}\simeq\mathbb{F}_{3}^{5}\times\mathbb{F}_{3}\times\mathbb{F}_{3}\rightarrow\mathbb{F}_{3}$ be a non-weakly regular bent function defined by Equation (\ref{gmmf}), where $F(x_1,x_2,x_3,x_4,x_5,x_7)=f^{(x_7)}(x_1,x_2,x_3,x_4,x_5)+x_6x_7 $. Let $f^{(x_7)}:\mathbb{F}_{3}^{5}\rightarrow\mathbb{F}_{3}$ be weakly regular bent of type ($-$) for $x_7=0$ and of type ($+$) for $x_7=1,$ and $x_7=2,$ where $f^{(0)}(x_1,x_2,x_3,x_4)=2x_1^2+x_2^2 + x_3^2 + x_4^2+ x_5^2$, $f^{(1)}(x_1,x_2,x_3,x_4)=f^{(2)}(x_1,x_2,x_3,x_4)=x_1^2+x_2^2 + x_3^2 + x_4^2+ x_5^2$. Then, by employing the Langrange interpolation, we have $F(x_1,x_2,x_3,x_4,x_5,x_6,x_7)=(-1)\big{(}f^{(0)}(x_1,x_2,x_3,x_4,x_5)(x_7-1)(x_7-2)+(f^{(1)}(x_1,x_2,x_3,x_4,x_5)+x_6)(x_7)(x_7-2)\\
	+(f^{(2)}(x_1,x_2,x_3,x_4,x_5)+2x_6)(x_7)(x_7-1) \big{)}=2x_1^2x_7^2 + 2x_1^2 + x_2^2 + x_3^2 + x_4^2 + x_7^2+ x_5^2 + x_6x_7+2$ is non-weakly regular of type ($-$), where $W^{-}(F)=\{0\}$ and $B_{-}(F)=\mathbb{F}_{3}^{5}\times\{0\}\times\mathbb{F}_{3}$. By \cite[Proposition 2]{Cesmelioglu2013}, we have ${f^{(0)}}^*(x_1,x_2,x_3,x_4,x_5)=x_1^2+2x_2^2 + 2x_3^2 + 2x_4^2+ 2x_5^2$,\; ${f^{(1)}}^*(x_1,x_2,x_3,x_4,x_5)={f^{(2)}}^*(x_1,x_2,x_3,x_4,x_5)=2x_1^2+2x_2^2 + 2x_3^2 + 2x_4^2 + 2x_5^2$. Moreover, from Equation \eqref{gmmfdual} we know that $F^{*}$ belongs to GMMF class. Then, by employing the Langrange interpolation to the functions ${f^{(0)}}^*$, ${f^{(1)}}^*$, and ${f^{(2)}}^*$, we get $F^{*}(x)=x_1^2x_6^2 + x_1^2 + 2x_2^2 + 2x_3^2 + 2x_4^2 + 2x_6^2+ 2x_5^2 + 2x_6x_7+2$.

	\begin{itemize}
		\item $m=5,\;s=1$,\; $F(x)=F(-x)$ and  $j_0=2$ ;
		\item $r=m+s+\Dim(W^{-}(F))=5+1+0=6$;
		\item The set $\mathcal {C}_{D_0(F)}$ is a three-weight linear code with parameters $[270,6,162]_3$,  weight enumerator $1+ 80y^{162} + 558y^{180} + 90y^{198}$.
		
	\end{itemize}
\end{example}
\begin{rem}
	Since $\lfloor \frac{n}{2}\rfloor+1 \le r \le n$ is a free parameter, for fixed $n$, the size of the defining sets of the constructed codes in this paper can vary, so the other parameters, with respect to the different functions whereas the linear codes from (weakly) regular bent functions based on the second generic construction method in \cite{ding2015class, zhou2016linear, tang} have unique parameters for fixed $n$. Hence, the linear codes constructed in Theorems \ref{even}, \ref{odd}, and \ref{even-} are also distunguished from ones in the literature in this sense.
\end{rem}
In addition to examples above, let us give two sporadic examples of non-weakly regular bent functions. We evaluate the parameters of the corresponding linear codes by MAGMA.  For $g_1$ and $g_2$ we refer to \cite{hel2} and \cite{hel3} respectively.

\begin{example} $g_1:\mathbb{F}_{3^6}\rightarrow\mathbb{F}_{3}$, $g_1(x)=Tr_{6}(\lambda x^{20}+\lambda^{41}x^{92})$ be a non-weakly regular bent of Type ($-$), where $\lambda$ is a primitive element of $\mathbb{F}_{3^6}$.

\begin{itemize}
                     \item  $g_1^{*}(x)$ is bent;
		\item $n=6$,\; $g_1(x)=g_1(-x)$ and  $j_0=0$ ;
		\item $r=4$,\; $B_{-}(g_1)$ is a $4$-dimensional subspace of  $\mathbb{F}_{3^6}$;
		\item The set $\mathcal {C}_{D_2(g_1)}$ is a three-weight linear code with parameters $[36,4,18]_3$,  weight enumerator $1+ 4y^{18} + 72y^{24} + 4y^{36} $.
		
	\end{itemize}
\end{example}

\begin{example}
$g_2:\mathbb{F}_{3^4}\rightarrow\mathbb{F}_{3}$, $g_2(x)=Tr_{4}(w^{10} x^{22}+x^{4})$ be a non-weakly regular bent function of Type ($+$), where $w$ is a primitive element of  $\mathbb{F}_{3^4}$. 

\begin{itemize}

                     \item  $g_2^{*}(x)$ is not bent;

		\item  $n=4$,\; $g_2(x)=g_2(-x)$, and  $j_0=0$;  

                      \item  $r=3$,\; $B_{+}(g_2)$ is a $3$-dimensional subspace of  $\mathbb{F}_{3^4}$;
		
		\item The set $\mathcal {C}_{C_0(g_2)}$ is a three-weight ternary linear code with parameters $[14,3,6]_3$, weight enumerator $1+ 4y^{6} + 18y^{10} + 4y^{12} $.
		
	\end{itemize}

  \end{example}

\begin{rem}
The existence of infinitely many non-weakly regular dual-bent functions $f \notin$ GMMF which satisfy the conditions of at least one of the Theorems \ref{even}, \ref{odd}, or \ref{even-} is an open problem.
\end{rem}
\section{Conclusion}\label{conc}

This paper studies construction of linear codes from non-weakly regular bent functions over finite fields. It should be stated that we used a generic construction method, but the defining sets that we used are new. The main contributions of this paper are as follows. For a given non-weakly regular ternary dual-bent function $f$ with $B_{\pm}(f)$ is a non-degenerate subspace, we construct three weight linear codes by using the pre-image sets of $f^*$ in $B_{\pm}(f)$ as the defining sets of the corresponding codes. Since, the size of the defining sets of the constructed codes are flexible, one can construct several codes with different parameters for a fixed $n$. We also give the weight distribution of the constructed codes. As a future work, for arbitrary odd prime $p$, it should be interesting to construct few weight $p$-ary linear codes from non-weakly regular bent functions based on the second generic construction method.

\end{document}